\providecommand{\U}[1]{\protect\rule{.1in}{.1in}}
\def\R{{\mathbb R}}
\def\E{{\mathbb E}}
\def\one{{\mathbb I}}
\def\F{{\mathcal F}}
\def\P{{\mathbb P}}
\def\F{{\mathcal F}}
\newcommand{\Remm}[1]{}
\newtheorem{thm}{Theorem}[section]
\newtheorem{lemma}[thm]{Lemma}
\newtheorem{prop}[thm]{Proposition}
\newtheorem{defn}[thm]{Definition}
\newtheorem{example}[thm]{Example}
\newtheorem{rem}[thm]{Remarks}
\definecolor {darkblue} {rgb} {0,0.08,0.45}
\begin{document}

\begin{frontmatter}

\title{Optimal insurance purchase strategies via optimal multiple stopping times}
\author{Rodrigo S.~Targino$^{1}$ \quad Gareth W.~Peters$^{1,2}$ \quad Georgy ~Sofronov$^{3}$ \quad Pavel V.~Shevchenko$^{2}$}
\date{{\footnotesize {Working paper, version from \today }}}
\maketitle

\begin{abstract}

\noindent 
In this paper we study a class of insurance products where the policy holder has the option to insure $k$ of its annual Operational Risk losses in a horizon of $T$ years. This involves a choice of $k$ out of $T$ years in which to apply the insurance policy coverage by making claims against losses in the given year. The insurance product structure presented can accommodate any kind of annual mitigation, but we present three basic generic insurance policy structures that can be combined to create more complex types of coverage. Following the Loss Distributional Approach (LDA) with Poisson distributed annual loss frequencies and Inverse-Gaussian loss severities we are able to characterize in closed form analytical expressions for the multiple optimal decision strategy that minimizes the expected Operational Risk loss over the next $T$ years. For the cases where the combination of insurance policies and LDA model does not lead to closed form expressions for the multiple optimal decision rules, we also develop a principled class of closed form approximations to the optimal decision rule. These approximations are developed based on a class of orthogonal Askey polynomial series basis expansion representations of the annual loss compound process distribution and functions of this annual loss.

\vspace{5mm}
\end{abstract}

\begin{keyword}
Multiple stopping rules, Operational risk, Insurance
\end{keyword}

\begin{center}
{\footnotesize {\ \textit{$^{1}$ Department of Statistical Science,
University College London UCL, London, UK; \\[0pt]
email: gareth.peters@ucl.ac.uk \\[0pt]
(Corresponding Author) \\[0pt]
$^{2}$ CSIRO Mathematics, Informatics and Statistics, Sydney, Australia \\[0pt]
$^{3}$ Macquarie University, Department of Statistics, Sydney, Australia \\[0pt] } } }
\end{center}

\end{frontmatter}
\section{Introduction} 
\label{section:Intro}
Since the New Basel Capital Accord in 2004, Operational Risk (OpRisk) quantification has become increasingly important for financial institutions. However, the same degree of attention has not yet been devoted to insurance mitigation of OpRisk losses nor, consequently, to detailed analysis of potential risk and capital reduction that different risk transfer strategies in OpRisk may allow. 

Historically the transference of credit and market risks through credit derivatives and interest rate swaps, for example, has been an active subject of extensive studies both from practitioners and academics while only a few references about OpRisk transfer of risk and possible approach to such risk transfers can be found in the literature (see \cite{brandts2004operational}, \cite{bazzarello2006modeling} and \cite{peters2011impact}). This slow uptake of insurance policies in OpRisk for capital mitigation can be partially attributed to four general factors: (a) there still remains a rather limited understanding of the impact on capital reduction of currently available OpRisk insurance products, especially in the complex multi-risk, multi-period scenarios; (b) the relative conservative Basel II regulatory cap of  $20\%$ in a given year (for Advanced Measurement Approach models); (c) the limited understanding at present of the products and types of risk transfer mechanisms available for OpRisk processes; and (d) the limited competition for insurance products available primarily for OpRisk, where yearly premiums and minimum Tier I capital requirements required to even enter into the market for such products precludes the majority of banks and financial institutions in many jurisdictions. 

Some of the reasons for these four factors arises when one realises that OpRisk is particularly challenging to undertake general risk transfer strategies for, since its risk processes range from loss processes which are insurable in a traditional sense (see Definition \ref{DefnInsurableLoss}) to infrequent high consequence loss processes which may be only partially insurable and may result from extreme losses typically covered by catastrophe bonds and other types of risk transfer mechanisms. For these reasons, the development of risk transfer products for OpRisk settings by insurers is a relatively new and growing field in both academic research and industry, where new products are developed as greater understanding of catastrophe and high consequence low frequency loss processes are better understood.

To qualify these points, consider factor (d). In terms of special products, a large reinsurance company that offers a number of products in the space of OpRisk loss processes to a global market is Swiss Re. They have teams such as in the US the Excess and Surplus market Casualty group which specialises in ``U.S-domiciled surplus lines wholesale brokers with primary, umbrella and follow-form excess capacity for difficult-to-place risks in the Excess and Surplus market''. This group aims to seek coverage solutions for challenging risks not in the standard/admitted market. The types of coverage limits offered are quoted as being of the range: USD 10 million limits in umbrella and follow form excess;  USD 5 million CGL limits for each occurrence; USD 5 million general aggregate limit; USD 5 million products/completed operations; and USD 5 million personal and advertising injury. There is also groups like the Professional and Management Liability team in Swiss Re that provide bespoke products for ``protection for organisations and their executives, as well as other professionals, against allegations of wrong-doing, mismanagement, negligence, and other related exposures.'' In addition as discussed in \cite{van2002operational} there are some special products that are available for OpRisk insurance coverage offered by Swiss Re and known as the Financial Institutions OpRisk Insurance (FIORI) which covers OpRisk causes such as Liability, Fidelity and unauthorised activity, Technology risk, Asset protection and External fraud.  It is noted in \cite{chernobai2008operational} that the existence of such specialised products is limited in scope and market since the resulting premium one may be required to pay for such an insurance product can typically run into very significant costs, removing the actual gain from obtaining the insurance contract in terms of capital mitigation in the first place. Hence, although the impact of insurance in OpRisk management is yet to be fully understood it is clear that it is a critical tool for the management of exposures and should be studied more carefully.

At this stage it is beneficial to recall the fundamental definition of an insurance policy or contract.
\begin{defn}[Insurance Policy]
At a fundamental level one can consider insurance to be the fair transfer of risk associated with a loss process between two financial entities. The transfer of risk is formalized in a legal insurance contract which is facilitated by the financial entity taking out the insurance mitigation making a payment to the insurer offering the reduction in risk exposure. The contract or insurance policy legally sets out the terms of the coverage with regard to the conditions and circumstances under which the insured will be financially compensated in the event of a loss. As a consequence the insurance contract policy holder assumes a guaranteed and often known proportionally small loss in the form of a premium payment corresponding to the cost of the contract in return for the legal requirement for the insurer to indemnify the policy holder in the event of a loss.
\end{defn}

Under this definition one can then interpret the notion of insurance as a risk management process in which a financial institution may hedge against potential losses from a given risk process or group of risk processes.  In \cite{mehr1980principles} and \cite{berliner1982limits} they discuss at a high level the fundamental characteristics of what it means to be an insurable loss or risk process, which we note in Definition \ref{DefnInsurableLoss}. We observe that this standard Actuarial view on insurability does not always coincide directly  to the economists view.

\begin{defn}[Insurable Losses] \label{DefnInsurableLoss}
In \cite{mehr1980principles} and \cite[chapter 3]{chernobai2008operational} they define an insurable risk as one that should satisfy the following characteristics:
\begin{enumerate}
\item{The risks must satisfy the ``Law of Large Numbers'', i.e. there should be a large number of similar exposures.}
\item{The loss must take place a known recorded time, place and from a reportable cause.}
\item{The loss process must be considered subject to randomness. That is, the events that result in the generation of a claim should be random or at a minimum outside the control of the policy holder.}
\item{The loss amounts generated by a particular risk process must be commensurate with the charged premium, and associated insurer business costs such as claim analysis, contract issuance and processing.}
\item{The estimated premium associated with a loss process must be affordable.This is particularly important in high consequence rare-event settings, see discussions in \cite{peters2011impact} who consider this question in a general setting.}
\item{The probability of a loss should be able to be estimated for a given risk process as well and some statistic characterizing the typical, average, median etc. loss amount.}
\item{Either the risk process has a very limited chance of a catastrophic loss that would bankrupt the insurer and in addition the events that occur to create a loss occur in a non-clustered fashion; or the insurer will cap the total exposure.}
\end{enumerate}
\end{defn}

In \cite{gollier2005some} they argue that there is also a need to consider the economic ramifications for insurable risks. In particular they add to this definition of insurable risks the need to consider the economic market for such risk transfers. In particular they discuss uninsurable and partially insurable losses, where an uninsurable loss occurs when ``..., given the economic environment, no mutually advantageous risk transfer can be exploited by the consumer and the supplier of insurance''. A partially uninsurable loss arises when the two parties to the risk transfer exchange can only partially benefit or exploit the mutually advantageous components of the risk transfer, this has been considered in numerous studies, see \cite{aase1993equilibrum}, \cite{arrow1953role}, \cite{arrow1965aspects}, \cite{borch1962equilibrium} and \cite{raviv1979design}.

As noted in \cite{gollier2005some}, from the economists perspective the basic model for risk transfer involves a competitive insurance market in which the Law of Large Numbers is utilised as part of the evaluation of the social surplus of the transfer of risk. However, unlike the actuarial view presented above, the maximum potential loss and the probabilities associated with this loss are not directly influential when it comes to assessing the size of risk transfers at market equilibrium. In addition the economic model adds factors related to the degree of risk aversion of market participants (agents) and their degree of optimism when assessing the insurability of risks in the economy. Classically these features are all captured by the economic model know as the Arrow-Borch-Raviv model of perfect competition in insurance markets, see a good review in \cite[section 2]{gollier2005some} and \cite{ghossoub2012belief}.

In the work developed here we pose an interesting general question of how may one construct insurance products satisfying the axioms and definitions above whilst allowing a sufficiently general class of policies that may actually be suitable for a wider range of financial institutions and banks than those specialised products currently on offer. More specifically, in this paper we discuss aspects of an insurance product that provides its owner several opportunities to decide which annual OpRisk loss(es) to insure. This product can be thought of as a way to decrease the cost paid by its owner to the insurance company in a similar way to what occurs with swing options in energy markets (see for example, \cite{jaillet2004valuation} and \cite{carmona2008optimal}): \textit{ instead of buying $T$ yearly insurance policies over a period of $T$ years, the buyer can negotiate with the insurance company a contract that covers only $k$ of the $T$ years (to be chosen by the owner).} This type of structured product will result in a reduction in the cost of insurance or partial insurance for OpRisk losses and this aspect is highlighted in \cite[page 188]{allen2009understanding}, where they note that ``even without considering the cost of major catastrophes, insurance coverage is very expensive''. In addition, we argue it may be interesting to explore such structures if the flexibility they provide results in an increased uptake of such products for OpRisk coverage, further reducing insurance premiums and resulting perhaps in greater competition in the market for these products.

The general insurance product presented here can accommodate any form of insurance policy, but we will focus on three basic generic ``building block'' policies (see Definitions \ref{def:ILP} to \ref{def:PAP}) which can be combined to create more complex types of protection. For these three basic policies we present a ``moderate-tailed'' model for annual risks that leads to closed form usage strategies of the insurance product, answering the question: when is it optimal to ask the insurance company to cover the annual losses? 

For the rest of the paper we assume that throughout a year a financial institution incurs a random number of loss events, say $N$, with severities (loss amounts) $X_1,\hdots,X_N$. Additionally, we suppose the company holds an insurance product that lasts for $T$ years and grants the company the right to mitigate $k$ of its $T$ annual losses through utilisation of its insurance claims. To clarify consider a given year $t \leq T$ where the company will incur $N(t)$ losses adding up to $Z(t) = \sum_{n=1}^{N(t)} X_n(t)$, assuming it has not yet utilised all its $k$ insurance mitigations it then has the choice to make an insurance claim or not. If it utilises the insurance claim in this year the resulting annual loss will be denoted by $\widetilde{Z}(t)$. Such a loss process model structure is standard in OpRisk and insurance and is typically referred to as the Loss Distributional Approach (LDA) which we illustrate an example instance of in Figure \ref{fig:LDAmodel}.

\begin{center}
			\psscalebox{0.8 0.8} 
		{
		\begin{pspicture}(0,-3.5217187)(13.303333,3.5217187)
		\definecolor{colour0}{rgb}{0.7,0.7,0.7}
		\psline[linecolor=black, linewidth=0.02](3.7433333,-2.2514942)(3.7433333,-2.4114943)(3.7433333,-2.651494)
		\psline[linecolor=black, linewidth=0.02](6.943333,-2.2514942)(6.943333,-2.651494)(6.943333,-2.4914942)
		\psline[linecolor=black, linewidth=0.02](10.143333,-2.2514942)(10.143333,-2.651494)(10.143333,-2.651494)
		\rput[bl](11.943334,-3.0514941){Time}
		\rput[bl](3.5433333,-3.0514941){t=1}
		\rput[bl](6.7433333,-3.0514941){t=2}
		\rput[bl](9.943334,-3.0514941){t=3}
		\rput[bl](11.943334,-3.4514942){(in years)}
		\rput[bl](0.54333335,3.348506){Loss}
		\psframe[linecolor=black, linewidth=0.02, fillstyle=solid,fillcolor=colour0, dimen=outer](1.1433333,-1.4514941)(0.74333334,-2.4514942)
		\psframe[linecolor=black, linewidth=0.02, fillstyle=solid,fillcolor=colour0, dimen=outer](1.9433334,-0.45149413)(1.5433333,-2.4514942)
		\psframe[linecolor=black, linewidth=0.02, fillstyle=solid,fillcolor=colour0, dimen=outer](2.7433333,-1.4514941)(2.3433332,-2.4514942)
		\psframe[linecolor=black, linewidth=0.02, fillstyle=vlines*,fillcolor=colour0, hatchwidth=0.028222222, hatchangle=-45.0, hatchsep=0.1411111, dimen=outer](3.9433334,-1.4514941)(3.5433333,-2.4514942)
		\psframe[linecolor=black, linewidth=0.02, fillstyle=vlines*,fillcolor=colour0, hatchwidth=0.028222222, hatchangle=-45.0, hatchsep=0.1411111, dimen=outer](3.9433334,0.54850584)(3.5433333,-1.4514941)
		\psframe[linecolor=black, linewidth=0.02, fillstyle=vlines*,fillcolor=colour0, hatchwidth=0.028222222, hatchangle=-45.0, hatchsep=0.1411111, dimen=outer](3.9433334,1.5485059)(3.5433333,0.54850584)
		\psframe[linecolor=black, linewidth=0.02, fillstyle=solid,fillcolor=colour0, dimen=outer](4.943333,-1.4514941)(4.5433335,-2.4514942)
		\psframe[linecolor=black, linewidth=0.02, fillstyle=solid,fillcolor=colour0, dimen=outer](5.943333,1.5485059)(5.5433335,-2.4514942)
		\psframe[linecolor=black, linewidth=0.02, fillstyle=vlines*,fillcolor=colour0, hatchwidth=0.028222222, hatchangle=-45.0, hatchsep=0.1411111, dimen=outer](7.1433334,-1.4514941)(6.7433333,-2.4514942)
		\psframe[linecolor=black, linewidth=0.02, fillstyle=vlines*,fillcolor=colour0, hatchwidth=0.028222222, hatchangle=-45.0, hatchsep=0.1411111, dimen=outer](7.1433334,2.5485058)(6.7433333,-1.4514941)
		\psframe[linecolor=black, linewidth=0.02, fillstyle=solid,fillcolor=colour0, dimen=outer](9.143333,-1.4514941)(8.743334,-2.4514942)
		\psframe[linecolor=black, linewidth=0.02, fillstyle=solid,fillcolor=colour0, dimen=outer](8.143333,-0.5514941)(7.7433333,-2.4514942)
		\psframe[linecolor=black, linewidth=0.02, fillstyle=vlines*,fillcolor=colour0, hatchwidth=0.028222222, hatchangle=-45.0, hatchsep=0.1411111, dimen=outer](10.343333,0.44850585)(9.943334,-0.5514941)
		\psframe[linecolor=black, linewidth=0.02, fillstyle=vlines*,fillcolor=colour0, hatchwidth=0.028222222, hatchangle=-45.0, hatchsep=0.1411111, dimen=outer](10.343333,-0.5514941)(9.943334,-2.4514942)

		\rput(0.34333333,-2.4514942){\psaxes[linecolor=black, linewidth=0.02, tickstyle=full, axesstyle=axes, labels=y, ticks=y, ticksize=0.10583334cm, dx=1.0cm, dy=1.0cm]{->}(0,0)(0,0)(12,6)}

		\rput[bl](0.54333335,-1.3514942){$X_1(1)$}
		\rput[bl](1.3433334,-0.35149413){$X_2(1)$}
		\rput[bl](2.1433334,-1.3514942){$X_3(1)$}
		\rput[bl](3.2433333,1.7485058){$Z(1)$}
		\end{pspicture}
	}

	
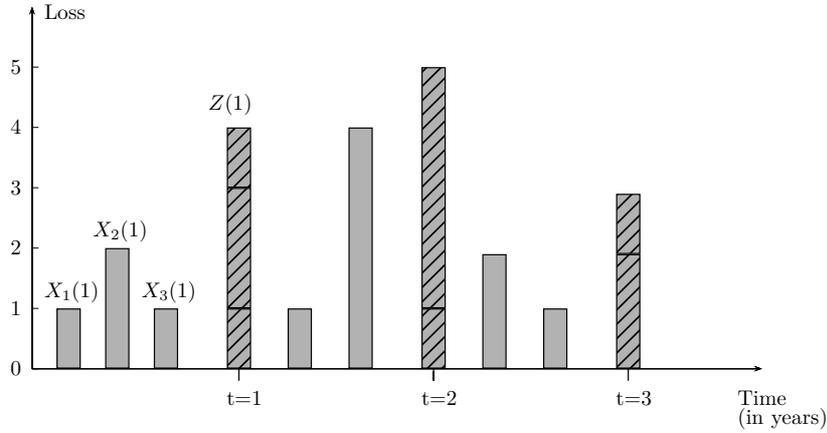
\captionof{figure}{Schematic representation of a LDA model. The aggregated loss in each year is represented hatched.}
	\label{fig:LDAmodel}
\end{center}

In this context the company's aim is to choose $k$ distinct years out of the $T$ in order to minimize its expected operational loss over the time interval $[0,T]$, where it is worth noting that if $Z>\widetilde{Z}$ ie, if the insurance is actually mitigating the company's losses, all its $k$ rights should be exercised. The question that then must be addressed is what is the optimal decision rule, i.e. define the multiple optimal stopping times for making the $k$ sets of insurance claims.

The rest of the paper is organized as follows. In Section \ref{sec:InsurancePolicies} we present the insurance policies we use as mitigation for the insurance product described above. Section \ref{sec:MultOptDecisionRules} presents an overview of useful theoretical results in the field of multiple stopping rules for independent observations in discrete time, in particular Theorem \ref{thm:MultStop} which is the main result in this Section. A summary of properties related to the LDA model used in this paper is presented in Section \ref{section:LDAandProperties} and used in Section \ref{section:AnalyticalOptimalRules} to present the main contribution of this work, namely closed form solutions for the optimal multiple stopping rules for the insurance products considered. In Section \ref{sec:CaseStudies} we check the theoretical optimality of the rules derived in Section \ref{section:AnalyticalOptimalRules}, comparing them with predefined rules.

Since these closed form results rely upon the stochastic loss model considered, we also provide a general framework applicable for any loss process. Therefore, in Section \ref{sec:SeriesExpansion} we discuss a method based on series expansions of unknown densities to calculate the optimal rules when the combination of insurance policy and severity density does not lead to analytical results. The conclusions and some final considerations are shown in Section \ref{sec:conclusion}.

\section{Insurance Policies} 
\label{sec:InsurancePolicies}
As stated before, the insurance policies presented here must be thought as building blocks for more elaborated ones, leading to mitigation of more complex sources of risk. It also worth noticing that the policies presented are just a mathematical model of the actual policies that would be sold in practice and although some characteristics, such as deductibles, can be incorporated in the model they are not presented at this stage.

In the sequel we present these basic insurance policies the company can use in the insurance product. For the sake of notational simplicity, if a process $\big\{Z(t) \big\}_{t=1}^T$ is a sequence of i.i.d. random variables, we will drop the time index and denote a generic r.v. from this process as $Z$. For the rest of the paper $\one_A$ will denote the indicator function on the event $A$, ie, $\one_A= 1$ if $A$ is valid and zero otherwise.

\begin{defn}[Individual Loss Policy (ILP)] \label{def:ILP} This policy applies a constant haircut to the loss process in year $t$ in which individual losses experience a Top Cover Limit (TCL) as specified by
			\begin{equation*}
				\widetilde{Z} = \sum_{n=1}^{N} \max\left(X_{n} - \text{TCL}, \ 0\right).
			\end{equation*}
\end{defn}

\begin{defn}[Accumulated Loss Policy (ALP)] \label{def:ALP} The ALP provides a specified maximum compensation on losses experienced over a year. If this maximum compensation is denoted by $ALP$ then the annual insured process is defined as
	\begin{equation*}
		\widetilde{Z} = \left(\sum_{n=1}^{N} X_n - ALP \right) \one_{\left\{\sum_{n=1}^{N} X_n > ALP\right\}}.
	\end{equation*}
\end{defn}

\begin{defn}[Post Attachment Point Coverage (PAP)] \label{def:PAP} The attachment point is the insured's retention point after which the insurer starts compensating the company for accumulated losses at point $PAP$
			$$\widetilde{Z} = \sum_{n=1}^{N} X_{n} \times \one_{\left\{ \sum_{k=1}^{n} X_k \leq PAP \right\}}. $$
\end{defn}

To characterize the annual application of such policies we provide a schematic representation of each of these policies in Figures \ref{fig:TCL} to \ref{fig:PAP}, assuming the same losses as in Figure \ref{fig:LDAmodel}. The (part of the) loss mitigated by the insurance policy is represented by a white bar and the remaining loss due to the owner of the insurance product is painted grey. As in Figure \ref{fig:LDAmodel}, annual losses are represented by hatched bars.

\begin{center}
			\psscalebox{0.8 0.8} 
		{
		\begin{pspicture}(0,-3.5217187)(13.293333,3.5217187)
		\definecolor{colour0}{rgb}{0.7,0.7,0.7}
		\psline[linecolor=black, linewidth=0.04](3.7333333,-2.2514942)(3.7333333,-2.4114943)(3.7333333,-2.651494)
		\psline[linecolor=black, linewidth=0.04](6.9333334,-2.2514942)(6.9333334,-2.651494)(6.9333334,-2.4914942)
		\psline[linecolor=black, linewidth=0.04](10.133333,-2.2514942)(10.133333,-2.651494)(10.133333,-2.651494)
		
		\rput[bl](11.933333,-3.0514941){Time}
		\rput[bl](3.5333333,-3.0514941){t=1}
		\rput[bl](6.733333,-3.0514941){t=2}
		\rput[bl](9.933333,-3.0514941){t=3}
		\rput[bl](11.933333,-3.4514942){(in years)}
		\rput[bl](0.53333336,3.348506){Loss}
		
		\psframe[linecolor=black, linewidth=0.02, fillstyle=solid,fillcolor=colour0, dimen=outer](1.9333334,-0.45149413)(1.5333333,-2.4514942)
		\psframe[linecolor=black, linewidth=0.02, fillstyle=vlines*, hatchwidth=0.028222222, hatchangle=-45.0, hatchsep=0.1411111, dimen=outer](3.9333334,-1.4514941)(3.5333333,-2.4514942)
		\psframe[linecolor=black, linewidth=0.02, fillstyle=vlines*, hatchwidth=0.028222222, hatchangle=-45.0, hatchsep=0.1411111, dimen=outer](3.9333334,0.54850584)(3.5333333,-1.4514941)
		\psframe[linecolor=black, linewidth=0.02, fillstyle=vlines*, hatchwidth=0.028222222, hatchangle=-45.0, hatchsep=0.1411111, dimen=outer](3.9333334,1.5485059)(3.5333333,0.54850584)
		\psframe[linecolor=black, linewidth=0.02, fillstyle=solid,fillcolor=colour0, dimen=outer](5.9333334,1.5485059)(5.5333333,-2.4514942)
		\psframe[linecolor=black, linewidth=0.02, fillstyle=vlines*, hatchwidth=0.028222222, hatchangle=-45.0, hatchsep=0.1411111, dimen=outer](7.133333,-1.4514941)(6.733333,-2.4514942)
		\psframe[linecolor=black, linewidth=0.02, fillstyle=vlines*, hatchwidth=0.028222222, hatchangle=-45.0, hatchsep=0.1411111, dimen=outer](7.133333,2.5485058)(6.733333,-1.4514941)
		\psframe[linecolor=black, linewidth=0.02, fillstyle=solid,fillcolor=colour0, dimen=outer](8.133333,-0.5514941)(7.733333,-2.4514942)
		\rput(0.33333334,-2.4514942){\psaxes[linecolor=black, linewidth=0.02, tickstyle=full, axesstyle=axes, labels=y, ticks=y, ticksize=0.10583334cm, dx=1.0cm, dy=1.0cm]{->}(0,0)(0,0)(12,6)}
		\rput[bl](1.3333334,-0.35149413){$X_2(1)$}
		\rput[bl](2.1333334,-1.3514942){$X_3(1)$}
		\rput[bl](3.2333333,1.7485058){$Z(1)$}
		\psline[linecolor=black, linewidth=0.02](0.33333334,-0.95149416)(11.833333,-0.95149416)(11.833333,-0.95149416)
		\rput[bl](11.333333,-0.75149417){TCL = 1.5}
		\psframe[linecolor=black, linewidth=0.02, fillstyle=solid, dimen=outer](1.1333333,-1.4514941)(0.73333335,-2.4514942)
		\psframe[linecolor=black, linewidth=0.02, fillstyle=solid, dimen=outer](4.9333334,-1.4514941)(4.5333333,-2.4514942)
		\psframe[linecolor=black, linewidth=0.02, fillstyle=solid, dimen=outer](2.7333333,-1.4514941)(2.3333333,-2.4514942)
		\psframe[linecolor=black, linewidth=0.02, fillstyle=solid, dimen=outer](1.9333334,-0.95149416)(1.5333333,-2.4514942)
		\psframe[linecolor=black, linewidth=0.02, fillstyle=solid, dimen=outer](8.133333,-0.95149416)(7.733333,-2.4514942)
		\psframe[linecolor=black, linewidth=0.02, fillstyle=solid, dimen=outer](5.9333334,-0.95149416)(5.5333333,-2.4514942)
		\psframe[linecolor=black, linewidth=0.02, fillstyle=solid, dimen=outer](9.133333,-1.4514941)(8.733334,-2.4514942)
		\rput[bl](0.53333336,-1.3514942){$X_1(1)$}
		\psframe[linecolor=black, linewidth=0.02, fillstyle=vlines*, hatchwidth=0.028222222, hatchangle=-45.0, hatchsep=0.1411111,fillcolor=colour0, dimen=outer](3.9333334,0.54850584)(3.5333333,0.048505858)
		\psframe[linecolor=black, linewidth=0.02, fillstyle=vlines*, hatchwidth=0.028222222, hatchangle=-45.0, hatchsep=0.1411111,fillcolor=colour0, dimen=outer](7.133333,2.5485058)(6.733333,0.048505858)
		\psframe[linecolor=black, linewidth=0.02, fillstyle=vlines*, hatchwidth=0.028222222, hatchangle=-45.0, hatchsep=0.1411111, dimen=outer](10.333333,0.44850585)(9.933333,-0.5514941)
		\psframe[linecolor=black, linewidth=0.02, fillstyle=vlines*, hatchwidth=0.028222222, hatchangle=-45.0, hatchsep=0.1411111, dimen=outer](10.333333,-0.5514941)(9.933333,-2.4514942)
		\psframe[linecolor=black, linewidth=0.02, fillstyle=vlines*, hatchwidth=0.028222222, hatchangle=-45.0, hatchsep=0.1411111,fillcolor=colour0, dimen=outer](10.333333,-0.5514941)(9.933333,-0.95149416)
		\end{pspicture}
		}
	
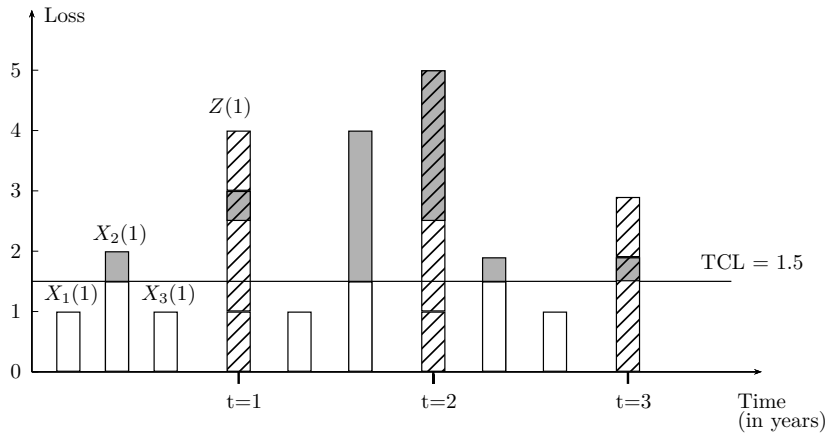
\captionof{figure}{Individual Loss Policy (ILP) with TCL level of 1.5.} 
	\label{fig:TCL}
\end{center}

\begin{center}
	\psscalebox{0.8 0.8} 
{
\begin{pspicture}(0,-3.5217187)(13.293333,3.5217187)
\definecolor{colour0}{rgb}{0.7,0.7,0.7}
\psline[linecolor=black, linewidth=0.04](3.7333333,-2.2514942)(3.7333333,-2.4114943)(3.7333333,-2.651494)
\psline[linecolor=black, linewidth=0.04](6.9333334,-2.2514942)(6.9333334,-2.651494)(6.9333334,-2.4914942)
\psline[linecolor=black, linewidth=0.04](10.133333,-2.2514942)(10.133333,-2.651494)(10.133333,-2.651494)
\rput[bl](11.933333,-3.0514941){Time}
\rput[bl](3.5333333,-3.0514941){t=1}
\rput[bl](6.733333,-3.0514941){t=2}
\rput[bl](9.933333,-3.0514941){t=3}
\rput[bl](11.933333,-3.4514942){(in years)}
\rput[bl](0.53333336,3.348506){Loss}
\psframe[linecolor=black, linewidth=0.02, fillstyle=solid,fillcolor=colour0, dimen=outer](1.9333334,-0.45149413)(1.5333333,-2.4514942)
\psframe[linecolor=black, linewidth=0.02, fillstyle=solid,fillcolor=colour0, dimen=outer](2.7333333,-1.4514941)(2.3333333,-2.4514942)
\psframe[linecolor=black, linewidth=0.02, fillstyle=vlines*, hatchwidth=0.028222222, hatchangle=-45.0, hatchsep=0.1411111,fillcolor=colour0, dimen=outer](3.9333334,1.5485059)(3.5333333,0.54850584)
\psframe[linecolor=black, linewidth=0.02, fillstyle=solid,fillcolor=colour0, dimen=outer](5.9333334,1.5485059)(5.5333333,-2.4514942)
\psframe[linecolor=black, linewidth=0.02, fillstyle=vlines*, hatchwidth=0.028222222, hatchangle=-45.0, hatchsep=0.1411111,fillcolor=colour0, dimen=outer](7.133333,2.5485058)(6.733333,-1.4514941)
\psframe[linecolor=black, linewidth=0.02, fillstyle=solid,fillcolor=colour0, dimen=outer](9.133333,-1.4514941)(8.733334,-2.4514942)
\rput(0.33333334,-2.4514942){\psaxes[linecolor=black, linewidth=0.02, tickstyle=full, axesstyle=axes, labels=y, ticks=y, ticksize=0.10583334cm, dx=1.0cm, dy=1.0cm]{->}(0,0)(0,0)(12,6)}
\rput[bl](1.3333334,-0.35149413){$X_2(1)$}
\rput[bl](2.1333334,-1.3514942){$X_3(1)$}
\rput[bl](3.2333333,1.7485058){$Z(1)$}
\psline[linecolor=black, linewidth=0.02](0.33333334,-0.45149413)(11.833333,-0.45149413)(11.833333,-0.45149413)
\rput[bl](11.233334,-0.25149414){ALP = 2}
\psframe[linecolor=black, linewidth=0.02, fillstyle=solid, dimen=outer](1.1333333,-1.4514941)(0.73333335,-2.4514942)
\psframe[linecolor=black, linewidth=0.02, fillstyle=solid, dimen=outer](4.9333334,-1.4514941)(4.5333333,-2.4514942)
\psframe[linecolor=black, linewidth=0.02, fillstyle=solid, dimen=outer](1.9333334,-1.4514941)(1.5333333,-2.4514942)
\psframe[linecolor=black, linewidth=0.02, fillstyle=solid, dimen=outer](8.133333,-0.45149413)(7.733333,-2.4514942)
\psframe[linecolor=black, linewidth=0.02, fillstyle=solid, dimen=outer](5.9333334,-1.4514941)(5.5333333,-2.4514942)
\rput[bl](0.53333336,-1.3514942){$X_1(1)$}
\psframe[linecolor=black, linewidth=0.02, fillstyle=vlines*, hatchwidth=0.028222222, hatchangle=-45.0, hatchsep=0.1411111, dimen=outer](10.333333,-0.45149413)(9.933333,-2.4514942)
\psframe[linecolor=black, linewidth=0.02, fillstyle=vlines*, hatchwidth=0.028222222, hatchangle=-45.0, hatchsep=0.1411111,fillcolor=colour0, dimen=outer](3.9333334,0.54850584)(3.5333333,-0.45149413)
\psframe[linecolor=black, linewidth=0.02, fillstyle=vlines*, hatchwidth=0.028222222, hatchangle=-45.0, hatchsep=0.1411111, dimen=outer](7.133333,-1.4514941)(6.733333,-2.4514942)
\psframe[linecolor=black, linewidth=0.02, fillstyle=vlines*, hatchwidth=0.028222222, hatchangle=-45.0, hatchsep=0.1411111, dimen=outer](7.133333,-0.45149413)(6.733333,-1.4514941)
\psframe[linecolor=black, linewidth=0.02, fillstyle=vlines*, hatchwidth=0.028222222, hatchangle=-45.0, hatchsep=0.1411111, dimen=outer](3.9333334,-0.45149413)(3.5333333,-1.4514941)
\psframe[linecolor=black, linewidth=0.02, fillstyle=vlines*, hatchwidth=0.028222222, hatchangle=-45.0, hatchsep=0.1411111, dimen=outer](3.9333334,-1.4514941)(3.5333333,-2.4514942)
\psframe[linecolor=black, linewidth=0.02, fillstyle=vlines*, hatchwidth=0.028222222, hatchangle=-45.0, hatchsep=0.1411111,fillcolor=colour0, dimen=outer](10.333333,0.54850584)(9.933333,-0.45149413)
\end{pspicture}
}

	
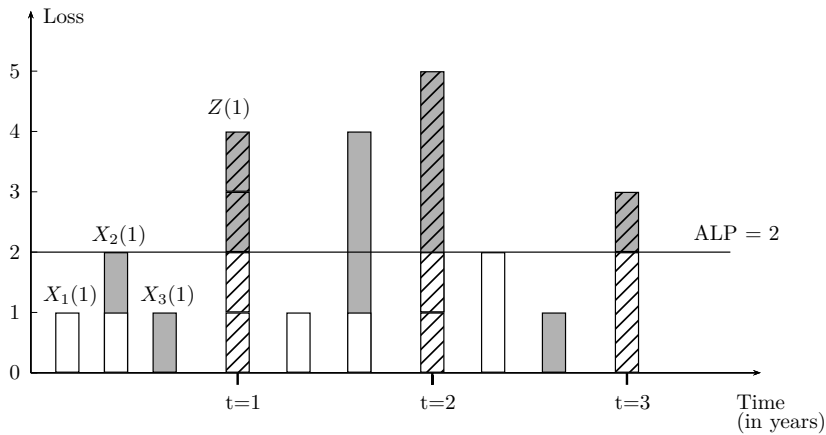
\captionof{figure}{Accumulated Loss Policy (ALP) with ALP level of 2.0.}
	\label{fig:ALP}
\end{center}

\begin{center}
	\psscalebox{0.8 0.8} 
{
\begin{pspicture}(0,-3.5217187)(13.293333,3.5217187)
\definecolor{colour0}{rgb}{0.7,0.7,0.7}
\psline[linecolor=black, linewidth=0.04](3.7333333,-2.2514942)(3.7333333,-2.4114943)(3.7333333,-2.651494)
\psline[linecolor=black, linewidth=0.04](6.9333334,-2.2514942)(6.9333334,-2.651494)(6.9333334,-2.4914942)
\psline[linecolor=black, linewidth=0.04](10.133333,-2.2514942)(10.133333,-2.651494)(10.133333,-2.651494)
\rput[bl](11.933333,-3.0514941){Time}
\rput[bl](3.5333333,-3.0514941){t=1}
\rput[bl](6.733333,-3.0514941){t=2}
\rput[bl](9.933333,-3.0514941){t=3}
\rput[bl](11.933333,-3.4514942){(in years)}
\rput[bl](0.53333336,3.348506){Loss}
\psframe[linecolor=black, linewidth=0.02, fillstyle=solid,fillcolor=colour0, dimen=outer](1.1333333,-1.4514941)(0.73333335,-2.4514942)
\psframe[linecolor=black, linewidth=0.02, fillstyle=solid,fillcolor=colour0, dimen=outer](1.9333334,-0.45149413)(1.5333333,-2.4514942)
\psframe[linecolor=black, linewidth=0.02, fillstyle=vlines*, hatchwidth=0.028222222, hatchangle=-45.0, hatchsep=0.1411111,fillcolor=colour0, dimen=outer](3.9333334,0.54850584)(3.5333333,-1.4514941)
\psframe[linecolor=black, linewidth=0.02, fillstyle=solid,fillcolor=colour0, dimen=outer](4.9333334,-1.4514941)(4.5333333,-2.4514942)
\psframe[linecolor=black, linewidth=0.02, fillstyle=solid,fillcolor=colour0, dimen=outer](5.9333334,1.5485059)(5.5333333,-2.4514942)
\psframe[linecolor=black, linewidth=0.02, fillstyle=vlines*, hatchwidth=0.028222222, hatchangle=-45.0, hatchsep=0.1411111,fillcolor=colour0, dimen=outer](7.133333,-1.4514941)(6.733333,-2.4514942)
\psframe[linecolor=black, linewidth=0.02, fillstyle=solid,fillcolor=colour0, dimen=outer](9.133333,-1.4514941)(8.733334,-2.4514942)
\rput(0.33333334,-2.4514942){\psaxes[linecolor=black, linewidth=0.02, tickstyle=full, axesstyle=axes, labels=y, ticks=y, ticksize=0.10583334cm, dx=1.0cm, dy=1.0cm]{->}(0,0)(0,0)(12,6)}
\rput[bl](1.3333334,-0.35149413){$X_2(1)$}
\rput[bl](2.1333334,-1.3514942){$X_3(1)$}
\rput[bl](3.2333333,1.7485058){$Z(1)$}
\rput[bl](11.133333,0.74850583){PAP = 3}
\psframe[linecolor=black, linewidth=0.02, fillstyle=solid, dimen=outer](2.7333333,-1.4514941)(2.3333333,-2.4514942)
\psframe[linecolor=black, linewidth=0.02, fillstyle=solid, dimen=outer](8.133333,-0.45149413)(7.733333,-2.4514942)
\psframe[linecolor=black, linewidth=0.02, fillstyle=solid, dimen=outer](5.9333334,1.5485059)(5.5333333,-2.4514942)
\rput[bl](0.53333336,-1.3514942){$X_1(1)$}
\psframe[linecolor=black, linewidth=0.02, fillstyle=vlines*, hatchwidth=0.028222222, hatchangle=-45.0, hatchsep=0.1411111,fillcolor=colour0, dimen=outer](3.9333334,-1.4514941)(3.5333333,-2.4514942)
\psframe[linecolor=black, linewidth=0.02, fillstyle=vlines*, hatchwidth=0.028222222, hatchangle=-45.0, hatchsep=0.1411111, dimen=outer](7.133333,2.5485058)(6.733333,-1.4514941)
\psframe[linecolor=black, linewidth=0.02, fillstyle=vlines*, hatchwidth=0.028222222, hatchangle=-45.0, hatchsep=0.1411111, dimen=outer](3.9333334,1.5485059)(3.5333333,0.54850584)
\psframe[linecolor=black, linewidth=0.02, fillstyle=vlines*, hatchwidth=0.028222222, hatchangle=-45.0, hatchsep=0.1411111,fillcolor=colour0, dimen=outer](10.333333,0.54850584)(9.933333,-0.5514941)
\psframe[linecolor=black, linewidth=0.02, fillstyle=solid,fillcolor=colour0, dimen=outer](8.133333,-0.45149413)(7.733333,-2.4514942)
\psframe[linecolor=black, linewidth=0.02, fillstyle=vlines*, hatchwidth=0.028222222, hatchangle=-45.0, hatchsep=0.1411111,fillcolor=colour0, dimen=outer](10.333333,-0.45149413)(9.933333,-2.4514942)
\psline[linecolor=black, linewidth=0.02](0.33333334,0.54850584)(11.833333,0.54850584)(11.833333,0.54850584)

\end{pspicture}
}

	
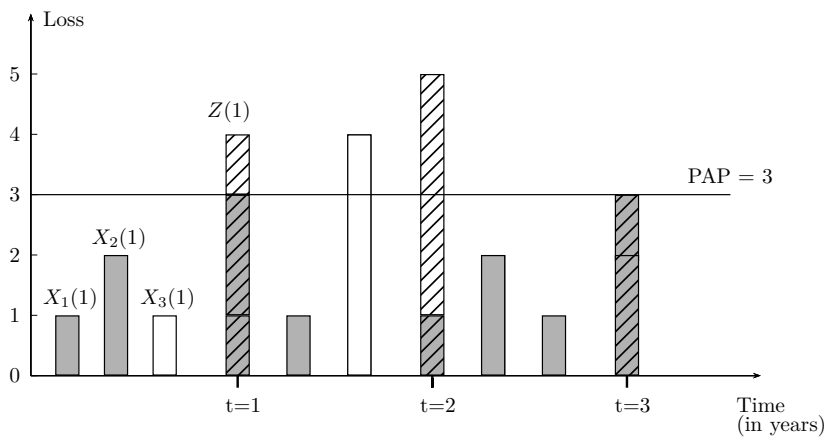
\captionof{figure}{Post Attachment Point Coverage policy (PAP) with PAP level of 3.0}
	\label{fig:PAP}
\end{center}

\section{Multiple Optimal Decision Rules} 
\label{sec:MultOptDecisionRules}
Assume an agent sequentially observe a process $\big\{W(t)\big\}_{t=1}^{T}$, for a fixed $T< +\infty$ and wants to choose $k<T$ of these observations in order to maximize (or minimize, see Remark \ref{minimizationRemark}) the expected sum of these chosen observations. For $k=1$, this problem is known in the literature as the house selling problem (see \cite{sofronov2012optimal} for an updated literature review) since one of its interpretations is as follows. If the agent is willing to sell a house and assume that at most $T$ bids will be observed he wants to choose the optimal time $\tau$ such that the house will be sold for the highest possible value. The extension of this problem for $k>1$ is know as the multiple house selling problem, where the agent wants to sell $k$ identical houses. It is worth noting that in our insurance problem the agent is interested in choosing $k$ periods to exercise the insurance policy in order to minimize loss, in a sense that will be make precise shortly in this paper.

Formally, the mathematical framework of this problem consists of a filtered probability space $\big(\Omega, \F, \{\F_t\}_{t \geq 0}, \P\big)$, where $\F_t=\sigma\big(W(t)\big)$ is the sigma-algebra generated by $W(t)$. Within this framework, where we assume the flow of information is given only by the observed values of $W$, it is clear that any decision at time $t$ should take into account only values of the process $W$ up to time $t$. It is also required that two actions can not take place at the same time, ie, we do not allow two stopping times to occur at the same discrete time instant. These assumptions are precisely stated in the following definition, but for further details on the theory of multiple optimal stopping rules we refer the reader to \cite{nikolaev2007multiple} and \cite{sofronov2012optimal}.

\begin{defn} A collection of integer-valued random variables $(\tau_1,\hdots,\tau_i)$ is called an $i$-multiple stopping rule if the following conditions hold:
\begin{enumerate}[(a)]
	\item $\{\omega \in \Omega \ : \ \tau_1(\omega)=m_1,\hdots,\tau_j(\omega)=m_j\} \in \F_{m_j}, \ \forall m_j>m_{j-1}>\hdots>m_1\geq 1, \ j=1,\hdots,i$;
	\item $1 \leq \tau_1<\tau_2<\hdots<\tau_i<+\infty, \ (\P\text{-a.s.})$.
\end{enumerate}
\end{defn}

Given the mathematical definition of a stopping rule the notion of optimality of these rules can be made precise in the following definitions.

\begin{defn}
For a given multiple stopping rule $\boldsymbol{\tau}=(\tau_1,\hdots,\tau_k)$ the gain function utilized in this paper takes the following additive form:
$$g(\boldsymbol{\tau}) = W(\tau_1)+\hdots+W(\tau_k).$$
\end{defn}

\begin{defn} Let $\mathcal{S}_m$ be the class of multiple stopping rules $\boldsymbol\tau=(\tau_1,\hdots,\tau_k)$ such that $\tau_1\geq m \ (\P\text{-a.s.}).$ The function
$$v_m = \sup_{\boldsymbol\tau \in \mathcal{S}_m} \E[g(\boldsymbol\tau)]$$
is defined as the m-value of the game and, in particular, if $m=1$ then $v_1$ is the value of the game.
\end{defn}

\begin{defn} A multiple stopping rule $\boldsymbol\tau^* \in \mathcal{S}_m$ is called an optimal multiple stopping rule in $\mathcal{S}_m$ if $\E[W(\boldsymbol\tau^*)]$ exists and $\E[W(\boldsymbol\tau^*)]=v_m$.
\end{defn}

The following result (first presented in \cite{nikolaev2007multiple}, Theorem 3) provides the optimal multiple stopping rule that maximizes the expectation of the sum of the observations.

\begin{thm} \label{thm:MultStop}Let $W(1),W(2),\dots,W(T)$ be a sequence of independent random variables with known distribution
functions $F_{1},F_{2},\dots,F_{T}$, and the gain function $g(\boldsymbol{\tau}) =\sum_{j=1}^k W(\tau_j)$. Let $v^{L,l}$ be the value of a game where the agent is allowed to stop $l$ times $(l\leqslant k)$ and there are $L$ $(L \leqslant T)$ steps remaining. If there exist $\E[W(1)],\E[W(2)],\dots,\E[W(T)]$ then the value of the game is given by
\begin{align*}
	v^{1,1} &= \E[W(T)], \\
	v^{L,1} &= \E\bigl[\max\{W(T-L+1), v^{L-1,1} \}\bigr], \ \  1 < L \leq T, \\
	v^{L,l+1} &= \E\bigl[\max\{v^{L-1,l}+W(T-L+1),v^{L-1,l+1}\}\bigr], \ \ l+1 < L \leq T, \\
	v^{l,l} &= \E\left[v^{l-1,l-1} + W(T-l+1)\right].
\end{align*}

If we put
\begin{equation}
\begin{array}{r@{}l} \label{eq:tau}
    \tau^*_1 &{}= \min\{ m_1 \, : \, 1 \leqslant m_1 \leqslant T-k+1, W(m_1) \geqslant v^{T-m_1,k}-v^{T-m_1,k-1}\}; \\
    \tau^*_i &{}= \min\{ m_i \, : \, \tau^*_{i-1} < m_i \leqslant T-k+i, W(m_i) \geqslant v^{T-m_i,k-i+1}-v^{T-m_i,k-i}\}, \ i=2,\dots,k-1; \\
		\tau^*_k &{}= \min\{ m_k \, : \,\tau^*_{k-1} < m_k \leqslant T, W(m_k) \geqslant
v^{T-m_k,1}\};
\end{array}
\end{equation}
then $\boldsymbol\tau^*=(\tau^*_1,\dots,\tau^*_k)$ is the optimal multiple stopping rule. 
\end{thm}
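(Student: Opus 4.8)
The plan is to prove the result by backward induction on the pair $(L,l)$ of remaining steps and remaining allowed stops, invoking the principle of optimality. The conceptual starting point is that, because $W(1),\dots,W(T)$ are independent, the value of the continuation sub-game started at position $T-L+1$ with $l$ stops still to be placed does not depend on the realised past $W(1),\dots,W(T-L)$; hence the conditional value-to-go is a deterministic constant, which we identify with $v^{L,l}$. This is precisely the observation that collapses the stochastic control problem into the deterministic recursion in the statement, and it is the place where the independence hypothesis is essential.

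First I would establish the Bellman equation at a generic decision epoch. Sitting at position $t=T-L+1$ with $l$ stops remaining, the agent either exercises a stop now, collecting $W(t)$ and moving to the sub-game with $L-1$ steps and $l-1$ stops (value $v^{L-1,l-1}$), or defers, moving to the sub-game with $L-1$ steps and $l$ stops (value $v^{L-1,l}$). Optimality of the local decision together with independence of $W(t)$ from the continuation then gives
\begin{equation*}
v^{L,l} = \E\bigl[\max\{\,W(t)+v^{L-1,l-1},\ v^{L-1,l}\,\}\bigr],
\end{equation*}
which is exactly the third line of the recursion after relabelling $l\to l+1$. The remaining three lines are boundary specialisations: $v^{1,1}=\E[W(T)]$ is the forced terminal stop; $v^{L,1}$ drops the $v^{L-1,0}=0$ term since placing the single remaining stop yields only $W(t)$; and $v^{l,l}$ is the forced-stop regime, where there are exactly as many positions left as stops to place, so the continuation value $v^{l-1,l}$ is infeasible (effectively $-\infty$) and the maximum always selects the stop, giving $v^{l,l}=\E[W(T-l+1)+v^{l-1,l-1}]$. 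Verifying that the four stated lines reproduce these cases, with the convention $v^{\cdot,0}\equiv 0$, is a routine bookkeeping step.

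Second I would prove optimality of the threshold rule \eqref{eq:tau}. The key is the pointwise comparison hidden inside the maximum: at position $m$ with $j$ stops left, stopping dominates continuing for the realised value $W(m)$ if and only if $W(m)+v^{T-m,j-1}\geq v^{T-m,j}$, that is if and only if $W(m)\geq v^{T-m,j}-v^{T-m,j-1}$. Writing $j=k-i+1$ for the $i$-th exercise reproduces the thresholds in \eqref{eq:tau} exactly, the final line using $v^{T-m_k,0}=0$. By the backward induction hypothesis the continuation sub-game is solved optimally by the same rule applied to the remaining positions and stops, so taking the first index at which the threshold is crossed and stopping there attains the Bellman maximum at every epoch; composing these decisions across epochs yields $\E[g(\boldsymbol\tau^*)]=v^{T,k}=v_1$, establishing optimality in $\mathcal{S}_1$.

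Finally I would check admissibility and feasibility, which is where the delicate bookkeeping lives. The upper index bounds $m_i\leq T-k+i$ guarantee that enough positions remain to complete all $k$ exercises; at the extreme position $m_i=T-k+i$ the remaining steps equal the remaining required stops, so the agent is in the forced-stop regime $v^{L,L}$ and the threshold inequality holds automatically, ensuring each $\tau^*_i$ is well defined and that $\tau^*_1<\dots<\tau^*_k$ with every $\tau^*_i<+\infty$ almost surely, i.e. $\boldsymbol\tau^*$ is a genuine $k$-multiple stopping rule. The hard part will be exactly this coupling between the optimality argument and the feasibility constraint: one must argue that the greedy first-crossing decisions never destroy the ability to place the remaining stops — which is what the forced-stop boundary $v^{l,l}$ encodes — and that the marginal-value differences $v^{T-m,j}-v^{T-m,j-1}$, rather than some path-dependent object, are the correct quantities to compare against at each step, a fact that once more rests on the independence of the $W(t)$.
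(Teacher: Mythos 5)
The paper does not actually prove this theorem: it is imported verbatim from \cite{nikolaev2007multiple} (their Theorem 3), and the only in-text discussion is the informal justification of the boundary cases $v^{1,1}$ and $v^{l,l}$ immediately following the statement, which your treatment of the forced-stop regime reproduces. Your backward-induction, dynamic-programming sketch is the standard route and is essentially the argument of the cited source, so there is no divergence of method to report. The one place where your sketch is thinner than a complete proof is the verification step: you show that the first-crossing rule attains the value $v^{T,k}$ produced by the recursion, but the identification of $v^{T,k}$ with the supremum $v_1$ over all admissible multiple stopping rules (the direction asserting that no rule can exceed the recursion's value) is justified only by appeal to ``optimality of the local decision'' rather than proved; for independent observations over a finite horizon this is a routine backward induction, but it is the substantive half of the theorem and should be written out if this is to stand as a self-contained proof.
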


In the context we consider it will always be optimal to stop the process exactly $k$ times, but this may not be true, for example, if some reward is given to the product holder for less than $k$ years of claims of insurance. In the absence of such considerations, we proceed with assuming always $k$ years of claims will be made. In Theorem \ref{thm:MultStop} we can see that the value function for $L>l$ is artificial and $v^{0,1}$, for example, has no interpretation. On the other hand, $v^{1,1}$ can not be calculated using the general formula (it would depend on $v^{0,1}$). With one stop remaining and one step left, from the reasons given above, we are obliged to stop, and, therefore, there is no maximization step when calculating $v^{1,1}$, i.e., $v^{1,1}= \E[W(T-1+1)]$.
The same argument is valid for $l>1$ and, in this case,
$$v^{L,l} = \E\left[\max\{v^{L-1, l-1} + W(T-L+1), \ v^{L-1,l} \}\right], \ 1 \leq l \leq T$$
and, if we have $l \leq (T-1)$ steps left and also $l$ stops, we must stop in all the steps remaining. So,
$$v^{l,l} = \E\left[v^{l-1,l-1} + W(T-l+1)\right].$$

%
From Theorem \ref{thm:MultStop} and the assumption of independence of the annual losses, we can see that to be able to calculate the optimal rule we only need to calculate (unconditional) expectations like $\E[W]$ and $\E[\max\{c_1+W, \ c_2 \}]$, for different values of $c_1$ and $c_2$. In addition, since $0 \leq v^{L-1,l} \leq v^{L-1,l+1}$, we actually only need to calculate $\E[\max\{c_1+W, \ c_2 \}]$ for $0\leq c_1 \leq c_2$.

\begin{figure}[h]
\centering
		\includegraphics[scale = 0.2]{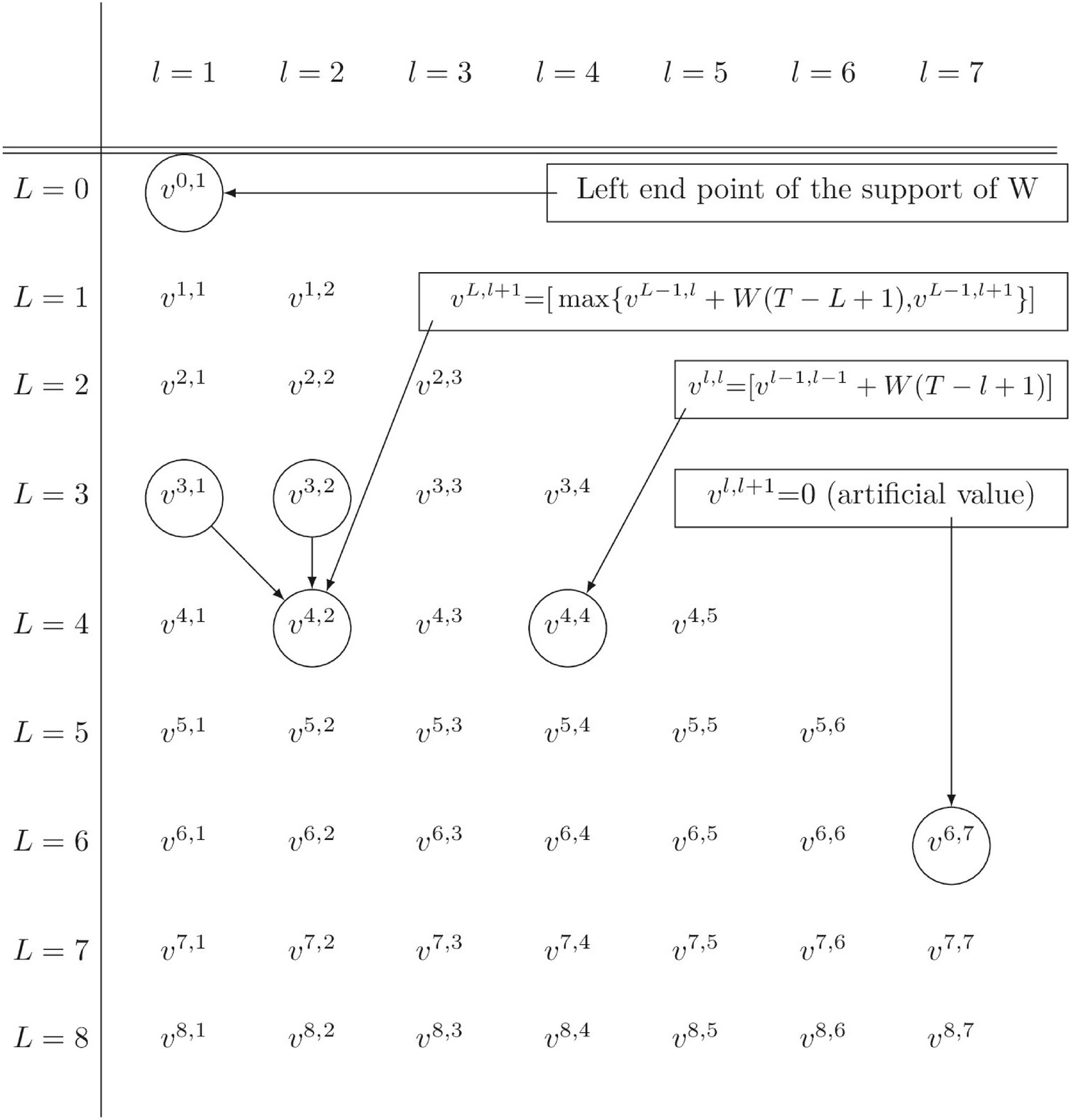}
		\label{fig:valueFunction}
		\caption{Schematic representation of the value function iteration.}
\end{figure}

\subsection{Objective Functions for Rational and Boundedly Rational Insurees}
In this paper we will consider two possible general populations for the potential insuree. The first group are those that are perfectly rational, meaning that they will always act in an optimal fashion when given the chance and, more importantly, are capable (i.e. have the resources) to figure out what is the optimal behaviour. In this case we will consider a global objective function to be optimized.

The second group represent boundedly rational insurees who act sub-optimally. This group represents firms who are incapable or lack the resources/knowledge to understand how to act optimally when determining their optimal behaviours/actions and will be captured by local behaviours.

Hence, these two populations will be encoded in two objective functions: one which is optimal (globally) and one which represents a sub-optimal (local strategy) the boundedly rational population would likely adopt. These behaviours can be made precise through the following exercising strategies, for the first and second groups, respectively.
\begin{enumerate}
	\item \textbf{Global Risk Transfer Strategy:} Minimizes the (expected) total loss over the period $[0,T]$;
	\item \textbf{Local Risk Transfer Strategy:} Minimizes the (expected) sum of the losses at the insurance times (i.e. stopping times).
\end{enumerate}

These two different groups can be understood as, for example, large corporations, with employees dedicated to fully understand the mathematical nuances of this kind of contract and small companies, with limited access to information. The group with ``bounded rationality'' may decide (heuristically, without the usage of any mathematical tool) to follow the so-called Local Risk Transfer Strategy, which will produce smaller gain in the period $[0,T]$. As we will see in Section \ref{sec:CaseStudies} these two different objective functions can lead to completely different exercising strategies, and we believe the insurance company who sells this contract should be aware of these different behaviours.

For the first loss function the formal objective is to minimize
$$\sum_{t=1 \atop t\notin \{\tau_1,\hdots,\tau_k \}}^T \hspace{-0.5cm} Z(t) + \sum_{j=1}^k \widetilde{Z}(\tau_j) = \sum_{t=1}^T Z(t) - \hspace{-0.5cm} \sum_{t=1 \atop t\in \{\tau_1,\hdots,\tau_k \}}^T \hspace{-0.5cm} \Big\{  Z(t) - \widetilde{Z}(t)\Big\}. $$
Since $\sum_{t=1}^T Z(t)$ does not depend on the choice of $\tau_1,\hdots,\tau_k$, this is, in fact, equivalent to maximize
$$\sum_{j=1}^k W(\tau_j) =  \sum_{j=1}^k \Big\{  Z(\tau_j) - \widetilde{Z}(\tau_j)\Big\},$$
where the process $W$ is defined as $W(t) = Z(t) - \widetilde{Z}(t)$.

For the second objective function, the company aims to minimise the total loss not over period $[0,T]$ but instead only at times at which the decisions are taken to apply insurance and therefore claim against losses in the given year,
$$\sum_{j=1}^k \widetilde{Z}(\tau_j)$$
and, in this case, the process $W$ should be viewed as $W(t) = - \widetilde{Z}(t)$.

\begin{rem} Note that if the agent is trying to maximize the first loss function (using $W = Z - \widetilde{Z}$), then $W$ is non-negative stochastic process, and only one kind of expectation is required to be calculated, since if $c_1=c_2=0$, then $\E[\max\{c_1+W, \ c_2 \}] = \E[W]$.  
\end{rem}

\begin{rem}\label{minimizationRemark} If the agent is trying to minimize the expected gain of the sum of $\widetilde{Z}(t)$ random variables (instead of maximizing it) one can rewrite the problem as follows. Define a process $W(t) = -\widetilde{Z}(t)$ and note that $\min \E[ \sum_{j=1}^k\widetilde{Z}(\tau_j)] = \max \E[ \sum_{j=1}^k W(\tau_j)]$. Therefore the optimal stopping times that maximize the expected sum of the process $W$ are the same that minimize the expected sum of the process $\widetilde{Z}$.
\end{rem}

Although this work is mainly devoted to study of a combination of insurance policy and severity distribution that leads to closed form results of the value function integrals required for closed form multiple optimal stopping rules, we also show how one can develop principled approximation procedures can be used to calculate the distribution of the insured process $\widetilde{Z}$ (see Section \ref{sec:SeriesExpansion}). In the remainder of this section we present a very simple example using the second (local) objective function, where we assume the annual insured losses are modelled as Log-Normal random variables.

\begin{example}[Log-Normal] Assume that the insured losses $\widetilde{Z}(1),\hdots,\widetilde{Z}(T)$ form a sequence of i.i.d. random variables such that $\widetilde{Z} \sim$ Log-Normal(0,1). To calculate the multiple optimal rule that minimizes the expected loss let us define $W=-\widetilde{Z}$. The values of the game using the equations in Theorem \ref{thm:MultStop} can be seen in Table \ref{LN_example}.

Note that Table \ref{LN_example} presents the value of expected loss \underline{\emph{at the times we stop}}, ie, $\E\left[\sum_{j=1}^k - \widetilde{Z}(\tau_j)\right],$ so it only makes sense to compare values within the same column. Doing so one can see that for a fixed number of stops $l$, the value of the game is increasing with the number of steps remaining. In other words the more one can wait to decide in which step to stop the smaller is the expected loss.

\begin{table}[h]
\centering
\begin{tabular}{rrrrrrrrrr}
  \hline
	L	& l=1 & l=2 & l=3 & l=4 & l=5 & l=6 & l=7 & l=8 & l=9 \\ 
  \hline
	0 & 0.00 &  &  &  &  &  &  &  &  \\ 
  1& -1.65 & 0.00 &  &  &  &  &  &  &  \\ 
  2 & -1.02 & -3.30 & 0.00 &  &  &  &  &  &  \\ 
  3 & -0.77 & -2.19 & -4.95 & 0.00 &  &  &  &  &  \\ 
  4 & -0.64 & -1.71 & -3.45 & -6.59 & 0.00 &  &  &  &  \\ 
  5 & -0.55 & -1.43 & -2.76 & -4.77 & -8.24 & 0.00 &  &  &  \\ 
  6 & -0.49 & -1.25 & -2.34 & -3.87 & -6.12 & -9.89 & 0.00 &  &  \\ 
  7 & -0.44 & -1.12 & -2.05 & -3.32 & -5.04 & -7.51 & -11.54 & 0.00 &  \\ 
  8 & -0.41 & -1.02 & -1.85 & -2.94 & -4.36 & -6.26 & -8.91 & -13.19 & 0.00 \\ 
  9 & -0.38 & -0.94 & -1.69 & -2.65 & -3.88 & -5.45 & -7.50 & -10.34 & -14.84 \\ 
  10 & -0.36 & -0.88 & -1.56 & -2.43 & -3.52 & -4.87 & -6.58 & -8.78 & -11.78 \\ 
   \hline
\end{tabular}
	\caption{Table of the value function for different $L$ (steps remaining) and $l$ stops in the Log-Normal example.}\label{LN_example}
\end{table}

If we suppose that $T=7$, and we are granted four stops the expected loss is $v^{7,4}=-3.32$. In this case the optimal stopping rule is given by
\begin{align*}
	\tau_1^* &= \min\{ m_1 \, : \, 1 \leqslant m_1 \leqslant 4, W(m_1) \geqslant v^{7-m_1,4}-v^{7-m_1,3}\} \\
	\tau_2^* &= \min\{ m_2 \, : \, \tau_1^* < m_2 \leqslant 5, W(m_2) \geqslant v^{7-m_2,3}-v^{7-m_2,2}\} \\
	\tau_3^* &= \min\{ m_3 \, : \, \tau_2^* < m_3 \leqslant 6, W(m_3) \geqslant v^{7-m_3,2}-v^{7-m_3,1}\} \\
	\tau_4^* &= \min\{ m_4 \, : \, \tau_3^* < m_4 \leqslant 7, W(m_4) \geqslant v^{7-m_4,1}\}.
\end{align*}
For instance, if we observe the sequence $$w_1=-0.57, w_2=-0.79, w_3=-4.75, w_4=-1.07, w_5= -1.14, w_6= -5.56, w_7=-1.59$$ then the optimal stopping times are given by:
\begin{align*}
	\tau_1^* &= 1, \text{ since } w_1=-0.57 \geq v^{7-1,4}-v^{7-1,3} = -3.87 - (-2.34) = -1.53; \\
	\tau_2^* &= 2, \text{ since } w_2=-0.79 \geq v^{7-2,3}-v^{7-2,2} = -2.76 - (-1.43) = -1.33; \\
	\tau_3^* &= 4, \text{ since } w_4=-1.07 \geq v^{7-4,2}-v^{7-4,1} = -2.19 - (-0.77) = -1.42; \\
	\tau_4^* &= 7, \text{ because we are obliged to stop exactly 4 times}.
\end{align*}
In this case the realized loss \underline{\emph{at the stopping times}} is $-0.57 -0.79 -1.07 -1.59 = -4.02$, wich should be comparable with the expected loss under the optimal rule: $-3.32$.
\end{example}

\section{Loss Process Models via Loss Distributional Approach} 
\label{section:LDAandProperties}
Before discussing the application of the Theorem \ref{thm:MultStop} to the problem of choosing the multiple exercising dates of the insurance product present in Section \ref{section:Intro}, in this Section we present the LDA model that leads to closed form solutions in Section \ref{section:AnalyticalOptimalRules}.

The Loss Distributional Approach (LDA) in OpRisk assumes that during a year $t$ a company suffers $N(t)$ operational losses, with $N(t)$ following some counting distribution (usually Poisson or Negative Binomial). The severity of each of these losses is denoted by $X_1(t),\hdots,X_{N(t)}(t)$ and the cumulative loss by the end of year $t$ is given by $Z(t) = \sum_{n=1}^{N(t)} X_n(t)$. For the purpose of modelling OpRisk losses it is essential that the severity density allows extreme events to occur, since these events often occur in practice, as shown, for example, in \cite[Section 1.1]{peters2013understanding}. Following the nomenclature in \cite[Table 3.3]{franzetti2011operational}, the Inverse Gaussian distribution possess a ``moderate tail'' which makes it a reasonable model for OpRisk losses for many risk process types and is often used in practice. This family of distributions also has the advantage of being closed under convolution and this characteristic is essential if closed form solutions for the multiple optimal stopping problem are to be obtained.

In the closed form solutions we present for the different insurance policies we use properties of the Inverse Gaussian distribution and its relationship with the Generalized Inverse Gaussian distribution. The following Lemmas will be used throughout; see additional details in \cite{folks1978inverse} and \cite{jorgensen1982gig}.

In the following, let $X_1,\hdots,X_n$ be a sequence of i.i.d. Inverse Gaussian (IG) random variable with parameters $\mu, \lambda > 0$, ie,
$$f_{X}(x; \ \mu,\lambda) = \left( \frac{\lambda}{2\pi}\right)^{1/2} x^{-3/2} \exp\left\{ \frac{-\lambda(x - \mu)^2}{2\mu^2x}\right\}, \ \ x>0.$$
Let also G be a Generalized Inverse Gaussian (GIG) r.v. with parameters $\alpha, \beta >0$, $p \in \R$, ie,
$$f_{G}(x; \ \alpha,\beta,p) = \frac{(\alpha/\beta)^{p/2}}{2K_p(\sqrt{\alpha \beta})}x^{p-1}\exp\left\{ -\frac{1}{2} (\alpha x + \beta/x)\right\}, \ \ x>0,$$
where $K_p$ is a modified Bessel function of the third kind (sometimes called modified Bessel function of the second kind), defined as
$$K_p(z) = \frac{1}{2} \int_0^{+\infty} u^{p-1}e^{-z(u+1/u)/2}du.$$

\begin{lemma} \label{property1} The Inverse Gaussian family of random variables is closed under convolution and the distribution of its sum is given by
\begin{equation}
	S_n := \sum_{l=1}^n X_l \sim IG(n\mu, \ n^2 \lambda).
\end{equation}
\end{lemma}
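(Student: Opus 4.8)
The plan is to prove closure under convolution through the moment generating function (MGF), exploiting the fact that the MGF of a sum of independent variables factorizes. First I would establish the MGF of a single $IG(\mu,\lambda)$ variable, namely that for all $s \leq \lambda/(2\mu^2)$,
$$M_X(s) = \E[e^{sX}] = \exp\left\{\frac{\lambda}{\mu}\left(1 - \sqrt{1 - \frac{2\mu^2 s}{\lambda}}\right)\right\}.$$
To obtain this I would insert the factor $e^{sx}$ into the density integral and expand the quadratic in the exponent as
$$sx - \frac{\lambda(x-\mu)^2}{2\mu^2 x} = \left(s - \frac{\lambda}{2\mu^2}\right)x + \frac{\lambda}{\mu} - \frac{\lambda}{2x}.$$
The key observation is that, after the reparametrization defined by $\tilde\mu^{-2} = \mu^{-2} - 2s/\lambda$, this exponent is exactly the $IG(\tilde\mu,\lambda)$ kernel plus the constant $\lambda/\mu - \lambda/\tilde\mu$. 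Since the $IG(\tilde\mu,\lambda)$ density integrates to one, the remaining integral collapses to $\exp\{\lambda/\mu - \lambda/\tilde\mu\}$, and substituting $\lambda/\tilde\mu = (\lambda/\mu)\sqrt{1 - 2\mu^2 s/\lambda}$ delivers the stated MGF.

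With the single-variable MGF in hand, the remaining steps are routine. By independence, $M_{S_n}(s) = \prod_{l=1}^n M_{X_l}(s) = [M_X(s)]^n = \exp\{n(\lambda/\mu)(1 - \sqrt{1 - 2\mu^2 s/\lambda})\}$. I would then verify, by direct substitution of $\mu' = n\mu$ and $\lambda' = n^2\lambda$ into the MGF formula, that the MGF of $Y \sim IG(n\mu,n^2\lambda)$ coincides with this expression: the prefactor simplifies as $\lambda'/\mu' = n\lambda/\mu$ and the radicand as $1 - 2(\mu')^2 s/\lambda' = 1 - 2\mu^2 s/\lambda$, so that the two exponents agree identically on a neighbourhood of the origin. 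Closure of the family, together with the precise parameter identification $S_n \sim IG(n\mu, n^2\lambda)$, then follows from the uniqueness theorem for MGFs.

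The main obstacle is the single-variable MGF computation, that is, getting the completing-the-square reparametrization right so that the residual integral is recognizably a normalized IG density; everything afterwards is algebraic matching of exponents. As an alternative and arguably more transparent route, I would note the first-passage-time representation of the Inverse Gaussian law: if $X_l$ is the hitting time of level $a$ by a Brownian motion with positive drift $\nu$, then $X_l \sim IG(\mu,\lambda)$ with $\mu = a/\nu$ and $\lambda = a^2$, and by the strong Markov property and the spatial homogeneity of Brownian motion the sum $S_n$ is the hitting time of level $na$, which is $IG(na/\nu, (na)^2) = IG(n\mu, n^2\lambda)$. This establishes both the closure and the exact parameter scaling without performing any integration, and can be offered as a sanity check on the analytic computation.
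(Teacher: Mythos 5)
Your proposal is correct. Note, however, that the paper does not actually prove this lemma: it simply refers the reader to Section 2 of Tweedie (1957), so there is no in-paper argument to compare against. Your moment-generating-function computation is the standard self-contained route and all the steps check out: the exponent of the $IG(\mu,\lambda)$ density expands as $-\frac{\lambda x}{2\mu^2}+\frac{\lambda}{\mu}-\frac{\lambda}{2x}$, so absorbing $e^{sx}$ amounts to replacing $\lambda/(2\mu^2)$ by $\lambda/(2\tilde\mu^2)=\lambda/(2\mu^2)-s$ while leaving the $x^{-3/2}$ prefactor and the $-\lambda/(2x)$ term untouched; the residual integral is then a normalized $IG(\tilde\mu,\lambda)$ density and the constant $\exp\{\lambda/\mu-\lambda/\tilde\mu\}$ gives exactly the claimed MGF. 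The parameter matching $\lambda'/\mu'=n\lambda/\mu$ and $(\mu')^2/\lambda'=\mu^2/\lambda$ under $(\mu',\lambda')=(n\mu,n^2\lambda)$ is what makes the $n$-th power of the MGF land back in the family, and since the MGF is finite on an open neighbourhood of the origin the uniqueness theorem applies. Your alternative via the first-passage-time representation (sum of $n$ i.i.d.\ hitting times of level $a$ equals the hitting time of level $na$ by the strong Markov property, giving $IG(na/\nu,(na)^2)$) is a clean structural confirmation of the same scaling and arguably more illuminating about \emph{why} the shape parameter scales as $n^2$. Either argument would serve as a legitimate replacement for the paper's bare citation.
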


\begin{lemma} \label{property2} Any Inverse Gaussian random variable can be represented as a Generalized Inverse Gaussian, and for the particular case of Lemma \ref{property1} the relationship is
\begin{equation}
	f_{S_n}(x;\ n\mu, n^2\lambda) \equiv f_G(x; \ \lambda/\mu^2, n^2\lambda, -1/2).
\end{equation}
\end{lemma}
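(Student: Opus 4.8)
The plan is to prove the claimed identity by a direct computation that transforms the Inverse Gaussian density of $S_n$ (available from Lemma \ref{property1}) into the Generalized Inverse Gaussian form, and then to reconcile the two normalising constants using a special value of the Bessel function $K_p$.

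First I would substitute the parameters $\mu' = n\mu$ and $\lambda' = n^2\lambda$ supplied by Lemma \ref{property1} into the given IG density, obtaining
$$f_{S_n}(x;\ n\mu, n^2\lambda) = \left(\frac{n^2\lambda}{2\pi}\right)^{1/2} x^{-3/2} \exp\left\{-\frac{\lambda(x-n\mu)^2}{2\mu^2 x}\right\}.$$
The key algebraic step is to expand $(x-n\mu)^2 = x^2 - 2n\mu x + n^2\mu^2$ and divide through by $2\mu^2 x$, which splits the exponent into a term linear in $x$, a term proportional to $1/x$, and a constant:
$$-\frac{\lambda(x-n\mu)^2}{2\mu^2 x} = -\frac{1}{2}\left(\frac{\lambda}{\mu^2}\, x + \frac{n^2\lambda}{x}\right) + \frac{n\lambda}{\mu}.$$
This already exhibits precisely the $x$-dependence of the GIG density $f_G(x;\ \alpha,\beta,p)$ with $\alpha = \lambda/\mu^2$, $\beta = n^2\lambda$ and $p = -1/2$: the power $x^{p-1} = x^{-3/2}$ matches, and the exponential factor $\exp\{-\tfrac12(\alpha x + \beta/x)\}$ coincides with the $x$-dependent part above. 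The stray constant $\exp\{n\lambda/\mu\}$ that has been peeled off is set aside, to be absorbed into the normalising constant.

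The main obstacle, and the only nontrivial step, is verifying that the leading constants agree. For this I would invoke the closed-form special value $K_{-1/2}(z) = K_{1/2}(z) = \sqrt{\pi/(2z)}\,e^{-z}$. Evaluating $\sqrt{\alpha\beta} = \sqrt{(\lambda/\mu^2)(n^2\lambda)} = n\lambda/\mu$ then gives $K_{-1/2}(\sqrt{\alpha\beta}) = \sqrt{\pi\mu/(2n\lambda)}\,e^{-n\lambda/\mu}$, while $(\alpha/\beta)^{p/2} = \bigl(1/(n^2\mu^2)\bigr)^{-1/4} = \sqrt{n\mu}$. Substituting these into the GIG normalising constant $(\alpha/\beta)^{p/2}/\bigl(2K_p(\sqrt{\alpha\beta})\bigr)$ and simplifying should collapse it to exactly $(n^2\lambda/2\pi)^{1/2}\,e^{n\lambda/\mu}$. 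Multiplying back by the common $x$-dependent factors then reproduces the IG density of $S_n$ term for term, the freshly recovered $e^{n\lambda/\mu}$ cancelling against the constant extracted in the previous step, which establishes the identity. I expect the bookkeeping of the various fractional powers and the cancellation of the $e^{\pm n\lambda/\mu}$ factors to be the part most prone to arithmetic slips, but the argument is otherwise routine once the Bessel special value is in hand.
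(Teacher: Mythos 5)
Your proof is correct and follows essentially the same route as the paper, which disposes of this lemma in one line by ``comparing the kernel of both distributions''; your expansion of the exponent and identification of $\alpha=\lambda/\mu^2$, $\beta=n^2\lambda$, $p=-1/2$ is exactly that comparison made explicit. The additional verification of the normalising constants via $K_{\pm 1/2}(z)=\sqrt{\pi/(2z)}\,e^{-z}$ checks out (both sides reduce to $(n^2\lambda/2\pi)^{1/2}e^{n\lambda/\mu}$ times the common kernel) and is a worthwhile completion of the argument rather than a departure from it.
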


\begin{lemma} \label{property3} Modified Bessel functions of the third kind are symmetric around zero in the parameter $p$. In particular when $p=1/2$,
\begin{equation}
	\frac{K_{1/2}(\frac{n \lambda}{\mu})}{K_{-1/2}(\frac{n \lambda}{\mu})} = 1.
\end{equation}
\end{lemma}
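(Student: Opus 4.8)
The plan is to establish the full symmetry relation $K_p(z) = K_{-p}(z)$ directly from the integral representation supplied in the excerpt, and then read off the $p = 1/2$ statement as an immediate specialization. Since the definition
$$K_p(z) = \frac{1}{2} \int_0^{+\infty} u^{p-1} e^{-z(u+1/u)/2}\, du$$
is the only structural fact available, the natural move is to exploit the invariance of the kernel $u + 1/u$ under the inversion $u \mapsto 1/u$.

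First I would perform the change of variables $v = 1/u$ in the defining integral. This sends $u = 1/v$ and $du = -v^{-2}\, dv$, and swaps the endpoints $0$ and $+\infty$; the two sign changes (from the Jacobian and from reversing the limits of integration) cancel. Because $u + 1/u = v + 1/v$, the exponential factor is left untouched, while the algebraic prefactor transforms as $u^{p-1}\, du = v^{-(p-1)} v^{-2}\, dv = v^{-p-1}\, dv$. Collecting terms gives
$$K_p(z) = \frac{1}{2} \int_0^{+\infty} v^{(-p)-1} e^{-z(v+1/v)/2}\, dv = K_{-p}(z),$$
which is precisely the claimed symmetry in the index $p$.

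Next I would specialize. Setting $p = 1/2$ yields $K_{1/2}(z) = K_{-1/2}(z)$ for every $z > 0$, and since the integrand is strictly positive the common value is finite and nonzero, so forming the ratio is legitimate. Evaluating at $z = n\lambda/\mu$ --- which is exactly $\sqrt{\alpha\beta}$ for the GIG parameters $\alpha = \lambda/\mu^2$, $\beta = n^2\lambda$ arising in Lemma \ref{property2} --- then gives the stated identity
$$\frac{K_{1/2}(n\lambda/\mu)}{K_{-1/2}(n\lambda/\mu)} = 1.$$

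The argument is essentially a one-line substitution, so I do not anticipate a genuine conceptual obstacle; the points requiring care are purely bookkeeping. Specifically, I would make sure that the two sign reversals are correctly accounted for so that the orientation of the integral is restored, and that the exponent of the prefactor is tracked accurately, since it is the transformation $u^{p-1} \to v^{-p-1}$ that carries the entire content of the result. One could alternatively invoke the standard fact that $K_p = K_{-p}$ from the general theory of Bessel functions, but deriving it from the representation already in hand keeps the lemma self-contained.
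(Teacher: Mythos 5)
Your proof is correct, but it takes a different route from the paper's. You work directly from the integral representation $K_p(z) = \tfrac{1}{2}\int_0^{+\infty} u^{p-1}e^{-z(u+1/u)/2}\,du$ that the paper already supplies in Section 4, and obtain $K_p = K_{-p}$ via the inversion $u \mapsto 1/u$, which leaves the kernel $u+1/u$ invariant and sends $u^{p-1}\,du$ to $v^{-p-1}\,dv$; your bookkeeping of the Jacobian and the orientation reversal is accurate. The paper instead invokes a second characterization, $K_p(x) = \int_0^{+\infty}\exp\{-x\cosh(t)\}\cosh(pt)\,dt$ (citing Watson), and deduces the symmetry from the evenness of $\cosh$ in $p$ (the paper's line ``$\cosh(-p)=(-1)\cosh(p)$'' is evidently a typo for $\cosh(-p)=\cosh(p)$, since an odd symmetry would give $K_{-p}=-K_p$ and contradict the lemma). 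Your argument has the advantage of being self-contained --- it needs no representation beyond the one already defined in the text --- while the paper's appeal to the $\cosh$ form makes the evenness in $p$ visible at a glance once the external formula is accepted. Both yield the same specialization $K_{1/2}(n\lambda/\mu)/K_{-1/2}(n\lambda/\mu)=1$, and your remark that the common value is finite and nonzero (so the ratio is well defined) is a small point of care the paper omits.
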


\begin{lemma} \label{property4} The density of an Inverse Gaussian r.v. has the following property (which clearly holds for any power of $x$, with the proper adjustment in the last parameter of the GIG in the RHS):
\begin{equation}
	x f_G(x; \ \lambda/\mu^2, n^2\lambda, -1/2) \equiv n\mu \ f_G(x; \ \lambda/\mu^2, n^2\lambda, 1/2). 
\end{equation}
\end{lemma}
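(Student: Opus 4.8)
The plan is to prove the identity by direct substitution into the explicit GIG density formula, reducing both sides to a common functional form and then matching the remaining scalar prefactors using the Bessel-function symmetry established in Lemma \ref{property3}.

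First I would fix the two parameters appearing on both sides, $\alpha = \lambda/\mu^2$ and $\beta = n^2\lambda$, and record that the argument of the Bessel function is the same throughout, namely $\sqrt{\alpha\beta} = n\lambda/\mu$. Expanding the left-hand side $x\,f_G(x;\alpha,\beta,-1/2)$ with the definition of the GIG density, the extra factor of $x$ raises the power $x^{p-1}=x^{-3/2}$ to $x^{-1/2}$, while the exponential kernel $\exp\{-\frac{1}{2}(\alpha x + \beta/x)\}$ is left untouched. Expanding the right-hand side $n\mu\,f_G(x;\alpha,\beta,1/2)$ in the same way, the power is $x^{p-1}=x^{-1/2}$ and the identical exponential kernel appears. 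Hence both sides share the common factor $x^{-1/2}\exp\{-\frac{1}{2}(\alpha x + \beta/x)\}$, and the claimed identity collapses to an equality of constants.

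The next step is to match these constants. The left-hand constant is $(\alpha/\beta)^{-1/4}/[2K_{-1/2}(\sqrt{\alpha\beta})]$ and the right-hand constant is $n\mu\,(\alpha/\beta)^{1/4}/[2K_{1/2}(\sqrt{\alpha\beta})]$. By Lemma \ref{property3} the Bessel factors coincide, since $K_{1/2}(n\lambda/\mu)=K_{-1/2}(n\lambda/\mu)$, so they cancel, leaving the purely algebraic requirement $(\alpha/\beta)^{-1/2}=n\mu$, equivalently $(\beta/\alpha)^{1/2}=n\mu$. Substituting $\beta/\alpha = n^2\lambda/(\lambda/\mu^2) = n^2\mu^2$ yields $(\beta/\alpha)^{1/2}=n\mu$, which closes the argument.

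There is no genuine obstacle here: the result is an elementary consequence of the closed form of the GIG normalising constant together with the $p \mapsto -p$ symmetry of $K_p$. The only point requiring care is bookkeeping of the exponents on $(\alpha/\beta)$ and on $x$; the power-of-$x$ shift is precisely what converts the $p=-1/2$ density into the $p=+1/2$ density, which is the structural reason the Inverse Gaussian mean factor $n\mu$ emerges. The parenthetical remark in the statement — that the same argument applies to any power of $x$ with a corresponding adjustment of the last GIG parameter — then follows identically, since multiplying by $x^{m}$ shifts $p \mapsto p+m$ and the ratio of normalising constants is again controlled by Lemma \ref{property3} together with the algebraic factor $(\beta/\alpha)^{m/2}$.
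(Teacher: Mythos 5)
Your proof is correct and is essentially the paper's own argument: the authors simply state that Lemma \ref{property4} ``follows from Lemma \ref{property3} and a simple comparison of the densities,'' and your computation is exactly that comparison carried out explicitly (the exponent bookkeeping, the Bessel cancellation via $K_{1/2}(n\lambda/\mu)=K_{-1/2}(n\lambda/\mu)$, and the check $(\beta/\alpha)^{1/2}=n\mu$ are all right). One minor caveat: your closing claim that the general power-$m$ case is ``again controlled by Lemma \ref{property3}'' overstates matters, since $K_{p+m}=K_{p}$ holds by symmetry only when $p+m=-p$; for a general power the constant is $(\beta/\alpha)^{m/2}K_{p+m}(\sqrt{\alpha\beta})/K_{p}(\sqrt{\alpha\beta})$, which does not reduce to a purely algebraic factor by the symmetry alone, although it remains explicitly computable for the half-integer orders arising here.
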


\begin{proof}(of Lemmas \ref{property1}--\ref{property4})
The proof of Lemma \ref{property1} can be found in \cite[Section 2]{tweedie1957statistical} and the result in Lemma \ref{property2} can be seen by comparing the kernel of both distributions.

The symmetry in Lemma \ref{property3} can be seen through the following characterization of modified Bessel functions of the third kind
$$K_p(x) := \int_0^{+\infty} \exp\left\{-x \cosh(t) \right\} \cosh(p t) dt,$$
(see \cite{watson1922bessel}, page 181) and the fact that $\cosh(-p) = (-1)\cosh(p).$
The last result, Lemma \ref{property4}, follows from Lemma \ref{property3} and a simple comparison of the densities.
\end{proof}

\section{Closed-Form Multiple Optimal Stopping Rules for Multiple Insurance Purchase Decisions} \label{section:AnalyticalOptimalRules}
In this Section we present some models in which the optimal rules can be calculated explicitly, with all the technical proofs postponed to the Appendix. Using the results presented in Section \ref{section:LDAandProperties} we show that if we assume a Poisson-Inverse Gaussian LDA model, where $X_n\sim IG(\lambda, \mu)$ and $N\sim Poi(\lambda_N)$, the optimal times (years) to exercise or make claims on the insurance policy for the Accumulated Loss Policy (ALP) and the Post Attachment Point Coverage Policy (PAP) can be calculated analytically regardless of where the global or local gain (objective) functions are considered. For the Individual Loss Policy (ILP), when using the gain function as the local objective function given by the sum of the losses at the stopping times (insurance claim years) we propose to model the losses after the insurance policy is applied and, in this case, we present analytical solutions for the stopping rules. On the other hand, the ILP Total loss case given by the global objective function does not produce a closed form solution. However, we show how a simple Monte Carlo scheme can be used to accurately estimate the results.

Since we assume the annual losses $Z(1),\hdots,Z(T)$ are identically distributed we will denote by $Z$ a r.v. such that $Z \sim Z(1)$. As in the other Sections $\widetilde{Z}$ is the insured process; $S_n =\sum_{k=1}^n X_k$ is the partial sum up to the n-th loss; $p_m= \P[N=m]$ is the probability of observing $m$ losses in one year. The gain $W$ will be defined as either $-\widetilde{Z}$, when the objective is to minimize the loss at the times the company uses the insurance policy (local optimality), or $Z - \widetilde{Z}$, in case the function to be minimized is the total loss over the time horizon $[0,T]$, i.e. (global optimality).

\subsection{Accumulated Loss Policy (ALP)}

For the ALP case (see Definition \ref{def:ALP}) we can model the severity of the losses before applying the insurance policy. Conditional upon the fact that $\sum_{n=1}^m X_n > ALP$, then the annual loss after the application of the insurance policy will be $\sum_{n=1}^m X_n - ALP$. With this in mind, we can then calculate the c.d.f.'s of the insured process, $\widetilde{Z}$ and also of the random variable $Z-\widetilde{Z}$.

\subsubsection{Local Risk Transfer Objective: Minimizing the Loss at the Stopping Times}
\begin{prop}[Local Risk Transfer Case] \label{prop:ALP} The cdf and pdf of the insured process are given, respectively, by
\begin{align}
	F_{\widetilde{Z}}(z) &=\sum_{m=1}^{+\infty} F_{IG}(z+ALP; \ m\mu, m^2\lambda) C_m + C_0, \label{cdfALP} \\
	f_{\widetilde{Z}}(z) &= \sum_{m=1}^{+\infty} \Big\{f_{IG}(z+ALP; \ m\mu, m^2\lambda) C_m \Big\}\one_{\{z>0\}} + C_0\one_{\{z=0\}} \label{pdfALP};
\end{align}
where the constants $C_0, C_1,C_2,\hdots$ are defined as
\begin{align*}
	C_0&:=\sum_{m=1}^{+\infty}F_{IG}(ALP; \ m\mu, m^2\lambda)p_m  + p_0 \\
	C_m&:=\overline{F}_{IG}(ALP; \ m\mu, m^2\lambda) p_m, \ m=1,2,\hdots \\
\end{align*}
\end{prop}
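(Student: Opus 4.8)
The plan is to obtain the law of $\widetilde{Z} = \left(\sum_{n=1}^N X_n - ALP\right)\one_{\{\sum_{n=1}^N X_n > ALP\}} = (S_N - ALP)^+$ by conditioning on the annual loss count $N$ and exploiting the independence of $N$ from the severities $X_1, X_2, \ldots$ By the law of total probability, $\P[\widetilde{Z} \le z] = \sum_{m=0}^{+\infty} \P[\widetilde{Z} \le z \mid N = m]\, p_m$, so everything reduces to the conditional distribution of $(S_m - ALP)^+$ for each fixed $m$, where $S_m = \sum_{n=1}^m X_n$. The crucial structural input is Lemma \ref{property1}, which gives $S_m \sim IG(m\mu, m^2\lambda)$; this closure under convolution is exactly what renders each conditional term tractable and expressible through $F_{IG}(\cdot\,; m\mu, m^2\lambda)$.

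Next I would recognise that $\widetilde{Z}$ has a mixed distribution and treat its two parts separately. The atom at the origin collects all no-payout scenarios: $\{\widetilde{Z} = 0\}$ occurs precisely when $N = 0$, or when $N = m \ge 1$ with $S_m \le ALP$. Summing these mutually exclusive contributions and using $\P[S_m \le ALP] = F_{IG}(ALP; m\mu, m^2\lambda)$ yields the point mass $C_0 = p_0 + \sum_{m \ge 1} F_{IG}(ALP; m\mu, m^2\lambda)\, p_m$. For the absolutely continuous part ($z > 0$), on the payout event $\{N = m,\ S_m > ALP\}$ one has $\widetilde{Z} = S_m - ALP$, so the conditional law of $\widetilde{Z}$ there is that of an $IG(m\mu, m^2\lambda)$ variable shifted left by $ALP$ and restricted to $(0,\infty)$. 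Its conditional contribution to the cdf is $\P[ALP < S_m \le z + ALP] = F_{IG}(z+ALP) - F_{IG}(ALP)$, while the probability weight of this event is $C_m = \overline{F}_{IG}(ALP; m\mu, m^2\lambda)\, p_m$.

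I would then assemble the pieces into a single closed form for $F_{\widetilde{Z}}$ valid on $[0,\infty)$, verify the normalisation $C_0 + \sum_{m\ge1} C_m = 1$ (immediate from $F_{IG} + \overline{F}_{IG} = 1$ and $\sum_m p_m = 1$), and obtain the density by differentiating the continuous part term-by-term while retaining the atom $C_0\one_{\{z=0\}}$. The interchange of the infinite $m$-sum with differentiation is justified by the rapid decay of the frequency weights $p_m$ (dominated convergence), so this is routine rather than delicate.

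The step I expect to require the most care is keeping the mixture weights and component densities mutually consistent through the truncation at $ALP$. Each payout component carries probability $C_m = \overline{F}_{IG}(ALP)\, p_m$, whereas the associated conditional density of $S_m - ALP$ on $\{S_m > ALP\}$ is the \emph{truncated} density $f_{IG}(z+ALP)/\overline{F}_{IG}(ALP)$; the normalising factor $\overline{F}_{IG}(ALP)$ must be tracked carefully so that the pdf integrates, together with the atom, to one. Verifying this bookkeeping --- equivalently, confirming that $\frac{d}{dz}F_{\widetilde{Z}}(z)$ reproduces the stated series for $f_{\widetilde{Z}}$ on $z>0$ --- is where I would concentrate the verification effort.
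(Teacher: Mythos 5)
Your decomposition is the same one the paper uses: condition on $N=m$, split on the payout event $\{S_m > ALP\}$, and invoke Lemma \ref{property1} to identify $S_m \sim IG(m\mu, m^2\lambda)$. There is no gap in your argument. What is worth recording is that the bookkeeping issue you single out as the delicate step --- keeping the truncation factor $\overline{F}_{IG}(ALP;\,m\mu,m^2\lambda)$ consistent between the mixture weight $C_m$ and the component density --- is precisely where the paper's own derivation slips: in its proof the conditional probability $\P\big[\widetilde{Z}\le z \mid N=m,\ S_m > ALP\big]$ is replaced by the unconditional $\P[S_m - ALP \le z] = F_{IG}(z+ALP;\,m\mu,m^2\lambda)$, which silently drops the division by $\overline{F}_{IG}(ALP;\,m\mu,m^2\lambda)$ and leaves a spurious factor of $\overline{F}_{IG}(ALP;\,m\mu,m^2\lambda)$ in each continuous term.

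If you carry your own computation to completion, the payout event contributes $\big(F_{IG}(z+ALP;\,m\mu,m^2\lambda)-F_{IG}(ALP;\,m\mu,m^2\lambda)\big)p_m$ to the cdf given $N=m$, so that $F_{\widetilde{Z}}(z)=\sum_{m\ge 1}F_{IG}(z+ALP;\,m\mu,m^2\lambda)\,p_m + p_0$ and $f_{\widetilde{Z}}(z)=\sum_{m\ge 1}f_{IG}(z+ALP;\,m\mu,m^2\lambda)\,p_m$ for $z>0$; equivalently, the truncated density $f_{IG}(z+ALP)/\overline{F}_{IG}(ALP)$ times the weight $C_m=\overline{F}_{IG}(ALP)p_m$ yields $f_{IG}(z+ALP)\,p_m$, \emph{not} $f_{IG}(z+ALP)\,C_m$. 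Your version passes the consistency checks the printed formula fails: $F_{\widetilde{Z}}(0)$ reduces to the atom $C_0$, and the density plus atom integrates to one, whereas with weights $C_m$ one obtains $\sum_{m\ge1}\overline{F}_{IG}(ALP;\,m\mu,m^2\lambda)^2 p_m + C_0$, which equals $1$ only when $\overline{F}_{IG}(ALP;\,m\mu,m^2\lambda)\in\{0,1\}$. So your proof is sound, but it establishes a corrected form of the proposition (weights $p_m$ in the continuous part) rather than the statement as printed; the discrepancy propagates into the constants appearing in Theorem \ref{thm:ALP}.
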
 

After calculating the distribution of $\widetilde{Z}$ we can calculate expectations of the form $\mathbb{E}\left[\max\left\{c_1 + W, c_2\right\} \right]$ w.r.t. the loss process $Z$ and, therefore one can consequently obtain the multiple optimal stopping rules under the Accumulated Loss Policy via direct application of Theorem \ref{thm:MultStop}.

\begin{thm}[Local Risk Transfer Case] \label{thm:ALP} Using the notation of Theorem \ref{thm:MultStop} and defining $W(t) = -\widetilde{Z}(t)$, for $t=1,\hdots,T$ the multiple optimal stopping rule is given by the set of equations in (\ref{eq:tau}), where
\begin{align*}
	v^{1,1} &= -\sum_{m=1}^{+\infty} C_m \Big(m\mu \overline{F}_{GIG}(ALP; \ \lambda/\mu^2, m^2\lambda, 1/2) - ALP \overline{F}_{GIG}(ALP; \ \lambda/\mu^2, m^2\lambda, -1/2) \Big), \\
	v^{L,1} &= -\sum_{m=1}^{+\infty} C_m \Bigg[\Bigg(m\mu \Big(F_{GIG}(v^{L-1,1}+ALP; \ \lambda/\mu^2, m^2\lambda, 1/2)- F_{GIG}(ALP; \ \lambda/\mu^2, m^2\lambda, 1/2)\Big) \\
				  &+ ALP\Big(F_{GIG}(v^{L-1,1}+ALP; \ \lambda/\mu^2, m^2\lambda, -1/2)- F_{GIG}(ALP; \ \lambda/\mu^2, m^2\lambda, -1/2) \Big) \Bigg) \\
					&+ v^{L-1,1} \overline{F}_{GIG}(v^{L-1,1}+ALP; \ \lambda/\mu^2, m^2\lambda, -1/2)\Bigg],\\
	v^{L,l+1} &= -\sum_{m=1}^{+\infty} C_m \Bigg[\Bigg(m\mu \Big(F_{GIG}(v^{L-1,l+1}-v^{L-1,l}+ALP; \ \lambda/\mu^2, m^2\lambda, 1/2)- F_{GIG}(ALP; \ \lambda/\mu^2, m^2\lambda, 1/2)\Big) \\
						&- (v^{L-1,l}-ALP) \Big(F_{GIG}(v^{L-1,l+1}-v^{L-1,l}+ALP; \ \lambda/\mu^2, m^2\lambda, -1/2)- F_{GIG}(ALP; \ \lambda/\mu^2, m^2\lambda, -1/2) \Big) \Bigg) \\
						&+ v^{L-1,l+1} \overline{F}_{GIG}(v^{L-1,l+1}-v^{L-1,l}+ALP; \ \lambda/\mu^2, m^2\lambda, -1/2)\Bigg] - v^{L-1,l}C_0,\\
	v^{l,l} &= v^{l-1,l-1} -\sum_{m=1}^{+\infty} C_m \Big(m\mu \overline{F}_{GIG}(ALP; \ \lambda/\mu^2, m^2\lambda, 1/2) - ALP \overline{F}_{GIG}(ALP; \ \lambda/\mu^2, m^2\lambda, -1/2) \Big).
\end{align*}
\end{thm}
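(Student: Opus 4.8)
The plan is to reduce every line of the recursion in Theorem~\ref{thm:ALP} to one of only two primitive expectations, exactly as flagged in the remark following Theorem~\ref{thm:MultStop}: the plain mean $\E[W]=-\E[\widetilde{Z}]$ and the truncated expectation $\E[\max\{c_1+W,\,c_2\}]=\E[\max\{c_1-\widetilde{Z},\,c_2\}]$, where $c_1,c_2$ are the value functions computed at the previous stage. Reading off the four cases, $v^{1,1}$ and $v^{l,l}$ need only the plain mean (there is no maximisation at a forced stop, so $v^{l,l}=v^{l-1,l-1}-\E[\widetilde{Z}]$), whereas $v^{L,1}$ uses $c_1=0,\ c_2=v^{L-1,1}$ and $v^{L,l+1}$ uses $c_1=v^{L-1,l},\ c_2=v^{L-1,l+1}$. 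Since $W=-\widetilde{Z}\le 0$, all value functions are non-positive and, by the monotonicity argument used for the maximisation case in the text, $v^{L-1,l}\ge v^{L-1,l+1}$; this guarantees the relevant threshold $c_1-c_2\ge 0$ is non-negative, which I will need to keep the limits of integration valid.

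First I would compute $\E[\widetilde{Z}]$ using the mixed distribution from Proposition~\ref{prop:ALP}: $\widetilde{Z}$ carries an atom $C_0$ at the origin and density $\sum_{m\ge 1} C_m f_{IG}(z+ALP;\,m\mu,m^2\lambda)$ on $z>0$. Writing $\E[\widetilde{Z}]=\sum_{m\ge1}C_m\int_0^{+\infty} z\, f_{IG}(z+ALP;\,m\mu,m^2\lambda)\,dz$ and substituting $u=z+ALP$ turns each integral into $\int_{ALP}^{+\infty}(u-ALP)f_{IG}(u;\,m\mu,m^2\lambda)\,du$. The key manipulation is then to invoke Lemma~\ref{property2} to rewrite $f_{IG}$ as $f_{G}(\cdot;\lambda/\mu^2,m^2\lambda,-1/2)$ and Lemma~\ref{property4} to convert $u\,f_{G}(u;\ldots,-1/2)$ into $m\mu\,f_{G}(u;\ldots,1/2)$; the $u$-integral thus becomes a GIG survival function with parameter $1/2$ carrying the factor $m\mu$, while the $-ALP$ piece becomes a GIG survival function with parameter $-1/2$ carrying the factor $ALP$. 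This produces exactly the bracket $m\mu\,\overline{F}_{GIG}(ALP;1/2)-ALP\,\overline{F}_{GIG}(ALP;-1/2)$, and hence the stated $v^{1,1}$ and $v^{l,l}$.

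For the truncated expectations I would split $\max\{c_1-\widetilde{Z},c_2\}$ at the threshold $\widetilde{Z}=c_1-c_2$: on $\{\widetilde{Z}\le c_1-c_2\}$ the maximum equals $c_1-\widetilde{Z}$, and on the complement it equals the constant $c_2$. This gives $c_1 F_{\widetilde{Z}}(c_1-c_2)-\E[\widetilde{Z}\,\one_{\{\widetilde{Z}\le c_1-c_2\}}]+c_2\overline{F}_{\widetilde{Z}}(c_1-c_2)$. The bounded-region expectation $\E[\widetilde{Z}\,\one_{\{\widetilde{Z}\le c_1-c_2\}}]$ is handled exactly as the mean above, except that the same substitution now runs between the two finite limits $ALP$ and $(c_1-c_2)+ALP$, so each GIG survival function is replaced by a \emph{difference} of GIG cdfs evaluated at these two endpoints; the probability terms $F_{\widetilde{Z}}$ and $\overline{F}_{\widetilde{Z}}$ are read directly from the cdf in Proposition~\ref{prop:ALP}, using $C_0+\sum_m C_m=1$ to pass between them. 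Specialising $c_1=0,\,c_2=v^{L-1,1}$ and $c_1=v^{L-1,l},\,c_2=v^{L-1,l+1}$ then assembles the $v^{L,1}$ and $v^{L,l+1}$ formulas.

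The main obstacle, and the step demanding the most care, is the bookkeeping around the atom and the signs. The point mass $C_0$ at $\widetilde{Z}=0$ always falls inside the region $\{\widetilde{Z}\le c_1-c_2\}$ (because $c_1-c_2\ge0$), where it contributes $c_1 C_0$ to the $c_1 F_{\widetilde{Z}}$ term and nothing to $\E[\widetilde{Z}\,\one]$; tracking it correctly is what produces the isolated $C_0$-proportional term (the displayed $-v^{L-1,l}C_0$) in the $v^{L,l+1}$ line, and omitting it would corrupt the recursion. Equally delicate is keeping the direction of the threshold straight: because $W\le0$, the relevant truncation point is the non-negative quantity $c_1-c_2$ (equivalently $-v^{L-1,1}$ in the single-stop case), so one must verify the orientation of every inequality before converting it into the GIG arguments near $(c_1-c_2)+ALP$, and then simply collect the $1/2$- and $-1/2$-parameter GIG terms with their $m\mu$ and $ALP$ (or $c_2$) coefficients as dictated by Lemmas~\ref{property2}--\ref{property4}.
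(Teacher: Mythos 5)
Your proposal is correct and follows essentially the same route as the paper's proof: reduce everything to $\E[\widetilde{Z}]$ and a truncated expectation over the mixed law of Proposition \ref{prop:ALP}, shift by $ALP$, and convert via Lemmas \ref{property2} and \ref{property4} into GIG distribution functions with parameters $\pm 1/2$, with the atom $C_0$ handled separately. The only (immaterial) difference is that you evaluate $\E[\max\{c_1-\widetilde{Z},c_2\}]$ directly with the negative value functions as constants, whereas the paper factors out a sign and computes $\E[\min\{c_1+\widetilde{Z},c_2\}]$ for $0<c_1<c_2$.
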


\subsubsection{Global Risk Transfer Objective: Minimizing the Loss Over Period $[0,T]$}
If we assume the company wants to minimize its total loss over the period $[0,T]$ the gain achieved through the Accumulated Loss Policy (ALP) is given by
\begin{align*}
W &= Z - \widetilde{Z} \\
  &= \sum_{n=1}^N X_n -  \left(\sum_{n=1}^N X_n - ALP \right) \one_{\left\{\sum_{n=1}^N X_n > ALP\right\}} \\
	&= ALP \one_{\left\{\sum_{n=1}^N X_n > ALP\right\}} +  \left(\sum_{n=1}^N X_n \right) \one_{\left\{\sum_{n=1}^N X_n > ALP\right\}} \\
	&= \min\left\{ALP, \ \sum_{n=1}^N X_n \right\}.
\end{align*}
For notational convenience we will denote by $W_m = \min\left\{ALP, \ \sum_{n=1}^m X_n \right\}$ the annual gain conditional on the fact that $m$ losses were observed.

\begin{prop}[Global Risk Transfer Case: ALP] \label{prop:ALP_new} The cdf and pdf of the gain process are given, respectively, by
\begin{align}
	F_W(w) &=\one_{\left\{w \geq ALP \right\}} + F_{S_m} (w)\one_{\left\{w < ALP \right\}}, \\
	f_W(w) &= \sum_{m=1}^N\Big\{ \Big( \overline{F}_{S_m}(ALP)\one_{\left\{w = ALP \right\}} + f_{S_m} (w)\one_{\left\{0< w < ALP \right\}}   \Big)p_m \Big\}+ p_0\one_{\left\{w =0 \right\}} \label{pdfALP_new}.
\end{align}
\end{prop}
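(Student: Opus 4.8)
The plan is to condition on the annual loss frequency $N$ and exploit the representation $W = \min\{ALP, S_N\}$ established immediately above the statement, where $S_N = \sum_{n=1}^N X_n$. Conditioning on the event $\{N=m\}$ reduces the annual gain to the deterministic functional $W_m = \min\{ALP, S_m\}$ of the partial sum $S_m$, whose law is $IG(m\mu, m^2\lambda)$ by Lemma \ref{property1}. The law of total probability then gives $F_W(w) = \sum_{m=0}^{+\infty} \P[W_m \leq w]\, p_m$, so the whole problem reduces to evaluating $\P[W_m \leq w]$ for each fixed $m$ and then reassembling the Poisson mixture. It is this mixture that the compact symbol $F_{S_m}$ in the statement abbreviates.

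Next I would compute $\P[W_m \leq w]$ by distinguishing two regimes for the argument $w$. For $w \geq ALP$ the bound $\min\{ALP, S_m\} \leq ALP \leq w$ holds identically, so $\P[W_m \leq w]=1$ for every $m$, which after summing against $p_m$ produces the term $\one_{\{w \geq ALP\}}$. For $0 \leq w < ALP$ the key observation is that $\min\{ALP, S_m\} \leq w$ if and only if $S_m \leq w$: indeed, when $S_m > ALP$ the minimum equals $ALP > w$, and otherwise it equals $S_m$. Hence $\P[W_m \leq w] = F_{S_m}(w)$ on this range, and summing against $p_m$ yields the factor multiplying $\one_{\{w < ALP\}}$, establishing the claimed form of $F_W$.

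Finally, for the density I would read off the mixed-type structure of $W$ directly from the cdf. There is an atom at the origin contributed solely by $\{N=0\}$, since $S_m > 0$ almost surely for $m \geq 1$ (the Inverse Gaussian is supported on the positive half-line), giving mass $p_0$ at $w=0$. There is a second atom at the cap $ALP$, collecting every scenario in which the uncapped compound loss exceeds the attachment level; for each $m \geq 1$ this contributes $\P[S_m \geq ALP]\,p_m = \overline{F}_{S_m}(ALP)\,p_m$, where the continuity of $S_m$ makes the choice of strict or weak inequality irrelevant. On the open interval $(0, ALP)$ the cap is inactive and $W$ coincides with $S_N$, so its absolutely continuous part has density $\sum_{m=1}^{+\infty} f_{S_m}(w)\,p_m$. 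Collecting the three pieces gives the stated expression for $f_W$.

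I anticipate that the only genuine subtlety is the bookkeeping of the mixed-type law, in particular correctly isolating the point mass at $ALP$ from the event $\{S_N \geq ALP\}$ and recognising that the upper summation limit $N$ and the index $m$ appearing in the displayed formulas stand for the full Poisson mixture $\sum_{m=1}^{+\infty}(\cdot)\,p_m$; the regime split for the cdf and the identification of the two atoms are otherwise elementary.
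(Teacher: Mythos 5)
Your proposal is correct and follows essentially the same route as the paper's proof: condition on $N=m$, reduce to $W_m=\min\{ALP,\,S_m\}$, and split according to whether the cap binds (the paper partitions on the event $\{S_m>ALP\}$ while you split on the regimes of $w$, which is an equivalent bookkeeping of the same decomposition). Your identification of the atoms at $0$ and at $ALP$ and of the absolutely continuous part on $(0,ALP)$, as well as your reading of the statement's abbreviated notation as the full Poisson mixture $\sum_{m=1}^{+\infty}(\cdot)\,p_m$, matches the paper exactly.
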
 

After calculating the distribution of the gain, $W$, we can calculate expectations w.r.t. it and, therefore, the multiple optimal stopping rule under the Accumulated Loss Policy is then obtained via direct application of Theorem \ref{thm:MultStop}.

\begin{thm}[Global Risk Transfer Case: ALP] \label{thm:ALP_new} Defining $W(t) = Z(t) -\widetilde{Z}(t)$, for $t=1,\hdots,T$ the multiple optimal stopping rule is given by (\ref{eq:tau}), where
\begin{align*}
	v^{1,1} &= \sum_{m=1}^{+\infty}   p_m \Big\{ \overline{F}_{S_m}(ALP)ALP + m\mu F_{GIG}(ALP; \ \lambda/\mu^2, m^2\lambda, 1/2) \Big\}, \\
	v^{L,1} &= \sum_{m=1}^{+\infty}   p_m \Big\{ \overline{F}_{S_m}(ALP)\max\{ALP, \ v^{L-1,1} \} + m\mu \big( F_{GIG}(ALP; \ \lambda/\mu^2, m^2\lambda, 1/2) \\
													&- F_{GIG}(v^{L-1,1}; \ \lambda/\mu^2, m^2\lambda, 1/2) \big) + v^{L-1,1} F_{S_m}(\min\{v^{L-1,1}, \ ALP \}) \Big\} + p_0 v^{L-1,1}, \\
	v^{L,l+1} &= \sum_{m=1}^{+\infty}   p_m \Big\{ \overline{F}_{S_m}(ALP)\max\{v^{L-1,l}+ALP, \ v^{L-1,l+1} \} \\
													 &+ v^{L-1,l} (F_{S_m}(ALP) - F_{S_m}(v^{L-1,l+1}-v^{L-1,l}) ) + m\mu \big( F_{GIG}(ALP; \ \lambda/\mu^2, m^2\lambda, 1/2) \\
													&- F_{GIG}(v^{L-1,l+1}-v^{L-1,l}; \ \lambda/\mu^2, m^2\lambda, 1/2) \big) + v^{L-1,l+1} F_{S_m}(\min\{v^{L-1,l+1}-v^{L-1,l}, \ ALP \}) \Big\} + p_0 v^{L-1,l+1}, \\
	v^{l,l} &= \sum_{m=1}^{+\infty}   p_m \Big\{ \overline{F}_{S_m}(ALP)ALP + m\mu F_{GIG}(ALP; \ \lambda/\mu^2, m^2\lambda, 1/2) \Big\}.
\end{align*}
\end{thm}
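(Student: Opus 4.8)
The plan is to apply Theorem~\ref{thm:MultStop} directly: since the annual gains $W(1),\dots,W(T)$ are i.i.d.\ with $W \stackrel{d}{=} \min\{ALP, \sum_{n=1}^N X_n\}$, every entry of the value recursion reduces to one of the two unconditional expectations $\E[W]$ and $\E[\max\{c_1 + W, c_2\}]$ with $0 \le c_1 \le c_2$ (as observed right after Theorem~\ref{thm:MultStop}). I would therefore first record, from Proposition~\ref{prop:ALP_new}, that $W$ has a three-part (mixed) law: an atom of mass $p_0$ at $0$, an absolutely continuous part on $(0,ALP)$ with density $\sum_{m\ge1} p_m f_{S_m}(\cdot)$, and an atom at $ALP$ of mass $\sum_{m\ge1}p_m\overline F_{S_m}(ALP)$. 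All four recursion lines are then obtained by integrating the relevant payoff against this law.

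For $\E[W]$ (which gives $v^{1,1}$ and, together with the diagonal recursion, $v^{l,l} = v^{l-1,l-1} + \E[W]$) the only nontrivial piece is $\int_0^{ALP} w f_{S_m}(w)\,dw$. Here I would invoke Lemmas~\ref{property1}--\ref{property2} to write $f_{S_m}(w) = f_G(w;\lambda/\mu^2,m^2\lambda,-1/2)$ and then Lemma~\ref{property4} to turn $w f_G(w;\ldots,-1/2)$ into $m\mu\, f_G(w;\ldots,1/2)$, so the integral becomes $m\mu\,F_{GIG}(ALP;\lambda/\mu^2,m^2\lambda,1/2)$; adding the atom-at-$ALP$ contribution $ALP\,\overline F_{S_m}(ALP)$ yields the stated expression.

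For the general term I would compute $\E[\max\{c_1+W,c_2\}]$ by splitting the support of $W$ at the level $d := c_2 - c_1 \ge 0$: the two atoms contribute $c_2 p_0$ (since $0 \le d$) and $\max\{c_1+ALP,c_2\}\,\sum_m p_m \overline F_{S_m}(ALP)$, while the continuous part splits as $\int_0^{d} c_2 f_{S_m} + \int_{d}^{ALP}(c_1+w) f_{S_m}$, the last integral again handled by Lemma~\ref{property4}. Substituting $c_1 = v^{L-1,l}$, $c_2 = v^{L-1,l+1}$ (and $c_1 = 0$, $c_2 = v^{L-1,1}$ for the single-stop line), then regrouping the $p_0$ and atom-at-$ALP$ terms, reproduces the formulas for $v^{L,1}$ and $v^{L,l+1}$.

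The one genuinely delicate point --- and the step I expect to be the main obstacle --- is that the stated closed forms carry $\min\{v^{L-1,l+1}-v^{L-1,l},ALP\}$ in the mass terms but the \emph{un-truncated} argument $v^{L-1,l+1}-v^{L-1,l}$ inside $F_{S_m}$ and $F_{GIG}(\cdot;1/2)$. These coincide with the exact integral only when $d = v^{L-1,l+1}-v^{L-1,l} \le ALP$, so I would first prove the a priori bound $0 \le v^{L,l+1}-v^{L,l} \le ALP$ (together with $0 \le v^{L,1} \le ALP$). This follows by induction over the lattice of pairs $(L,l)$ from the pointwise inequality $\max\{b+w,c\} \le \max\{a+w,b\} + ALP$, valid for all $w \ge 0$ whenever $a \le b \le c$ with $b-a \le ALP$ and $c - b \le ALP$ (using $b+w \le (a+w)+ALP$ and $c \le b + ALP$), combined with $W \le ALP$ a.s.; the lower bound is immediate from monotonicity. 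With this bound in hand the truncations are inactive and the displayed formulas follow.
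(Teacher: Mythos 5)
Your proposal is correct and follows essentially the same route as the paper: decompose the mixed law of $W=\min\{ALP,\sum_n X_n\}$ from Proposition~\ref{prop:ALP_new} into the atom at $0$, the continuous part on $(0,ALP)$, and the atom at $ALP$, split $\E[\max\{c_1+W,c_2\}]$ at $c_2-c_1$, and evaluate $\int w f_{S_m}(w)\,dw$ via Lemmas~\ref{property2} and~\ref{property4}. Your additional a priori bound $0\le v^{L,l+1}-v^{L,l}\le ALP$ is a sensible extra step that justifies the untruncated arguments $v^{L-1,l+1}-v^{L-1,l}$ appearing inside $F_{S_m}$ and $F_{GIG}(\cdot\,;1/2)$ in the stated formulas, a point the paper's own proof leaves implicit.
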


\subsection{Post Attachment Point Coverage (PAP)}
As in the ALP case, for the Post Attachment Point Coverage Policy we can also model the intra-annual losses before applying the insurance policy, but to calculate the cdf and pdf of both the insured process $\widetilde{Z}$ and the gain process $Z-\widetilde{Z}$ it will be necessary to consider an additional conditioning step. Since the insured loss is given by Definition \ref{def:PAP} it will be convenient to define the first time the aggregated loss process exceeds the threshold $PAP$, which can be formally defined as the following stopping time
\begin{equation}\label{eq:M_star}
M^*_m = \inf \left\{n \leq m : \sum_{k=1}^n X_k > PAP \right\}
\end{equation}
and, $M^*_m = +\infty,$ if $\sum_{k=1}^n X_k \leq PAP$.

\subsubsection{Local Risk Transfer Objective: Minimizing the Loss at the Stopping Times}
\begin{prop}[Local Risk Transfer Case: PAP]  \label{prop:PAP} The cdf and the pdf of the insured process are given, respectively, by
\begin{align*}
	F_{\widetilde{Z}}(z) &= \sum_{m=1}^{+\infty} \Bigg\{ \sum_{m^*=1}^m \Big( F_{IG}(z; \ m^*\mu, m^{*2}\lambda)D_{m^*,m} \Big) + F_{IG}(z; \ m\mu, \ m^2\lambda)D_m \Bigg\} +p_0 \\
	   f_{\widetilde{Z}}(z) &= \sum_{m=1}^{+\infty} \Bigg\{ \sum_{m^*=1}^m \Big( f_{IG}(z; \ m^*\mu, m^{*2}\lambda)D_{m^*,m} \Big) + f_{IG}(z; \ m\mu, \ m^2\lambda)D_m \Bigg\}\one_{z>0} +p_0\one_{z=0},
\end{align*}
where 
\begin{align*}
	 D_{m^*,m} &= \P\left[ M^*_m=m^*\right]p_m, \ m=1,2,\hdots, \ m^*=1,\hdots,m,\\
	       D_m &= F_{IG}(PAP, \ m\mu, \ m^2\lambda)p_m, \ \ m=1,2,\hdots
\end{align*}
\end{prop}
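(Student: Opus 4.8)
The plan is to compute the law of $\widetilde{Z}$ by conditioning successively on the number of annual losses and on the first-passage behaviour of the partial-sum process across the level $PAP$. First I would condition on $\{N=m\}$, which carries probability $p_m$; the case $m=0$ produces no loss and gives the point mass $p_0$ at the origin appearing as the additive term in both $F_{\widetilde Z}$ and $f_{\widetilde Z}$. For $m\geq 1$ the definition of the PAP policy (Definition \ref{def:PAP}) shows that $\widetilde Z$ is determined entirely by where the running sum $S_n=\sum_{k=1}^n X_k$ first crosses $PAP$, which is precisely the stopping index $M^*_m$ introduced in (\ref{eq:M_star}).

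Next I would split $\{N=m\}$ into two mutually exclusive pieces according to $M^*_m$. On $\{M^*_m=m^*\}$ with $m^*\leq m$ the insured loss reduces to a fixed partial sum of iid Inverse-Gaussian severities, whose law is $IG(m^*\mu,m^{*2}\lambda)$ by the convolution-closure property of Lemma \ref{property1}; weighting by $\P[M^*_m=m^*]\,p_m$ produces the coefficients $D_{m^*,m}$ and the inner sum $\sum_{m^*=1}^m F_{IG}(z;\,m^*\mu,m^{*2}\lambda)D_{m^*,m}$. On the complementary event $\{M^*_m=+\infty\}$, i.e. $\{S_m\leq PAP\}$, none of the losses is ceded and $\widetilde Z=S_m\sim IG(m\mu,m^2\lambda)$; the probability of this event is $F_{IG}(PAP;\,m\mu,m^2\lambda)\,p_m=D_m$, giving the term $F_{IG}(z;\,m\mu,m^2\lambda)D_m$. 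Summing these contributions over all $m\geq 1$ via the law of total probability, and adding the atom $p_0$, yields the claimed expression for $F_{\widetilde Z}$. The density $f_{\widetilde Z}$ then follows by differentiating in $z$: each $F_{IG}$ term differentiates to the corresponding $f_{IG}$ term on $\{z>0\}$, while the discrete mass is recorded as $p_0\one_{\{z=0\}}$.

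The step I expect to be the main obstacle is the first-passage decomposition itself. One must identify precisely, on each event $\{M^*_m=m^*\}$, which partial sum constitutes the retained loss $\widetilde Z$ and with what Inverse-Gaussian parameters, keeping the indexing of the loss that first crosses $PAP$ consistent with Definition \ref{def:PAP}; one must also confirm that the events $\{M^*_m=m^*\}_{m^*=1}^{m}$ together with $\{S_m\leq PAP\}$ form a partition of $\{N=m\}$, so that summing the conditional laws weighted by $D_{m^*,m}$ and $D_m$ reconstructs the full distribution without double counting or omission. Once this conditioning is set up correctly, the remainder is a routine application of the closure-under-convolution property of Lemma \ref{property1} and the law of total probability.
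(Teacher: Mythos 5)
Your proposal follows the paper's proof essentially step for step: condition on $N=m$, partition that event according to the first-crossing index $M^*_m$ (with the complementary piece $\{S_m\le PAP\}$ supplying $D_m$), apply the convolution-closure property of Lemma \ref{property1} to identify each partial sum as Inverse Gaussian, and reassemble via total probability, the only omitted detail being the paper's explicit integral formula for $\P[M^*_m=m^*]$, which is not needed for the statement itself. Note also that, exactly as in the paper, you replace the law of the relevant partial sum \emph{conditional} on $\{M^*_m=m^*\}$ by its unconditional $IG(m^*\mu, m^{*2}\lambda)$ law, so your argument carries the same simplification as the published proof rather than introducing a new gap.
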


\begin{thm}[Local Risk Transfer Case: PAP] \label{thm:PAP} Defining $W(t)= -\widetilde{Z}(t)$, for $t=1,\hdots,T$ the multiple stopping rule is given by (\ref{eq:tau}), where
\begin{align*}
	v^{1,1} &=  -\sum_{m=1}^{+\infty} \sum_{m^*=1}^m m^*\mu D_{m^*,m}  + \sum_{m=1}^{+\infty} m \mu  D_m,\\
	v^{L,1} &= \sum_{m=1}^{+\infty} \sum_{m^*=1}^m \Bigg\{m\mu F_{GIG}(v^{L-1,1}; \ \lambda/\mu^2, m^{*2}\lambda, 1/2) +  v^{L-1,1} \overline{F}_{IG}(v^{L-1,1}; \ m^*\mu, m^{*2}\lambda) \Bigg\}D_{m^*,m} \\
					&+ \sum_{m=1}^{+\infty} \Big(m\mu F_{GIG}(v^{L-1,1}; \ \lambda/\mu^2, m^2\lambda, 1/2) +  v^{L-1,1} \overline{F}_{IG}(v^{L-1,1}; \ m\mu, m^2\lambda)\Big)D_m, \\	
v^{l,l+1} &= \sum_{m=1}^{+\infty} \sum_{m^*=1}^m \Bigg\{v^{L-1,l} F_{IG}(v^{L-1,l+1}-v^{L-1,l}; \ m^*\mu, m^{*2}\lambda)+ m\mu F_{GIG}(v^{L-1,l+1}-v^{L-1,l}; \ \lambda/\mu^2, m^{*2}\lambda, 1/2) \\
					&+  v^{L-1,l+1} \overline{F}_{IG}(v^{L-1,l+1}-v^{L-1,l}; \ m^*\mu, m^{*2}\lambda) \Bigg\}D_{m^*,m} \\
					&+ \sum_{m=1}^{+\infty} \Big(v^{L-1,l} F_{IG}(v^{L-1,l+1}-v^{L-1,l}; \ m\mu, m^2\lambda) + m\mu F_{GIG}(v^{L-1,l+1}-v^{L-1,l}; \ \lambda/\mu^2, m^2\lambda, 1/2) \\
					&+  v^{L-1,l+1} \overline{F}_{IG}(v^{L-1,l+1}-v^{L-1,l}; \ m\mu, m^2\lambda)\Big)D_m  + v^{L-1,l} p_0, \\
	v^{l,l} &= v^{l-1,l-1} -\sum_{m=1}^{+\infty} \sum_{m^*=1}^m m^*\mu D_{m^*,m}  + \sum_{m=1}^{+\infty} m \mu  D_m.
\end{align*}
\end{thm}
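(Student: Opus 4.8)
The plan is to derive Theorem~\ref{thm:PAP} directly from Theorem~\ref{thm:MultStop} together with the law of the insured process obtained in Proposition~\ref{prop:PAP}. By Theorem~\ref{thm:MultStop}, once the four families of value functions $v^{1,1}$, $v^{L,1}$, $v^{L,l+1}$ and $v^{l,l}$ are known in closed form, the optimal rule is immediately given by the system~(\ref{eq:tau}); hence the whole content of the theorem is the explicit evaluation of these value functions for $W(t) = -\widetilde{Z}(t)$. Reading off the recursions of Theorem~\ref{thm:MultStop}, every value function is built from only two kinds of unconditional expectation: the plain mean $\E[W] = -\E[\widetilde{Z}]$ (needed for $v^{1,1}$ and $v^{l,l}$) and the quantity $\E[\max\{c_1 + W,\, c_2\}] = \E[\max\{c_1 - \widetilde{Z},\, c_2\}]$ with $c_1 = v^{L-1,l}$ (or $0$) and $c_2 = v^{L-1,l+1}$ (or $v^{L-1,1}$). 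So the proof amounts to evaluating these two functionals against the distribution of $\widetilde{Z}$.

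First I would record that, by Proposition~\ref{prop:PAP}, the law of $\widetilde{Z}$ is a countable mixture: an $IG(m^*\mu, m^{*2}\lambda)$ component carrying mass $D_{m^*,m}$ for each crossing time $M^*_m = m^*$, an $IG(m\mu, m^2\lambda)$ component carrying mass $D_m$ for the non-crossing event, and an atom at $0$ of mass $p_0$. Integrating the identity map against this mixture gives $\E[\widetilde{Z}]$ as a weighted sum of the component means $m^*\mu$ and $m\mu$ (the atom contributing nothing), which yields the expression for $v^{1,1}$ and, through the recursion $v^{l,l} = v^{l-1,l-1} + \E[W]$, for $v^{l,l}$.

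For the $\max$ functionals I would use the elementary rewriting $\max\{c_1 - \widetilde{Z},\, c_2\} = c_1 - \min\{\widetilde{Z},\, c_1 - c_2\}$, valid because in the local case $W \leq 0$ forces $c_1 - c_2 \geq 0$, and then decompose
\begin{equation*}
\E\bigl[\min\{\widetilde{Z},\, a\}\bigr] = \int_0^{a} z\, f_{\widetilde{Z}}(z)\,dz + a\,\overline{F}_{\widetilde{Z}}(a), \qquad a = c_1 - c_2 .
\end{equation*}
The tail factor $\overline{F}_{\widetilde{Z}}(a)$ is supplied term-by-term by the CDF in Proposition~\ref{prop:PAP}, producing the $\overline{F}_{IG}(\,\cdot\,; m^*\mu, m^{*2}\lambda)$ and $\overline{F}_{IG}(\,\cdot\,; m\mu, m^2\lambda)$ factors. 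The truncated first moment $\int_0^a z\, f_{IG}(z; n\mu, n^2\lambda)\,dz$ of each component is the crux: here I would invoke Lemma~\ref{property4} to rewrite $z\,f_{IG}(z; n\mu, n^2\lambda) = n\mu\, f_{G}(z;\lambda/\mu^2, n^2\lambda, 1/2)$, so that the partial integral collapses to $n\mu\, F_{GIG}(a; \lambda/\mu^2, n^2\lambda, 1/2)$. This is exactly what creates the $m^*\mu\,F_{GIG}$ and $m\mu\,F_{GIG}$ terms in $v^{L,1}$ and $v^{L,l+1}$; substituting the appropriate constants ($c_1 = 0$ for $v^{L,1}$, $c_1 = v^{L-1,l}$ for $v^{L,l+1}$) and summing the mixture weights $D_{m^*,m}$, $D_m$ and the atom $p_0$ assembles the stated closed forms.

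The main obstacle I anticipate is not any single integral but the bookkeeping: one must carry the two indexed families of mixture components (the crossing components depending on both $m$ and $m^*$ through $D_{m^*,m}$, and the non-crossing components depending only on $m$ through $D_m$) simultaneously through every decomposition, keep the zero atom separate, and --- because $W = -\widetilde{Z}$ is non-positive and the value functions are therefore monotone with the opposite orientation to the non-negative case discussed after Theorem~\ref{thm:MultStop} --- track the signs and the direction of the threshold inequalities with care so that the arguments of the $F_{GIG}$ and $\overline{F}_{IG}$ terms come out correctly. Once the sign conventions are fixed, each of the four value functions follows by specializing $(c_1, c_2)$ and collecting terms.
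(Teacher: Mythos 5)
Your proposal is correct and follows essentially the same route as the paper's proof: reduce everything to $\E[\widetilde{Z}]$ and $\E[\max\{c_1+W,c_2\}]=-\E[\min\{-c_1+\widetilde{Z},-c_2\}]$, integrate term-by-term against the mixture from Proposition \ref{prop:PAP} (keeping the atom at zero separate), split the truncated integral at the threshold, and use Lemma \ref{property4} to turn the partial first moments of each $IG$ component into $n\mu\,F_{GIG}(\cdot;\lambda/\mu^2,n^2\lambda,1/2)$ terms. Your repackaging of the max as $c_1-\min\{\widetilde{Z},c_1-c_2\}$ is only a cosmetic variant of the paper's $(-1)\E[\min\{c_1+\widetilde{Z},c_2\}]$ identity, so there is nothing substantive to add.
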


\subsubsection{Global Risk Transfer Objective: Minimizing the Loss Over Period $[0,T]$}
The gain process in the PAP--Total loss case takes the following form:
\begin{align*}
W &= Z - \widetilde{Z} \\
	&= \sum_{n=1}^N X_n -  \sum_{n=1}^{N} X_{n} \times \one_{\left\{ \sum_{k=1}^{n} X_k \leq PAP \right\}} \\
	&= \sum_{n=1}^N X_n \one_{\left\{ \sum_{k=1}^{n} X_k > PAP \right\}}.
\end{align*}

The resulting annual loss distribution and density are then given by Proposition \ref{prop:PAP_new}.
			
\begin{prop}[Global Risk Transfer Case: PAP]  \label{prop:PAP_new} The cdf and the pdf of the insured process are given, respectively, by
\begin{align*}
	F_W(w) &=  \sum_{m=1}^{+\infty} \Bigg\{ \sum_{m^*=1}^m \Big( \P\left[ \sum_{n=m^*}^m X_n \leq w \right]\P\left[ M^*_m=m^*\right] p_m\Big)\Bigg\} + \sum_{m=1}^{+\infty}\Bigg\{ P\left[ \sum_{k=1}^m X_k < PAP \right]p_m\Bigg\}  +p_0,\\
	   f_W(w) &= \left( \sum_{m=1}^{+\infty} \Bigg\{ \sum_{m^*=1}^m \Big( f_{IG}(w; \ (m-m^*+1)\mu, (m-m^*+1)^2\lambda)\P\left[ M^*_m=m^*\right] p_m\Big)\Bigg\} \right)\one_{\{ w> 0\}}\\
						&+ \P[W=0]\one_{\{ w= 0\}},
\end{align*}
where $P[W=0] = \sum_{m=1}^{+\infty}\Bigg\{ P\left[ \sum_{k=1}^m X_k < PAP \right]p_m\Bigg\}  +p_0$.
\end{prop}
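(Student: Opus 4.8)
The plan is to compute the law of $W=Z-\widetilde{Z}$ by conditioning first on the annual loss count and then on the index at which the aggregate severity first crosses the attachment point $PAP$. Writing $S_n=\sum_{k=1}^n X_k$, I would start from the law of total probability over $\{N=m\}$,
$$\P[W\le w]=\sum_{m=0}^{+\infty}\P[W\le w\mid N=m]\,p_m,$$
so that the $m=0$ term contributes $p_0$ to the mass at $w=0$ (no losses means $W=0$). For each $m\ge 1$ I would then partition the sample space using the stopping time $M^*_m$ defined in (\ref{eq:M_star}): either $M^*_m=m^*$ for some $m^*\in\{1,\dots,m\}$, i.e.\ $S_{m^*-1}\le PAP<S_{m^*}$, or $M^*_m=+\infty$, i.e.\ $S_m\le PAP$.

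The key simplification comes from the positivity of the IG severities: since each $X_n>0$, the partial sums $S_1<S_2<\cdots<S_m$ are strictly increasing, so once the aggregate exceeds $PAP$ it stays above it. Hence on $\{M^*_m=m^*\}$ the indicator $\one_{\{S_n>PAP\}}$ equals $1$ exactly for $n\ge m^*$ and $0$ otherwise, which collapses the gain to the post-crossing block
$$W=\sum_{n=1}^m X_n\,\one_{\{S_n>PAP\}}=\sum_{n=m^*}^m X_n,$$
a sum of $m-m^*+1$ i.i.d.\ IG variables; on the non-crossing event $\{M^*_m=+\infty\}$ the indicator vanishes for every $n$, so $W=0$. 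Invoking Lemma \ref{property1}, the block sum is distributed as $IG\big((m-m^*+1)\mu,(m-m^*+1)^2\lambda\big)$, which supplies the density $f_{IG}\big(w;(m-m^*+1)\mu,(m-m^*+1)^2\lambda\big)$ appearing in the statement.

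Assembling these pieces gives the two parts of $F_W$: the continuous part is the sum over $m$ and over crossing indices $m^*$ of $\P[\sum_{n=m^*}^m X_n\le w]\,\P[M^*_m=m^*]\,p_m$, while the atom at $w=0$ collects the non-crossing probabilities $\sum_{m\ge 1}\P[S_m<PAP]\,p_m$ together with $p_0$, yielding exactly $\P[W=0]$. Differentiating the continuous part in $w$ for $w>0$ replaces each block c.d.f.\ by the corresponding IG density and leaves the point mass $\P[W=0]\,\one_{\{w=0\}}$, which is the claimed form of $f_W$.

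The main obstacle is the factorisation of the joint probability $\P\big[\sum_{n=m^*}^m X_n\le w,\,M^*_m=m^*\big]$ into the product $\P\big[\sum_{n=m^*}^m X_n\le w\big]\,\P[M^*_m=m^*]$ used above. These two events are not genuinely independent: the crossing event $\{M^*_m=m^*\}=\{S_{m^*-1}\le PAP<S_{m^*}\}$ constrains $X_{m^*}$ (through $X_{m^*}>PAP-S_{m^*-1}$), and $X_{m^*}$ is also the leading term of the block sum $\sum_{n=m^*}^m X_n$, so the overshoot at the crossing index couples the two. Making the statement rigorous therefore requires either a careful integration over the joint law of $(S_{m^*-1},X_{m^*},\dots,X_m)$ or, as the stated closed form indicates, treating the post-crossing block as a fresh independent IG sum of $m-m^*+1$ severities, i.e.\ neglecting the crossing overshoot. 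I would make this modelling choice explicit, since it is precisely what renders the distribution of $W$ analytically tractable and produces the clean IG densities in the result.
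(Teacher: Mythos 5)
Your proof follows the same route as the paper's: condition on $\{N=m\}$, partition by the value of the crossing index $M^*_m$ from (\ref{eq:M_star}), use the positivity of the severities (so the partial sums are strictly increasing) to collapse the gain to the post-crossing block $\sum_{n=m^*}^m X_n$ on $\{M^*_m=m^*\}$ and to $0$ on $\{M^*_m=+\infty\}$, and then invoke Lemma \ref{property1} for the block sum. The one place where you go beyond the paper is your final paragraph, and the concern you raise there is well founded: the paper's proof writes $\P[W_m\le w\mid M^*_m=m^*]=\P[\sum_{n=m^*}^m X_n\le w]$ with no comment, i.e.\ it silently replaces the conditional law of the block given the crossing event by the unconditional $(m-m^*+1)$-fold IG convolution. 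As you observe, these differ, because $\{M^*_m=m^*\}=\{S_{m^*-1}\le PAP<S_{m^*}\}$ forces the overshoot $X_{m^*}>PAP-S_{m^*-1}$, so conditionally $X_{m^*}$ is stochastically larger than an unconditioned $IG(\mu,\lambda)$ draw and the block sum is biased upward. The stated closed form is therefore exact only under the additional convention you describe (treating the post-crossing block as a fresh independent IG sum, neglecting the overshoot), and the same silent replacement occurs in the local-risk analogue, Proposition \ref{prop:PAP}. In short, your derivation matches the paper's, and the caveat you flag is a genuine gap in the paper's own argument rather than in yours; making the convention explicit, as you propose, is the right fix.
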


\begin{thm}[Global Risk Transfer Case: PAP] \label{thm:PAP_new} Defining $W(t)= Z(t)-\widetilde{Z}(t)$, for $t=1,\hdots,T$ the multiple stopping rule is given by (\ref{eq:tau}), where
\begin{align*}
	v^{1,1} &=  \sum_{m=1}^{+\infty} \sum_{m^*=1}^m \P\left[ M^*_m=m^*\right] p_m (m-m^*+1)\mu,\\
	v^{L,1} &=  \sum_{m=1}^{+\infty} \sum_{m^*=1}^m \P\left[ M^*_m=m^*\right] p_m \Bigg\{ \overline{F}_{GIG}(v^{L-1,1}; \ \lambda/\mu^2, (m-m^*+1)^2\lambda, 1/2) (m-m^*+1)\mu, \\
														&+ v^{L-1,1} F_{GIG}(v^{L-1,1}; \ \lambda/\mu^2, (m-m^*+1)^2\lambda, -1/2) \Bigg\} + v^{L-1,1} \P[W=0]\\	
v^{l,l+1} &=  \sum_{m=1}^{+\infty} \sum_{m^*=1}^m \P\left[ M^*_m=m^*\right] p_m \Bigg\{ v^{L-1,l} \overline{F}_{GIG}(v^{L-1,l+1}-v^{L-1,l}; \ \lambda/\mu^2, (m-m^*+1)^2\lambda, -1/2) \\
														&+ \overline{F}_{GIG}(v^{L-1,l+1}-v^{L-1,l}; \ \lambda/\mu^2, (m-m^*+1)^2\lambda, 1/2) (m-m^*+1)\mu \\
														&+ v^{L-1,l+1} F_{GIG}(v^{L-1,l+1}-v^{L-1,l}; \ \lambda/\mu^2, (m-m^*+1)^2\lambda, -1/2) \Bigg\} + v^{L-1,l+1} \P[W=0],\\
	v^{l,l} &= v^{l-1,l-1}+ \sum_{m=1}^{+\infty} \sum_{m^*=1}^m \P\left[ M^*_m=m^*\right] p_m (m-m^*+1)\mu.
\end{align*}
\end{thm}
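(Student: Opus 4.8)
The plan is to obtain every value function in the statement directly from Theorem \ref{thm:MultStop}, which reduces the whole problem to two unconditional expectations of the annual gain $W$: its mean $\E[W]$ and the quantity $\E[\max\{c_1+W,\,c_2\}]$ for $0\leq c_1\leq c_2$. The key structural observation, already built into Proposition \ref{prop:PAP_new}, is that on the event $\{N=m\}\cap\{M^*_m=m^*\}$ the indicator $\one_{\{\sum_{k=1}^n X_k>PAP\}}$ is zero for $n<m^*$ and one for $m^*\leq n\leq m$, so the gain collapses to $W=\sum_{n=m^*}^m X_n$, a sum of $(m-m^*+1)$ i.i.d. Inverse Gaussian severities. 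By Lemma \ref{property1} this is distributed as $IG\big((m-m^*+1)\mu,\,(m-m^*+1)^2\lambda\big)$, which is exactly the continuous component in Proposition \ref{prop:PAP_new}; the remaining mass is an atom at zero of size $\P[W=0]$ arising from $\{M^*_m=+\infty\}$, i.e. the event that the running total never exceeds $PAP$.

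First I would compute $\E[W]$ by integrating $w$ against the density of Proposition \ref{prop:PAP_new}. The atom at zero contributes nothing, and each continuous component integrates to the mean of its Inverse Gaussian law, namely $(m-m^*+1)\mu$; weighting by $\P[M^*_m=m^*]p_m$ and summing gives
\begin{equation*}
\E[W]=\sum_{m=1}^{+\infty}\sum_{m^*=1}^m \P[M^*_m=m^*]\,p_m\,(m-m^*+1)\mu.
\end{equation*}
This is precisely $v^{1,1}=\E[W(T)]$ and also the increment in $v^{l,l}=\E[v^{l-1,l-1}+W]$ prescribed by Theorem \ref{thm:MultStop}.

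Next I would evaluate $\E[\max\{c_1+W,\,c_2\}]$ for $0\leq c_1\leq c_2$ by splitting on the threshold $d:=c_2-c_1\geq 0$: on $\{W\geq d\}$ the maximum is $c_1+W$, while on $\{W<d\}$ it is $c_2$. Since $d>0$ the atom sits strictly below $d$, giving the three contributions $c_1\,\overline{F}_W(d)$, $\int_d^{+\infty} w\,f_W(w)\,dw$, and $c_2\,F_W(d)$, the last of which absorbs the atom mass $\P[W=0]$. For the middle integral I would use Lemmas \ref{property2} and \ref{property4} to rewrite $w\,f_{IG}\big(w;(m-m^*+1)\mu,(m-m^*+1)^2\lambda\big)$ as $(m-m^*+1)\mu\,f_G\big(w;\lambda/\mu^2,(m-m^*+1)^2\lambda,1/2\big)$, so the integral over $(d,+\infty)$ becomes $(m-m^*+1)\mu\,\overline{F}_{GIG}\big(d;\lambda/\mu^2,(m-m^*+1)^2\lambda,1/2\big)$; for the survival and distribution factors I would apply Lemma \ref{property2} to pass from $\overline{F}_{IG}$ and $F_{IG}$ to $\overline{F}_{GIG}$ and $F_{GIG}$ with shape index $-1/2$. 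Specializing $(c_1,c_2)=(0,v^{L-1,1})$ then reproduces $v^{L,1}$, and $(c_1,c_2)=(v^{L-1,l},v^{L-1,l+1})$ reproduces $v^{L,l+1}$, with the atom term surfacing as the explicit $v^{L-1,1}\P[W=0]$ and $v^{L-1,l+1}\P[W=0]$ summands.

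The main obstacle is the bookkeeping around the atom at $W=0$ together with the correct assignment of the Generalized Inverse Gaussian shape indices and convolution orders. One must check that $d=v^{L-1,l+1}-v^{L-1,l}$ is nonnegative, which follows from the monotonicity $0\leq v^{L-1,l}\leq v^{L-1,l+1}$ noted after Theorem \ref{thm:MultStop}; that the atom enters only the $c_2\,F_W(d)$ term and never the mean or the survival-weighted terms; and that each invocation of Lemmas \ref{property2}--\ref{property4} is tied to the correct convolution order $(m-m^*+1)$ rather than the raw count $m$. With these points settled, the remainder is routine collection of terms and interchange of the absolutely convergent double sum with the integral.
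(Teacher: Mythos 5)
Your proposal is correct and follows essentially the same route as the paper: both take the mixed density of $W$ from Proposition \ref{prop:PAP_new} as given, compute $\E[W]$ and $\E[\max\{c_1+W,c_2\}]$ by splitting at $c_2-c_1$ and invoking Lemmas \ref{property2} and \ref{property4} to convert the $IG$ terms of convolution order $(m-m^*+1)$ into $F_{GIG}$ and $\overline{F}_{GIG}$ with indices $\pm 1/2$, and then substitute $(c_1,c_2)=(0,v^{L-1,1})$ and $(v^{L-1,l},v^{L-1,l+1})$ via Theorem \ref{thm:MultStop}. Your additional remarks on the atom at zero entering only the $c_2 F_W(c_2-c_1)$ term and on the nonnegativity of $v^{L-1,l+1}-v^{L-1,l}$ are consistent with, and slightly more explicit than, the paper's one-line calculation.
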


\subsection{Individual Loss Policy (ILP)}

The previous two insurance policies, the ALP and PAP structures, have been based on the aggregated amount throughout the year. In the case of the ILP insurance structure, the coverage is not on an accumulated aggregate coverage, instead it is based on an individual loss event coverage. 

\subsubsection{Local Risk Transfer Objective: Minimizing the Loss at the Stopping Times} \label{sec:ILP_loss1}
Let us assume a company buys the insurance policy called Individual Loss Policy (ILP). In this case, a particular loss process observed by the company \emph{after} applying the insurance policy may be given by
$$X_1(\omega)-TCL, \ X_2(\omega)-TCL, \ 0, \ 0, \ 0, \ X_6(\omega)-TCL, \ 0, \hdots, \ X_{N-1}(\omega)-TCL, \ 0.$$
In this case we can define a new process $(\widetilde{X}_n)_{n \geq 1}$ such that 
$$\widetilde{X}_1(\omega):=X_1(\omega) - TCL, \ \widetilde{X}_2(\omega):=X_2(\omega)-TCL, \ \widetilde{X}_3(\omega):=X_6(\omega)-TCL, \hdots,\widetilde{X}_{\widetilde{N}}(\omega):=X_{N-1}(\omega)-TCL$$
and the annual insured loss would be given by $\widetilde{Z} = \sum_{n=1}^{\widetilde{N}} \widetilde{X}_n$. Note that in this example the new process, $(\widetilde{X}_n)_{n \geq 1}$ would have $\widetilde{N} < N$ non zero observations and, in general, $\widetilde{N} \leq N$. The process $(\widetilde{X}_n)_{n \geq 1}$ can be interpreted as an auxiliary process, meaning that if the company had claimed on the insurance policy for this year then the observed losses would have been $\widetilde{X}_n$, instead of $X_n$.

In our approach we will model the random variable $\widetilde{N}$ and the process $(\widetilde{X}_n)_{n \geq 1}$, the first as an homogeneous Poisson process with mean $\widetilde{\lambda}_{\widetilde{N}}$ and the second as a sequence of i.i.d. random variables such that $\widetilde{X} \sim IG(\lambda, \ \mu)$.

\begin{thm}[Local Risk Transfer Case: ILP]  \label{thm:ILP} Assuming that $\widetilde{N} \sim Poi(\lambda_{\widetilde{N}})$ and $\widetilde{X}_1, \widetilde{X}_2,\hdots$ are i.i.d. with $\widetilde{X} \sim IG(\lambda, \ \mu)$ define $\widetilde{Z}(t) = \sum_{n=1}^{\widetilde{N}(t)} \widetilde{X}_n(t),$ and $W(t) = -\widetilde{Z}(t)$, for $t=1,\hdots,T$. In this case the optimal stopping rule is given by (\ref{eq:tau}), where
\begin{align*}
	v^{1,1} &= -\lambda_{\widetilde{N}} \mu, \\
	v^{L,1} &= -\sum_{n=1}^{+\infty} \mathrm{Pr}[\widetilde{N}=n] \Big[ F_{GIG}(v^{L-1,1}; \ \lambda/\mu^2, n^2\lambda, 1/2)n \mu -v^{L-1,1} F_{GIG}(v^{L-1,1}; \ \lambda/\mu^2, n^2\lambda, -1/2) + v^{L-1,1} \Big], \ \  1 < L \leq T, \\
	v^{L,l+1} &= -\sum_{n=1}^{+\infty} \mathrm{Pr}[\widetilde{N}=n] \Big[ F_{GIG}(v^{L-l,l+1}-v^{L-l,l}; \ \lambda/\mu^2, n^2\lambda, 1/2)n \mu \\
						&+ (v^{L-l,l}-v^{L-l,l+1}) F_{GIG}(v^{L-l,l+1}-v^{L-l,l}; \ \lambda/\mu^2, n^2\lambda, -1/2) + v^{L-l,l+1} \Big]+ v^{L-l,l}\mathrm{Pr}[\widetilde{N}=0], \ \ l+1 < L \leq T, \\
	v^{l,l} &= v^{l-1,l-1} -\lambda_{\widetilde{N}} \mu.
\end{align*}
\end{thm}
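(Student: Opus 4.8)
The plan is to specialize the recursions of Theorem \ref{thm:MultStop} to the loss model at hand. Since $\widetilde{Z}(1),\dots,\widetilde{Z}(T)$ are i.i.d., so are the gains $W(t)=-\widetilde{Z}(t)$, and every expectation appearing in Theorem \ref{thm:MultStop} is taken with respect to a single generic $W=-\widetilde{Z}$. As noted after Theorem \ref{thm:MultStop}, this reduces the whole problem to evaluating two types of integrals: the plain mean $\E[W]$ and the ``kinked'' expectation $\E[\max\{c_1+W,\,c_2\}]$. I would first dispatch the easy recursions: conditioning on $\widetilde{N}$ and using $\E[\widetilde{X}]=\mu$ for the $IG(\lambda,\mu)$ severities gives $\E[\widetilde{Z}]=\E[\widetilde{N}]\,\mu=\lambda_{\widetilde{N}}\mu$, which immediately yields $v^{1,1}=\E[W]=-\lambda_{\widetilde{N}}\mu$ and the corner recursion $v^{l,l}=v^{l-1,l-1}+\E[W]=v^{l-1,l-1}-\lambda_{\widetilde{N}}\mu$.

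The substantive work is the kinked expectation. Writing $W=-\widetilde{Z}$ with $\widetilde{Z}\ge 0$, I would rewrite $\max\{c_1+W,c_2\}=\max\{c_1-\widetilde{Z},\,c_2\}$, which equals $c_1-\widetilde{Z}$ on $\{\widetilde{Z}\le c_1-c_2\}$ and $c_2$ otherwise; here $c_1-c_2\ge 0$ because in the local case the value functions satisfy $v^{L-1,l+1}\le v^{L-1,l}\le 0$ (stopping more often only adds further negative increments), a monotonicity I would verify by induction on the recursions. Splitting the expectation at the threshold $d:=c_1-c_2$ then gives
\begin{equation*}
\E[\max\{c_1-\widetilde{Z},c_2\}] = c_1 F_{\widetilde{Z}}(d) - \E[\widetilde{Z}\,\one_{\{\widetilde{Z}\le d\}}] + c_2\,\overline{F}_{\widetilde{Z}}(d).
\end{equation*}
Conditioning each term on $\{\widetilde{N}=n\}$ reduces everything to the conditional law of $\widetilde{Z}$, which by Lemma \ref{property1} is $S_n\sim IG(n\mu,n^2\lambda)$, together with the truncated first moment $\E[S_n\,\one_{\{S_n\le d\}}]$.

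The crux of the argument --- and where the closed form is actually won --- is the evaluation of this truncated moment. I would represent $S_n$ in Generalized Inverse Gaussian form via Lemma \ref{property2}, so that its density is $f_G(x;\lambda/\mu^2,n^2\lambda,-1/2)$, and then invoke the index-shift identity of Lemma \ref{property4}, $x\,f_G(x;\lambda/\mu^2,n^2\lambda,-1/2)=n\mu\,f_G(x;\lambda/\mu^2,n^2\lambda,1/2)$, to obtain
\begin{equation*}
\E[S_n\,\one_{\{S_n\le d\}}]=\int_0^d x\,f_G(x;\lambda/\mu^2,n^2\lambda,-1/2)\,dx = n\mu\,F_{GIG}(d;\lambda/\mu^2,n^2\lambda,1/2).
\end{equation*}
The tail probabilities are handled by the same GIG representation, $\overline{F}_{IG}(d;n\mu,n^2\lambda)=\overline{F}_{GIG}(d;\lambda/\mu^2,n^2\lambda,-1/2)$. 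Substituting $(c_1,c_2)=(0,v^{L-1,1})$ for the one-stop recursion and $(c_1,c_2)=(v^{L-1,l},v^{L-1,l+1})$ for the general recursion, and summing over $n$ against $\P[\widetilde{N}=n]$, produces the stated formulas for $v^{L,1}$ and $v^{L,l+1}$.

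I expect the main obstacle to be careful bookkeeping rather than any deep idea. Three points need attention: (i) the atom at $\widetilde{Z}=0$ coming from $\{\widetilde{N}=0\}$, which contributes a probability mass (and nothing to the truncated moment, since $0\le d$) and accounts for the isolated $\P[\widetilde{N}=0]$ terms; (ii) justifying the interchange of the infinite sum over $n$ with the integration, which follows from nonnegativity (Tonelli) together with finiteness of $\E[\widetilde{Z}]=\lambda_{\widetilde{N}}\mu$; and (iii) keeping the sign conventions and the ordering $v^{L-1,l+1}\le v^{L-1,l}$ straight, since they fix the threshold $d$ and the direction of the split, while the two GIG indices track the two kinds of contribution --- the index $1/2$ arising from the truncated first moment via Lemma \ref{property4}, and the index $-1/2$ from the cdf and tail terms via Lemma \ref{property2}. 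Once these are settled, the four displayed expressions follow by direct substitution into the recursions of Theorem \ref{thm:MultStop}.
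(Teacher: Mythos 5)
Your proposal is correct and follows essentially the same route as the paper's proof: the tower property for $\E[W]=-\lambda_{\widetilde{N}}\mu$, a split of the kinked expectation at the threshold determined by the difference of value functions, conditioning on $\widetilde{N}=n$ so that Lemma \ref{property1} gives $S_n\sim IG(n\mu,n^2\lambda)$, and the GIG representation and index-shift identities of Lemmas \ref{property2} and \ref{property4} to express the truncated first moment and tail terms as $F_{GIG}$ evaluations with indices $1/2$ and $-1/2$. The only cosmetic difference is that the paper works with $\E[\min\{c_1+\widetilde{Z},c_2\}]$ for $0<c_1<c_2$ and negates, whereas you compute $\E[\max\{c_1+W,c_2\}]$ directly with nonpositive constants; the atom at $\widetilde{Z}=0$ and the interchange of sum and integral are handled the same way.
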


\subsubsection{Global Risk Transfer Objective: Minimizing the Loss Over Period $[0,T]$ via Monte Carlo}
If we assume the frequency of annual losses is given by $N \sim Poi(\lambda_N)$ and its severities by $X_i \sim IG(\lambda, \mu)$ then the gain process $W$ is given by
\begin{align*}
	W &= Z - \widetilde{Z} \\
	  &= \sum_{n=1}^N X_n  - \sum_{n=1}^{N} \max\left(X_{n} - \text{TCL}, \ 0\right) \\
		&= \sum_{n=1}^N X_n  - \sum_{n=1}^{N} \left(X_{n} - \text{TCL} \right)\one_{\{X_n > TCL \}} \\
		&= \sum_{n=1}^N \left( TCL \one_{\{ X_n > TCL \}} + X_n \one_{\{X_n \leq \}} \right) \\
		&= \sum_{n=1}^N  \min\{ X_n, \ TCL\}.
\end{align*}

From Lemma \ref{property1} we know the Inverse Gaussian family is closed under convolution, but the distribution of the sum of truncated Inverse Gaussian r.v.'s does not take any known form. A simple and effective way to approximate the expectations necessary to the calculation of the optimal multiple stopping rule is to use a Monte Carlo scheme as follows. \\
\begin{algorithm}[H]
 \SetAlgoLined
 \textbf{Inputs:} Model parameters $(\lambda, \mu, \lambda_N)$; Insurance policy parameter, $TCL$; Number of simulations $M$\;
 \KwResult{Simple Random Sample from the gain r.v. $W$ $(W^{(1)},\hdots, W^{(M)})$;}
 \For{$i=1,\hdots,M$}{
  Sample $N^{(i)} \sim Poi(\lambda_N)$\;
  \eIf{$N=0$}{
   $W^{(i)} = 0$\;
   }{
   Sample $X_k^{(i)} \sim IG(\lambda, \mu)$, for $k=1,\hdots,N^{(i)}$\;
   $W^{(i)} = \sum_{k=1}^{N^{(i)}} \min\{X_k^{(i)}, \ TCL \}$ \:
  }
 }
\end{algorithm}

By the end of this process we will have a sample $W^{(1)},\hdots,W^{(M)}$ from the gain, which can be used to approximate, for any given values of $0< c_1 < c_2 $ the expectations as
$$\E[ \max\{ c_1 + W, \ c_2 \}] \approx \frac{1}{M} \sum_{i=1}^M \max\{c_1 + W^{(i)}, \ c_2 \}.$$

\section{Case Studies} 
\label{sec:CaseStudies}
In this Section we will analyse the results provided by the optimal rule in the scenario where analytical expressions are available. Although the loss distribution parameters are different for each insurance policy, in this section we will assume the insurance product is valid for $T=8$ years and gives its owner the right to mitigate $k=3$ losses.

First, for the Accumulated Loss Policy (ALP), Figure \ref{fig:hist_ALP} presents a comparison of the two objective functions (Global and Local Risk Transfer), when the LDA parameters are $(\lambda, \mu, \lambda_N) = (3,2,3)$ and the insurance specific parameter is set to $ALP=10$. In this case we know the probability of having an annual loss that would make it worth utilising the insurance product in one year is $\P[Z> ALP] \approx 20\%$. In this study, for a large number of scenarios, $M=50,000$, the optimal rules from both the objective functions were calculated and, for each scenario, the set of stopping times $(m_1,m_2,m_3)$ were calculated. On the right-hand side of Figure \ref{fig:hist_ALP} we can see that the exercising strategy is considerably different for the two objective functions. For the Global Risk Transfer, we can see that fixing the first two stopping times, say $(m_1,m_2)=(1,2)$, it is preferable (on average) to use the remaining right as early as possible. Another way to see the same pattern is to verify that the frequency of occurrence of the set of strategies $(1,2,3); \ (2,3,4); \ (3,4,5); \ (4,5,6)$ is decreasing, again indicating a prevalence of early exercise strategies. On the other hand, if the objective is to minimize the ``local risk", in more than $50\%$ of the cases the optimal strategy will be to use the rights as soon as possible.

On the left-hand side of Figure \ref{fig:hist_ALP} we present histograms of the total loss over $[0,T]$ (i) without insurance (solid line); (ii) using the global objective function (dark grey); (iii) using the local objective function (light grey). As expected the mean of the total loss when using the local loss function is greater than the global one, but still smaller than the total loss without any insurance.

\begin{figure} 
	\vspace{-2.5cm}
	\subfloat{%
		\begin{minipage}[c][1\width]{%
			0.33\textwidth} 
			\centering%
			\includegraphics[width=1\textwidth]{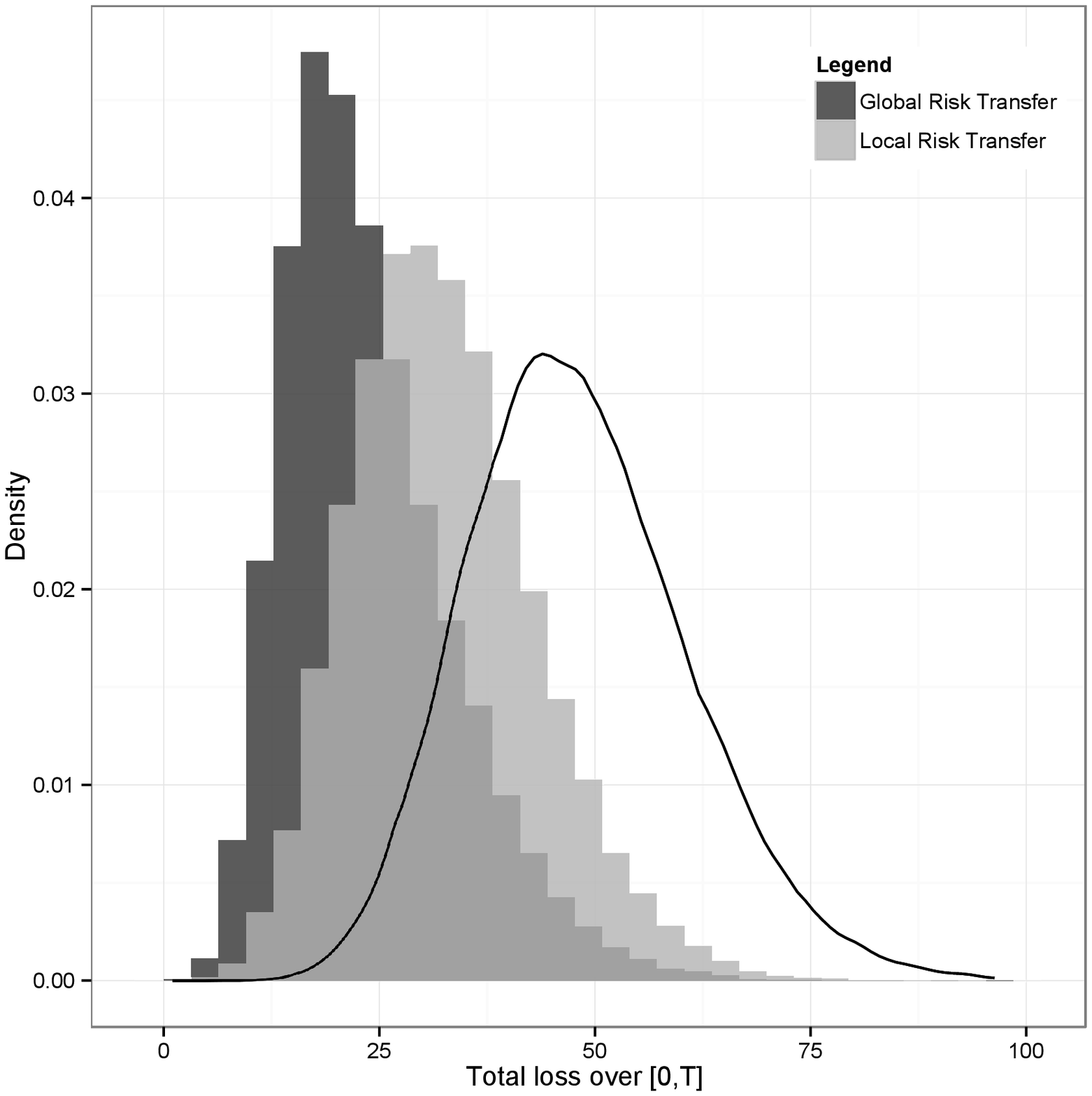}
		\end{minipage}
	} 
	\subfloat{%
		\begin{minipage}[c][1\width]{%
			0.67\textwidth} 
			\centering%
			\includegraphics[width=1\textwidth]{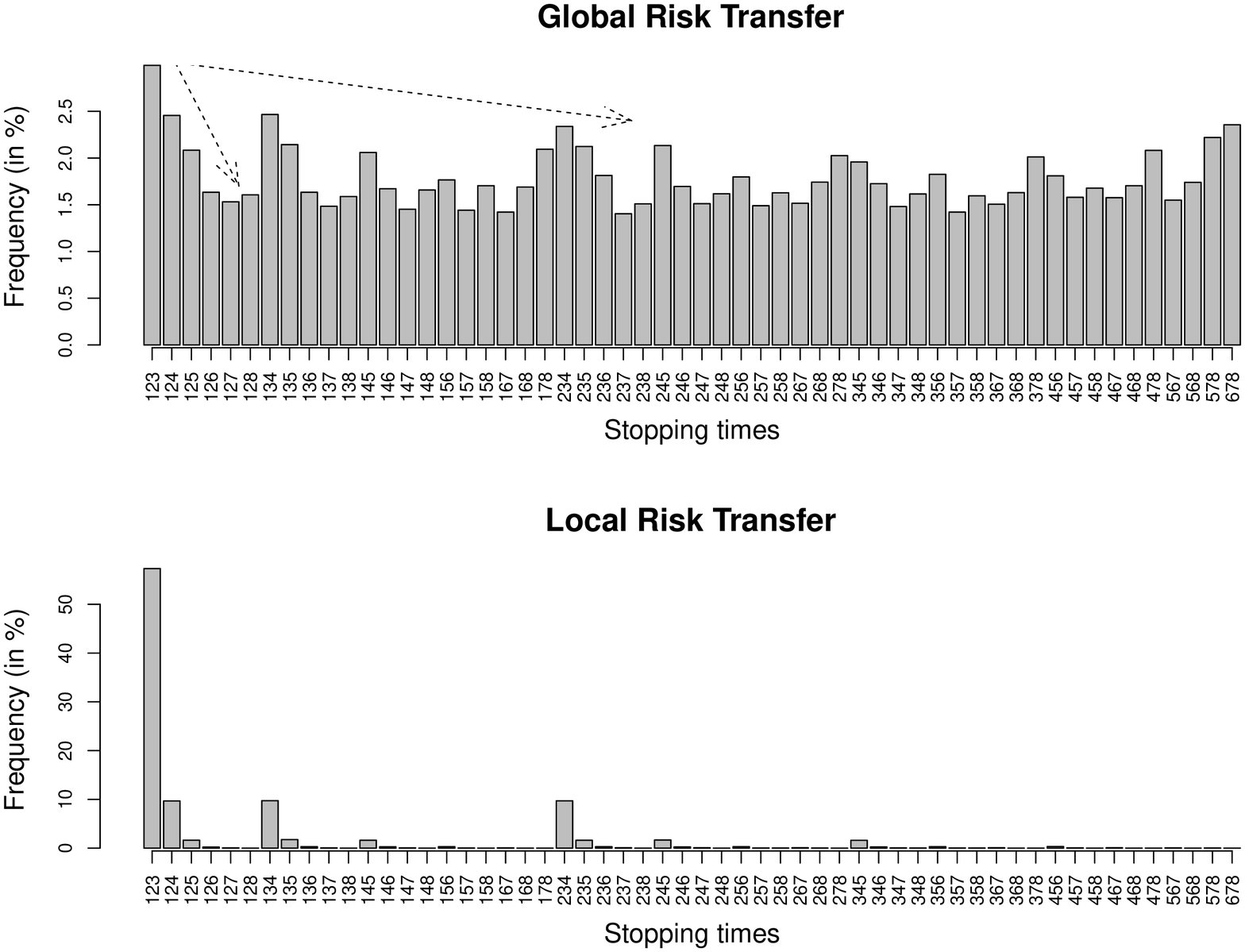}
		\end{minipage}
	} \vspace{-1.5cm}\caption{Comparison of the two objective functions using the Accumulated Loss Policy (ALP): (left) histograms of the total loss under the global objective function (dark grey), local objective function (light grey), no insurance case (solid line); (right) Multiple optimal stopping times under the two loss functions.} 
	\label{fig:hist_ALP}
\end{figure}

For both the PAP and the ILP case, we want to check the optimality of the rules presented, comparing them with pre-specified stopping rules. Denoting $(m_1,m_2,m_3)$ the three stopping times, the rules are defined as follows.
\begin{enumerate}[(i)]
	\item Rule 1 (Deterministic): Always stops at $m_1=1, m_2= 5, m_3=8$;
	\item Rule 2 (Random): Stops randomly at three points in $(1,\hdots,8)$, subject to $1 \leq m_1<m_2<m_3 \leq 8$;
	\item Rule 3 (Average): Stops when the observed loss is less then the expected loss, ie, $\E[W]$.
\end{enumerate}
For a large number of scenarios, $M=10,000$, we calculated the loss for each of the four rules (the Optimal, the Deterministic, the Random and the Average rules) and plot the histogram, comparing with the expected loss under the optimal rule, see Figure \ref{fig:hist_ALP} for the Accumulated Loss Policy (ALP) and Figure \ref{fig:hist_ILP} for the Individual Loss Policy (ILP). In all the examples the Optimal rule outperforms the other three showing the difficulty of creating a stopping rule that leads to losses as small as the optimal one.

In the first row of histograms on Figure \ref{fig:hist_PAP} the results are related to the global loss function, and in the second one to the local loss. Note that the horizontal axis in each line is exactly the objective function we are trying to minimize, precisely, $\displaystyle \sum_{t=1 \atop t\notin \{\tau_1,\hdots,\tau_k \}}^T \hspace{-0.5cm} Z(t) + \sum_{j=1}^k \widetilde{Z}(\tau_j)$ for the global optimization and $\sum_{j=1}^k \widetilde{Z}(\tau_j)$ for the local one. In this figure the vertical dashed bar represents the average total loss under the different rules and the solid grey line is defined as
\begin{enumerate}
	\item $\E[Z] \times T - v^{T,k}$, for the global optimization
	\item $v^{T,k}$, for the local optimization.
\end{enumerate}
These values must be understood as the expected loss under each of the two different gain functions and are easily derived from the definition of the gain functions and Theorem \ref{thm:MultStop}.

\begin{figure}[h!]
\centering
	\includegraphics[scale=0.6]{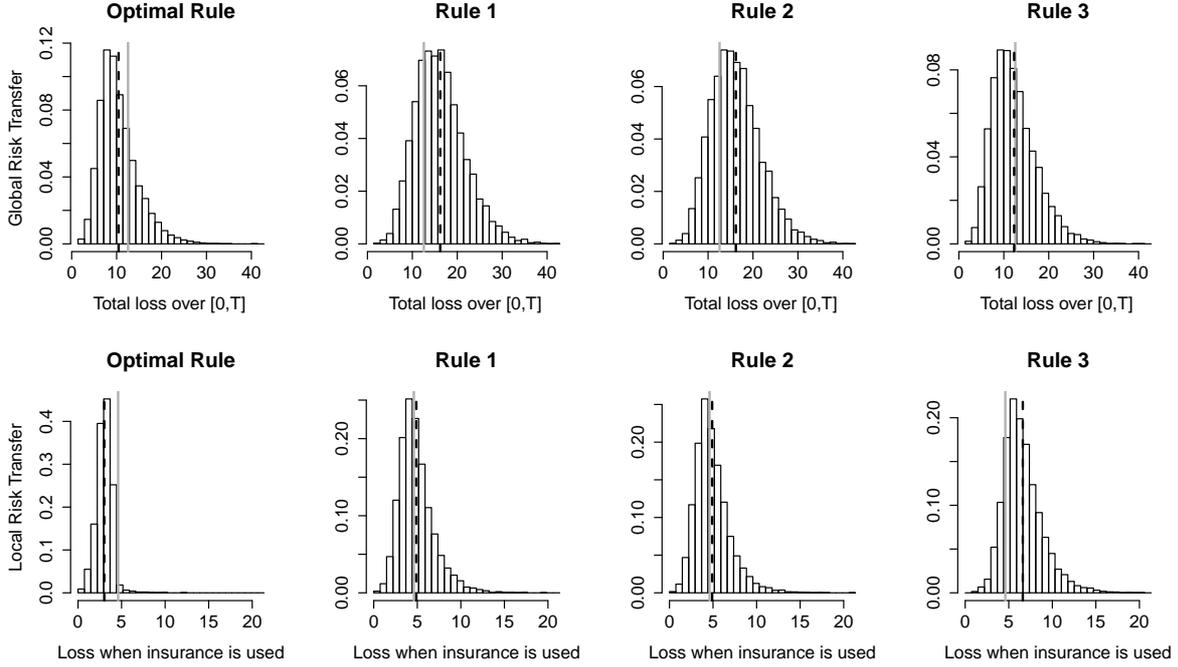}
	\caption{Histogram of losses under four different stopping rules for the PAP case with $(\lambda, \mu, \lambda_N) = (1,1,3)$.}
	\label{fig:hist_PAP}
\end{figure}

On Figure \ref{fig:hist_ILP} we present the same comparison as in the second row of Figure \ref{fig:hist_PAP} using the modelling proposed in Section \ref{sec:ILP_loss1}, with parameters $(\lambda, \mu, \lambda_{\widetilde{N}}) = (3,1,4)$. For this simulation study the conclusion is similar to the one drawn from the PAP case, where the pre-defined stopping rules underperformed the multiple optimal rule.

\begin{figure}[h!]
\centering
	\includegraphics[scale=0.6]{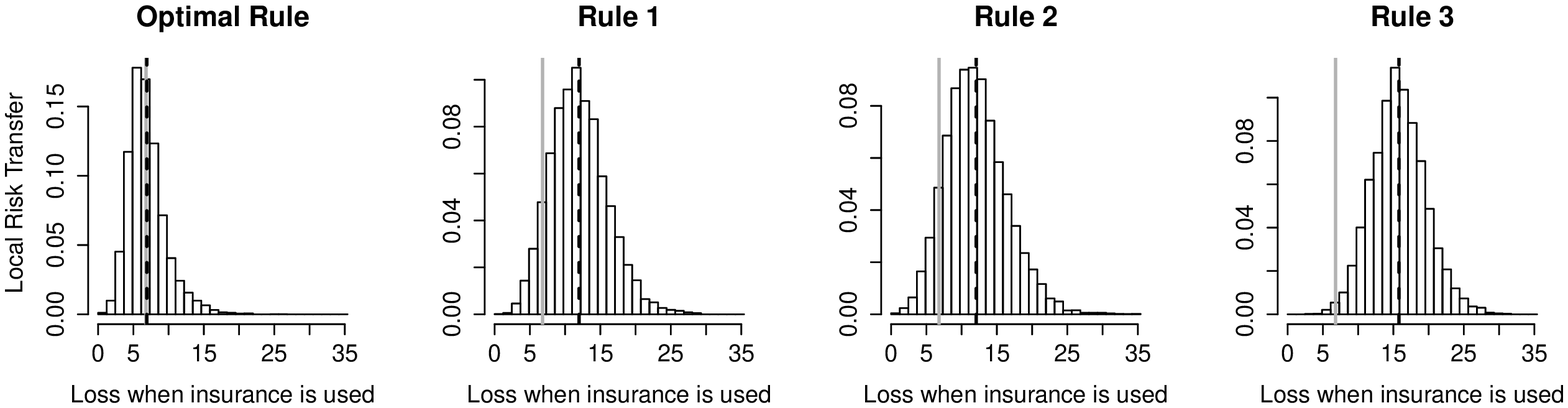}
	\caption{Histogram of losses under four different stopping rules for the ILP case with $(\lambda, \mu, \lambda_{\widetilde{N}}) = (3,1,4)$}
	\label{fig:hist_ILP}
\end{figure}

\section{Series expansion for the density of the insured process} 
\label{sec:SeriesExpansion}
Section \ref{section:AnalyticalOptimalRules} presented some combinations of Insurance Policies and LDA models that led to closed form solutions for the multiple stopping rule. For the cases where analytical solutions can not be found, one alternative is to create a series expansion of the density of the insured process $\widetilde{Z}$ such that all the expectations necessary in Theorem \ref{thm:MultStop} can be analytically calculated. In this Section we will assume the first $n$ moments of the distribution of the insured process $\widetilde{Z}$ are known and our objective is to minimize the local risk, but the calculations are also valid if we work with the global optimization problem (in this case one should use $Z-\widetilde{Z}$ instead of $\widetilde{Z}$).

\subsection{Gamma basis approximation}
If the $n$-th first moments of the insured process $\widetilde{Z}$ can be calculated (either algebraically or numerically) and the support of the insured random variable is $[0,+\infty)$ one can use a series expansion of the density of $\widetilde{Z}$ in a Gamma basis. For notational convenience, define a new random variable $U=b \widetilde{Z}$, where $b=\frac{\E[\widetilde{Z}]}{Var[\widetilde{Z}]}$ and set $a = \frac{\E[\widetilde{Z}]^2}{Var[\widetilde{Z}]}$. Denoting by $f_U$ the density of $U$ the idea, as in the Gaussian case of a Gram-Charlier expansion, is to write $f_U$ as
\begin{equation} \label{eq:f_U}
	f_U(u) = g(u;a)\left[A_0 L^{(a)}_0(u) + A_1  L^{(a)}_1(u) + A_2  L^{(a)}_2(u) + \hdots\right].
\end{equation}

Since $supp(U) = supp(\widetilde{Z})= [0,+\infty)$ we assume the kernel $g(\cdot \ ;a)$ also has positive support (differently from the Gram-Charlier expansion, where $g(\cdot)$ is chosen as a Gaussian kernel). If $g(u;a)= \frac{u^{a-1}e^{-u}}{\Gamma(a)}$ ie, a Gamma kernel with $shape=a$ and $scale =1$, then the orthonormal polynomial basis (with respect to this kernel) is given by the Laguerre polynomials (in contrast to Hermite polynomials in the Gaussian case) defined as
\begin{equation} \label{eq:Laguerre}
	L_n^{(a)}(u) = (-1)^n u^{1-a}e^{-u} \frac{d^n}{du^n}(u^{n+a-1}e^{-u}).
\end{equation}

\begin{table}[h!]
\centering
\begin{tabular}{lll}
\hline
	$L_0^{(a)}(u)$ =& $1$ \\
	$L_1^{(a)}(u)$ =& $u-a$ \\
	$L_2^{(a)}(u)$ =& $u^2 - 2(a+1)u + (a+1)a$\\
	$L_3^{(a)}(u)$ =& $u^3 - 3(a+2)u^2 + 3(a+2)(a+1)u - (a+2)(a+1)a$\\
	$L_4^{(a)}(u)$ =& $u^4 - 4(a+3)u^3 + 6(a+3)(a+2)u^2 - 4(a+3)(a+2)(a+1)u + (a+3)(a+2)(a+1)a$.\\
	\hline
\end{tabular}
\caption{The first five Laguerre polynomials}\label{tbl:Laguerre}
\end{table}

\begin{rem} Note that the definition of the Laguerre polynomials on Equation (\ref{eq:Laguerre}) is slightly different from the usual one, ie, the one based on Rodrigues' formula
$$\widetilde{L}_n^{(a)} = \frac{u^{-a}e^x}{n!}\frac{d^n}{du^n}\Big( e^{-x}x^{n+a}\Big),$$
but it is easy to check that
$$L_n^{(a)}(u) = n!(-1)^n\widetilde{L}_n^{(a-1)}.$$
\end{rem}

From the orthogonality condition (see, for example, \cite{jackson1941fourier} p. 184),
$$\displaystyle \int_0^{+\infty} \frac{x^{a-1}e^{-x}}{\Gamma(a)}L_n^{(a)}(x)L_m^{(a)}(x) dx = \begin{cases}\frac{n!\Gamma(a+n)}{\Gamma(a)}, \ n=m, \\ 0, \ n \neq m \end{cases}$$
and using the fact that $f_U$ can be written in the form of Equation (\ref{eq:f_U}) we find that 
\begin{equation} \label{eq:An}
	A_n=\frac{\Gamma(a)}{n!\Gamma(a+n)}\int_0^{+\infty} f_U(x)L_n^{(a)}(x)dx.
\end{equation}

Then, using the characterization of $A_n$ in (\ref{eq:An}) and the fact that $\E[U] = Var[U] = a$ we can see that
\begin{align*}
	A_0 &= \int_0^{+\infty} f_U(x) L_0^{(a)}(x)dx = \int_0^{+\infty} f_U(x) dx =1, \\
	A_1 &= \int_1^{+\infty} f_U(x) L_1^{(a)}(x)dx = \int_0^{+\infty} f_U(x) (z-a)dx =0, \\
	A_2 &= \int_1^{+\infty} f_U(x) L_2^{(a)}(x)dx = \int_0^{+\infty} f_U(x) (z^2 - 2(a+1)z + (a+1)a)dx =0.
\end{align*}
Similar but lengthier calculations show that for $\mu_n = \E\left[(U-\E[U])^n \right]$, $n=3,4$,
\begin{align} 
	A_3 &= \frac{\Gamma(a)}{3!\Gamma(a+3)}(\mu_3 - 2a), \label{eq:A3}\\
	A_4 &= \frac{\Gamma(a)}{4!\Gamma(a+4)}(\mu_4 - 12\mu_3 - 3a^2 +18a). \label{eq:A4}
\end{align}
Therefore, matching the first four moments, the density of the original random variable $\widetilde{Z}$ can be approximated as 
$$f_{\widetilde{Z}}(z)  = b f_U(u) \approx b \frac{u^{a-1}e^{-u}}{\Gamma(a)} \left[1 + A_3 L^{(a)}_3(u) + A_4  L^{(a)}_4(u) \right],$$
where $u=bz$, $A_3$ and $A_4$ are given, respectively, by (\ref{eq:A3}) and (\ref{eq:A4}) and the Laguerre polynomials can be found in Table \ref{tbl:Laguerre}. For additional details on the Gamma expansion we refer the reader to \cite{bowers1966expansion}.

Since this expansion does not ensure positivity of the density at all points (it can be negative for particular choices of skewness and kurtosis) we will adopt the approach discussed in \cite{jondeau2001gram} for the Gauss-Hermite Gramm Charlier case modified to the Gamma-Laguerre setting. To find the region on the $(\mu_3,\mu_4)$-plane where $f_U(u)$ is positive for all $u$ we will first find the region where $f_U(u) = 0$, i.e.,
\begin{equation} \label{eq:cond1_positivity}
	\frac{u^{a-1}e^{-u}}{\Gamma(a)}\big(1 + A_3 L^{(a)}_3(u) + A_4  L^{(a)}_4(u) \big)=0.
\end{equation}

For a fixed value $u$, we now want to find the set $(\mu_3,\mu_4)$ as a function of $u$ such that (\ref{eq:cond1_positivity}) remains zero for small variations on $u$. This set is given by $(\mu_3,\mu_4)$ such that
\begin{equation} \label{eq:cond2_positivity}
	\frac{d}{du}\left[\frac{u^{a-1}e^{-u}}{\Gamma(a)} \big(1 + A_3 L^{(a)}_3(u) + A_4  L^{(a)}_4(u)\big)\right] =0.
\end{equation}

We can then rewrite Equations (\ref{eq:cond1_positivity}) and (\ref{eq:cond1_positivity}) as the following system of algebraic equations
$$\left\{\begin{tabular}{ll}
	$\mu_3 B_1(u) + \mu_4 B_2(u)+ B_3(u) =0$\\
	$\mu_3 B_1'(u) + \mu_4 B_2'(u) + B_3'(u)  =0,$
				 \end{tabular} \right.$$
where
\begin{align*}
	B_1(u) &= \frac{u^{a-1}e^{-u}}{\Gamma(a)} \left(\frac{\Gamma(a)}{3!\Gamma(a+3)}L_3^{(a)}(u) -12\frac{\Gamma(a)}{4!\Gamma(a+4)}L_4^{(a)}(u) \right) ; \\
	B_2(u) &= \frac{u^{a-1}e^{-u}}{\Gamma(a)} \frac{\Gamma(a)}{4!\Gamma(a+4)}L_4^{(a)}(u) ; \\
	B_3(u) &= \frac{u^{a-1}e^{-u}}{\Gamma(a)}\left(1 - 2a\frac{\Gamma(a)}{3!\Gamma(a+3)}L_3^{(a)}(u) + \left(- 3a^2 +18a \right)\frac{\Gamma(a)}{4!\Gamma(a+4)}L_4^{(a)}(u) \right) ; \\
	B_1'(u) &= \left((a-1)u^{-1} -1 \right)B_1(u) + \frac{u^{a-1}e^{-u}}{\Gamma(a)} \left(\frac{\Gamma(a)}{3!\Gamma(a+3)}\frac{dL_3^{(a)}}{du}(u) -12\frac{\Gamma(a)}{4!\Gamma(a+4)}\frac{dL_4^{(a)}}{du}(u) \right); \\
	B_2'(u) &= \left((a-1)u^{-1} -1 \right)B_2(u) + \frac{u^{a-1}e^{-u}}{\Gamma(a)}\left(\frac{\Gamma(a)}{4!\Gamma(a+4)}\frac{dL_4^{(a)}}{du}(u)\right) ; \\
	B_3'(u) &= \left((a-1)u^{-1} -1 \right)B_3(u) + \frac{u^{a-1}e^{-u}}{\Gamma(a)} \left(-2a\frac{\Gamma(a)}{3!\Gamma(a+3)}\frac{dL_3^{(a)}}{du}(u) + \left(- 3a^2 +18a \right)\frac{\Gamma(a)}{4!\Gamma(a+4)}\frac{dL_4^{(a)}}{du}(u) \right) ; \\
	\frac{dL_3^{(a)}}{du}(u) &= 3u^2 - 6(a+2)u + 3(a+2)(a+1); \\
	\frac{dL_4^{(a)}}{du}(u) &= 4u^3 - 12(a+3)u^2 + 12(a+3)(a+2)u - 4(a+3)(a+2)(a+1).
\end{align*} 

Therefore, one can solve this system to show that the curve where the approximation will stay positive for all $u$ is given by:

\begin{equation} \label{eq:ApproximationRegion}
\left\{\begin{tabular}{ll}
	$\mu_4(u) = \left(\frac{B_1'B_3}{B_1} -B_3'\right)\left(B_2' - \frac{B_1'B_2}{B_1} \right)^{-1}$\\
	$\mu_3(u) = -\frac{1}{B_1}\left( \mu_4(u)B_2 + B_3 \right)$
				 \end{tabular} \right. , \text{ for } u\in[0,+\infty).
\end{equation}

As an illustration, Figure \ref{fig:ApproximationRegion} presents (on the left) the histogram of the loss process $Z=\sum_{n=1}^N X_n$ for $X \sim LN(\mu=1, \sigma=0.8)$ and $N\sim Poi(\lambda_N=2)$ and in red the Gamma approximation using the first four moments of $Z$. On the right it is presented the graph of the region where the density is positive for all values of $u$, given by equation \ref{eq:ApproximationRegion}. The grey area was calculated numerically, for all combinations in a fine grid on the plane $(\mu_3,\mu_4)$ it was tested if the density became negative in some point $z$. Grey points indicate the density is strictly positive. The blue point indicates the third and fourth moments in the Log-Normal example and since it lies inside the positivity are we can ensure this approximation is strictly positive for all values of $z$.

If the the third and fourth moments of the chosen model lied outside the permitted area one could chose $\widehat{\mu}_3$ and $\widehat{\mu}_4$ as the estimates that minimize some constrained optimization problem, for instance, the Maximum Likelihood Estimator (using $f_U(u; \mu_3, \mu_4) = \frac{u^{a-1}e^{-u}}{\Gamma(a)} \left[1 + A_3 L^{(a)}_3(u) + A_4  L^{(a)}_4(u)\right]$ as the likelihood). The constrained region is clearly given by a segment of the curve in equation \ref{eq:ApproximationRegion} and the endpoints can be found using a root-search method checking for which values of $u$ the red curve in Figure \ref{fig:ApproximationRegion} touches the grey area.

Given the approximation of $f_U$, and consequently of $f_{\widetilde{Z}}$, one can easily calculate the optimal multiple stopping rule, since $\E[\widetilde{Z}]$ is assumed to be known and $\E[\min\{ c_1 + \widetilde{Z}, \ c_2 \}]$ can be calculated as follows.

\begin{lemma} \label{lem:Gamma} If $G \sim Gamma(a,1)$, ie, $f_G(x) = \frac{x^{a-1}e^{-x}}{\Gamma(a)}$ then, similarly to Lemma \ref{property4} the following property holds
\begin{equation}
	x f_{G}(x;\ a, 1) \equiv af_G(x; \ a+1, 1).
\end{equation}
\end{lemma}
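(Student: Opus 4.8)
The plan is to establish this identity by direct substitution of the Gamma density followed by a single application of the functional equation $\Gamma(a+1)=a\,\Gamma(a)$, exactly in the spirit of the GIG identity already recorded in Lemma~\ref{property4}. First I would expand the left-hand side by multiplying the stated density by $x$, giving
$$x f_G(x;\,a,1) = x\cdot\frac{x^{a-1}e^{-x}}{\Gamma(a)} = \frac{x^{a}e^{-x}}{\Gamma(a)}.$$
Then I would expand the right-hand side using the same density formula but with shape parameter $a+1$,
$$a f_G(x;\,a+1,1) = a\cdot\frac{x^{a}e^{-x}}{\Gamma(a+1)}.$$

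The only substantive step is to recognise that $\Gamma(a+1)=a\,\Gamma(a)$, so that $a/\Gamma(a+1) = a/(a\,\Gamma(a)) = 1/\Gamma(a)$, and hence the right-hand side collapses to $x^{a}e^{-x}/\Gamma(a)$, matching the left-hand side for every $x>0$. This completes the identification of the two densities (up to the scalar factors shown), which is precisely the claimed relation.

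There is no real obstacle here: the result is an elementary consequence of the recursion property of the Gamma function and the explicit monomial-times-exponential form of the kernel, and it mirrors the power-shift property used for the Inverse Gaussian/GIG family in Lemma~\ref{property4}. The practical point of stating it is that it allows the factor $x$ appearing in $\E[\min\{c_1+\widetilde{Z},c_2\}]$ to be absorbed into a shift of the shape parameter, so that the required truncated expectations reduce to differences of (incomplete) Gamma integrals with shape $a$ and $a+1$, which is what makes the value-function recursions of Theorem~\ref{thm:MultStop} computable once the Gamma-Laguerre approximation of $f_{\widetilde{Z}}$ is in hand.
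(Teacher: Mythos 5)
Your computation is correct: expanding both sides of the identity and invoking $\Gamma(a+1)=a\,\Gamma(a)$ is exactly the intended argument, and the paper itself treats this lemma as an immediate density comparison in the spirit of Lemma~\ref{property4} without writing out more than that. Nothing is missing.
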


Using this notation we can rewrite the approximation of $\widetilde{Z}$ as 
\begin{align*}
	f_{\widetilde{Z}}(z)  &\approx f_G(bz; a,1)A^*_1 + f_G(bz; a+1,1)A^*_2 + f_G(bz; a+2,1)A^*_3 + f_G(bz; a+3,1)A^*_4+ f_G(bz; a+4,1)A^*_5,
\end{align*}
where $A^*_1=\left(1-\frac{\Gamma(a+3)}{\Gamma(a)}A_3+\frac{\Gamma(a+4)}{\Gamma(a)}A_4\right)b, \ A^*_2=\left(3\frac{\Gamma(a+3)}{\Gamma(a)}A_3-4\frac{\Gamma(a+4)}{\Gamma(a)}A_4\right)b, \ A^*_3=\left(-3\frac{\Gamma(a+3)}{\Gamma(a)}A_3+6\frac{\Gamma(a+4)}{\Gamma(a)}A_4\right)b,$ \ $A^*_4=\left(\frac{\Gamma(a+3)}{\Gamma(a)}A_3-4\frac{\Gamma(a+4)}{\Gamma(a)}A_4\right)b, A^*_5=\left(\frac{\Gamma(a+4)}{\Gamma(a)}A_4\right)b$.

Then, we can calculate the other main ingredient of Theorem \ref{thm:MultStop}, namely
\begin{align*}
\E[\min\{ c_1 + \widetilde{Z}, \ c_2 \}] &= \int_0^{+\infty} \min\{ c_1 + z, \ c_2 \}f_{\widetilde{Z}}(z) dz \\
															&= \int_0^{+\infty} \big( (c_1 + z)\one_{\{c_1+z<c_2\}} + c_2\one_{\{c_1+z \geq c_2\}}  \big)f_{\widetilde{Z}}(z) dz\\
															&= \int_0^{c_2-c_1} z f_{\widetilde{Z}}(z) dz + c_1 \int_0^{c_2-c_1} f_{\widetilde{Z}}(z) dz + c_2 \int_{c_2-c_1}^{+\infty} f_{\widetilde{Z}}(z) dz\\
															&= a\sum_{k=1}^5 F_G(b(c_2-c_1); a+k,1)A_k^* + c_1\sum_{k=1}^5 F_G(b(c_2-c_1); a-1+k,1)A_k^* \\
															&+ c_2\sum_{k=1}^5 \overline{F}_G(b(c_2-c_1); a-1+k,1)A_k^*.
\end{align*}

\section{Conclusion and Final Remarks} 
\label{sec:conclusion}
In this paper we studied some properties of an insurance product where its owner has the right to choose which of the next $k$ years the issuer should mitigate its annual losses. For three different forms of mitigation we presented as closed form solutions for the exercising strategy that minimized (on average) the sum of all annual losses in the next $T$ years. This model assumed a ``moderate tail" for the severity of the losses the owner incurs, namely a Poisson-Inverse Gaussian LDA model.

Although it is assumed the company already holds the proposed contract, the company can use the analysis presented on Figure \ref{fig:hist_ALP} as a proxy for the price of the insurance product. The value, from the company's point of view, of the insurance product should be the expected difference (under the natural probability) of the losses that would be incurred without the product and the losses incurred using the product in the most profitable way (for the buyer),
$$\displaystyle \E\left[\sum_{t=1}^L Z(t) - \left(\sum_{t=1 \atop t\notin \{\tau_1,\hdots,\tau_k \}}^L Z(t) + \sum_{j=1}^l \widetilde{Z}(\tau_j)\right)\right].$$
It must also be said this price does not include the premium asked by the insurance company and also does not take into consideration the fact that external insurance companies will not have access to the models used by the company but it can still be a valuable proxy.

An alternative to the results presented in Section \ref{sec:SeriesExpansion} can involve the use of a Monte Carlo method. If there exists a mechanism to sample from the severity distribution one can easily create a sample of the insured process $\widetilde{Z}$ and use this sample to calculate all the necessary expectations on Theorem \ref{thm:MultStop}. The advantage of this approach is that one can handle any combination of severity distribution and insurance policy, but it can be extremely time consuming and the variance of the estimative can be prohibitive. It is important to note the sampling of the severity can be made \emph{offline}, ie, the same sample should be used to calculate all the integrals. Another alternative to solve the optimal multiple stopping problem is the usage of an extended version of the so-called Least-Square Monte Carlo method, first presented in \cite{longstaff2001valuing}.

Regarding the results presented in Theorems \ref{thm:ILP} to \ref{thm:PAP} the truncation point for the infinite sums can be chosen to be much larger than the expected number of losses (parameter $\lambda_N$), since the summands are composed by a p.m.f. of a Poisson r.v. (which presents exponential decay) and a bounded term (difference of c.d.f.'s times constants).

\begin{figure}[h]
\centering
	\includegraphics[scale=0.4, angle=-90]{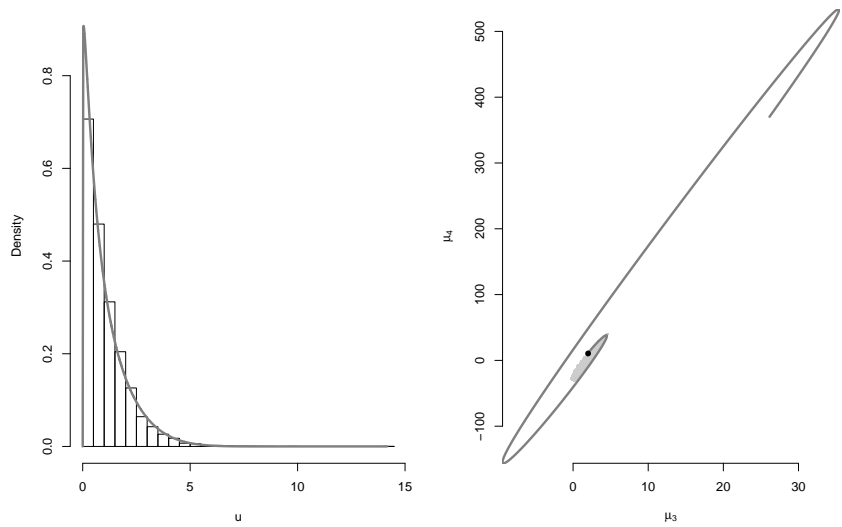}
	\caption{(Left) Histogram of the loss process $Z=\sum_{n=1}^N X_n$ for $X \sim LN(\mu=1, \sigma=0.8)$ and $N\sim Poi(\lambda_N=2)$ and in red the Gamma approximation using the first four moments of $Z$. (Right) The graph of the region where the density is positive for all values of $z$.}
	\label{fig:ApproximationRegion}
\end{figure}

\section*{Acknowledgements}
RST acknowledges the Conselho Nacional de Desenvolvimento Cient\'{i}ifico e Tecnol\'{o}gico (CNPq) for the Ci\^{e}ncia sem Fronteiras scholarship and the CSIRO Australia for support. GWP acknowledges the Insitute of Statistical Mathematics, Tokyo, Japan for support during research on this project.

\FloatBarrier

\bibliographystyle{apalike}
\bibliography{IME_MultStop}

\appendix

\section{Proofs}

\begin{proof}(of Proposition \ref{prop:ALP}) As usual in LDA models, the first step to calculate the distribution of the compound process is to condition on the number of losses. Then the key point for this insurance policy is that if we knew how many losses occurred in a given year, say $m$ we would be able to check if $\sum_{n=1}^m X_n > ALP$. In this case the insured loss $\widetilde{Z}$ would be equal to $\sum_{n=1}^m X_n - ALP$. Otherwise all the losses would be insured and $\widetilde{Z}=0$. Using this argument we can calculate the cdf of $\widetilde{Z}$ as
\begin{align*}
	F_{\widetilde{Z}}(z) &= \P[\widetilde{Z} \leq z] \\
											 &= \sum_{m=1}^{+\infty} \P[\widetilde{S}_m \leq z] p_m + p_0 \\
											 &=\sum_{m=1}^{+\infty} \Bigg\{\P\left[\widetilde{S}_m \leq z \Big| \sum_{k=1}^m X_k > ALP\right] \P\left[\sum_{k=1}^m X_k > ALP \right]p_m \\
											&+ \P\left[\widetilde{S}_m \leq z \Big| \sum_{k=1}^m X_k \leq ALP\right]\P\left[\sum_{k=1}^m X_k \leq ALP \right]p_m \Bigg\}+ p_0 \\
											 &=\sum_{m=1}^{+\infty} \Bigg\{\P\left[\sum_{k=1}^m X_k - ALP \leq z \right] \P\left[\sum_{k=1}^m X_k > ALP \right]p_m \\
											&+ \P\left[0 \leq z \right]\P\left[\sum_{k=1}^m X_k \leq ALP \right]p_m \Bigg\}+ p_0 \\
											&= \sum_{m=1}^{+\infty} \Bigg\{\P\left[\sum_{k=1}^m X_k - ALP \leq z\right] \P\left[\sum_{k=1}^m X_k > ALP\right] p_m + \P\left[\sum_{k=1}^m X_k \leq ALP\right]p_m \Bigg\} + p_0 \\
											&= \sum_{m=1}^{+\infty} \Bigg\{F_{IG}(z+ALP; \ m\mu, m^2\lambda) \overline{F}_{IG}(ALP; \ m\mu, m^2\lambda) p_m + F_{IG}(ALP; \ m\mu, m^2\lambda)p_m \Bigg\} + p_0 \\
											&= \sum_{m=1}^{+\infty} F_{IG}(z+ALP; \ m\mu, m^2\lambda) \overline{F}_{IG}(ALP; \ m\mu, m^2\lambda) p_m + \underbrace{\sum_{m=1}^{+\infty}F_{IG}(ALP; \ m\mu, m^2\lambda)p_m  + p_0}_{\P[\widetilde{Z}=0]},
\end{align*}
where $\widetilde{S}_m = \left(\sum_{n=1}^m X_n - ALP \right) \one_{\left\{\sum_{n=1}^m X_n > ALP\right\}}.$

The p.d.f. easily follows from the derivation of $F_{\widetilde{Z}}(z)$ with respect to $z$ but it is important to note that $f_{\widetilde{Z}}$ is a continuous density with discrete mass at $z=0$, ie,
\begin{align*}
f_{\widetilde{Z}}(z) &= \sum_{m=1}^{+\infty} \Big\{f_{IG}(z+ALP; \ m\mu, m^2\lambda) \overline{F}_{IG}(ALP; \ m\mu, m^2\lambda) p_m \Big\}\one_{\{z>0\}} \\
									   &+ \Big\{p_0 + \sum_{m=1}^{+\infty}F_{IG}(ALP; \ m\mu, m^2\lambda)p_m \Big\}\one_{\{z=0\}}
\end{align*}
\end{proof}

\begin{proof}(of Theorem \ref{thm:ALP}) As in Theorem \ref{thm:ILP}, to calculate the optimal rule we only need to calculate $\E[W]$ and $\E[\max\{- c_1 + W, \ -c_2\}]$, for $0<c_1<c_2$. Given the expression (\ref{pdfALP}) for the density of $\widetilde{Z}$ we can calculate $\E[W]$ as follows
\begin{align*}
	\E[\widetilde{Z}] &= \int_0^{+\infty} z \sum_{m=1}^{+\infty} f_{IG}(z+ALP; \ m\mu, m^2\lambda)C_m dz \\
(\text{from Lemma }\ref{property2})&= \sum_{m=1}^{+\infty} C_m \int_0^{+\infty} z f_{GIG}(z+ALP; \ \lambda/\mu^2, m^2\lambda, -1/2) dz \\
(\text{change of variables})&= \sum_{m=1}^{+\infty} C_m \int_{ALP}^{+\infty} (w-ALP) f_{GIG}(w; \ \lambda/\mu^2, m^2\lambda, -1/2) \ dw \\
(\text{from Lemma } \ref{property4}) &= \sum_{m=1}^{+\infty} C_m \Big(m\mu \overline{F}_{GIG}(ALP; \ \lambda/\mu^2, m^2\lambda, 1/2) - ALP \overline{F}_{GIG}(ALP; \ \lambda/\mu^2, m^2\lambda, -1/2) \Big)
\end{align*}
And then we use the fact that $\E[W] = -\E[\widetilde{Z}]$. \\
For the second term we have that $\E[\max\{- c_1 + W, \ -c_2\}] = (-1)\E[\min\{ c_1 + \widetilde{Z}, \ c_2 \}]$ and 
\begin{align*}
	\E[\min\{ c_1 + \widetilde{Z}, \ c_2 \}] &= \int_0^{+\infty} \min\{ c_1 + z, \ c_2 \} f_{\widetilde{Z}}(z) dz \\
																					 &= \sum_{m=1}^{+\infty} C_m \int_0^{+\infty}\min\{ c_1 + z, \ c_2 \} f_{IG}(z+ALP; \ m\mu, m^2\lambda) dz \\
																					 &+\min\{ c_1 + 0, \ c_2 \} \Big\{ p_0 + \sum_{m=1}^{+\infty}F_{IG}(ALP; \ m\mu, m^2\lambda)p_m \Big\} \\
																					&= \sum_{m=1}^{+\infty} C_m \left[ \int_{ALP}^{+\infty} \min\{c_1+w-ALP, \ c_2 \} f_{IG}(w; \ m\mu, m^2\lambda) dw \right] + c_1 C_0 \\
																					&= \sum_{m=1}^{+\infty} C_m \left[ \int_{ALP}^{c_2-c_1+ALP} (c_1+w-ALP)f_{IG}(w; \ m\mu, m^2\lambda) dw + \int_{c_2-c_1+ALP}^{+\infty} c_2 f_{IG}(w; \ m\mu, m^2\lambda) dw \right] \\
																					&+ c_1 C_0 \\
																					&= \sum_{m=1}^{+\infty} C_m \Bigg[\Bigg(m\mu \Big(F_{GIG}(c_2-c_1+ALP; \ \lambda/\mu^2, m^2\lambda, 1/2)- F_{GIG}(ALP; \ \lambda/\mu^2, m^2\lambda, 1/2)\Big) \\
																	&+ (c_1-ALP) \Big(F_{GIG}(c_2-c_1+ALP; \ \lambda/\mu^2, m^2\lambda, -1/2)- F_{GIG}(ALP; \ \lambda/\mu^2, m^2\lambda, -1/2) \Big) \Bigg) \\
																	&+ c_2 \overline{F}_{GIG}(c_2-c_1+ALP; \ \lambda/\mu^2, m^2\lambda, -1/2) \Bigg] + c_1C_0.
\end{align*}
\end{proof}

\begin{proof}(of Proposition \ref{prop:ALP_new})
This proof follows from two conditioning arguments. The first part is to fix the number of annual losses $N=m$ and the second one is to separate the space where $\sum_{n=1}^m X_n > ALP$ and its complement. Formally,
\begin{align*}
\P[W\leq w] &= \sum_{m=1}^N \P[W_m\leq w \ | \ N=m ]\P[N=m] + \P[N=0],
\end{align*}
where
\begin{align*}
\P[W_m\leq w] &= \P\left[\min\left\{ALP, \ \sum_{n=1}^m X_n \right\} \leq w\right] \\
						&= \P\left[ALP  \leq w \ \Big| \ ALP \leq \sum_{n=1}^m X_n\right]\P\left[ALP \leq \sum_{n=1}^m X_n\right] \\
						&+  \P\left[\sum_{n=1}^m X_n  \leq w \ \Big| \ \sum_{n=1}^m X_n \leq ALP \right]\P\left[\sum_{n=1}^m X_n \leq ALP \right]\\
						&= \overline{F}_{S_m}(ALP) \one_{\left\{w \geq ALP \right\}} + \P\left[S_m \leq w, S_m \leq ALP \right] \\
						&= \overline{F}_{S_m}(ALP) \one_{\left\{w \geq ALP \right\}} + F_{S_m} (\min\{w, \ ALP\} ) \\
						&= \overline{F}_{S_m}(ALP) \one_{\left\{w \geq ALP \right\}} + F_{S_m} (ALP)\one_{\left\{w \geq ALP \right\}} + F_{S_m} (w)\one_{\left\{w < ALP \right\}} \\
						&= \one_{\left\{w \geq ALP \right\}} + F_{S_m} (w)\one_{\left\{w < ALP \right\}}.
\end{align*}
Consequently, the pdf of the gain is given by
$$f_W(w) = \sum_{m=1}^N\Big\{ \Big( \overline{F}_{S_m}(ALP)\one_{\left\{w = ALP \right\}} + f_{S_m} (w)\one_{\left\{0< w < ALP \right\}}   \Big)p_m \Big\}+ p_0\one_{\left\{w =0 \right\}}.$$

\end{proof}

\begin{proof}(Theorem \ref{thm:ALP_new})
For $0<c_1<c_2$, the quantity of interest can be calculated as
\begin{align*}
	\E[\max\{c_1+ W, c_2 \}] &= \int_0^{+\infty} \max\{c_1+ w, c_2 \} f_W(w) dw \\
													 &= \sum_{m=1}^{+\infty}  p_m \Big\{ \overline{F}_{S_m}(ALP)\max\{c_1+ALP, \ c_2 \} \\
													 &+ \int_{c_2-c_1}^{ALP}(c_1+w) f_{S_m} (w)dw + \int_0^{\min\{c_2-c_1, \ ALP \}}  c_2 f_{S_m} (w)dw\Big\} + p_0 c_2 \\
													 &= \sum_{m=1}^{+\infty}   p_m \Big\{ \overline{F}_{S_m}(ALP)\max\{c_1+ALP, \ c_2 \} \\
													 &+ c_1 (F_{S_m}(ALP) - F_{S_m}(c_2-c_1) ) + m\mu \big( F_{GIG}(ALP; \ \lambda/\mu^2, m^2\lambda, 1/2) \\
													&- F_{GIG}(c_2-c_1; \ \lambda/\mu^2, m^2\lambda, 1/2) \big) + c_2 F_{S_m}(\min\{c_2-c_1, \ ALP \}) \Big\} + p_0 c_2 \\
\end{align*}
\end{proof}
\begin{proof}(of Proposition \ref{prop:PAP}) This proof goes along the same lines as the Proof of Proposition \ref{prop:ALP}, but here we need one more conditioning step. This is due to the fact that we do not know in advance when the company will start to be covered by the insurance policy (this time is precisely the concept of the stopping time $M^*_m$ defined in (\ref{eq:M_star})). In the sequel we will denote $\widetilde{S}_m = X_{n} \times \one_{\left\{ \sum_{k=1}^{m} X_k \leq PAP \right\}}.$
	\begin{align*}
		F_{\widetilde{Z}}(z) &= \P[\widetilde{Z} \leq z] \\
											&= \sum_{m=1}^{+\infty} \P[\widetilde{S}_m \leq z]p_m + p_0 \\
											&= \sum_{m=1}^{+\infty} \Bigg\{ \sum_{m^*=1}^m \Big( \P\left[ \widetilde{S}_m \leq z \Big| M^*_m = m^*\right]\P\left[ M^*_m=m^*\right]p_m \Big) + \P\left[ \widetilde{S}_m \leq z \Big| M^*_m > m\right]\P\left[ M^*_m>m \right]p_m \Bigg\} +p_0 \\
											&= \sum_{m=1}^{+\infty} \Bigg\{ \sum_{m^*=1}^m \Big( \P\left[ \sum_{n=1}^{m^*} X_n \leq z \right]\P\left[ M^*_m=m^*\right]p_m \Big) + \P\left[ \sum_{n=1}^m X_n \leq z \right]\P\left[ \sum_{k=1}^m X_k < PAP \right]p_m \Bigg\} +p_0 \\
											&= \sum_{m=1}^{+\infty} \Bigg\{ \sum_{m^*=1}^m \Big( F_{IG}(z; \ m^*\mu, m^{*2}\lambda)\P\left[ M^*_m=m^*\right]p_m \Big) + F_{IG}(z; \ m\mu, \ m^2\lambda)F_{IG}(PAP, \ m\mu, \ m^2\lambda)p_m \Bigg\} +p_0 \\
											&= \sum_{m=1}^{+\infty} \Bigg\{ \sum_{m^*=1}^m \Big( F_{IG}(z; \ m^*\mu, m^{*2}\lambda)D_{m^*,m} \Big) + F_{IG}(z; \ m\mu, \ m^2\lambda)D_m \Bigg\} +p_0.
	\end{align*}
To calculate $D_{m^*,m}$, first define the (non insured) partial sum $S_m=\sum_{n=1}^m X_n$. Then, for $m^*=2,\hdots,m$
\begin{align*}
	\P\left[ M^*_m=m^*\right] &= \P\left[ S_{m^*} > PAP | S_{m^*-1} < PAP\right] \\
														&= \int_0^{PAP} \P\left[ S_{m^*} > PAP | S_{m^*-1} =a \right] f_{S_{m^*-1}}(a) da\\
														&= \int_0^{PAP} \P\left[ X_{m^*} > PAP -a \right] f_{S_{m^*-1}}(a) da\\
														&= \int_0^{PAP} \left[ 1 - F_{IG}(PAP -a; \ \mu, \ \lambda)\right] f_{IG}(a; (m^*-1)\mu, \ (m^*-1)^2\lambda) da
\end{align*}
and $\P\left[ M^*_m=1\right] = \P[X_1>PAP]$. It is important to note that no matter how many losses are in a year the probability of the sum of the first $m^*$ losses exceed the threshold $PAP$ is the same. Mathematically, it is equivalent to say that $\P\left[ M^*_m=m^*\right]$ does not depend on $m$. Another important aspect is that this integral can not be solved analytically but due to the fact that it is a uni-dimensional integral of well behaved integrands in a bounded set it can be easily approximated by any quadrature rule.
\end{proof}

\begin{proof}(of Theorem \ref{thm:PAP}) Remember that $\E[W] = - \E[\widetilde{Z}]$ and the former can be calculated as follows
\begin{align*}
	\E[\widetilde{Z}] &= \int_0^{+\infty} z f_{\widetilde{Z}}(z)dz \\
										&= \sum_{m=1}^{+\infty} \sum_{m^*=1}^m \int_0^{+\infty} z f_{IG}(z; \ m^*\mu, m^{*2}\lambda)D_{m^*,m}dz  + \sum_{m=1}^{+\infty} \int_0^{+\infty} zf_{IG}(z; \ m\mu, \ m^2\lambda)  D_m dz\\
										&= \sum_{m=1}^{+\infty} \sum_{m^*=1}^m m^*\mu D_{m^*,m}  + \sum_{m=1}^{+\infty} m \mu  D_m.
\end{align*}

For $0<c_1<c_2$ we have that
\begin{align*}
		\E[\min\{c_1+\widetilde{Z}, \ c_2 \}] &= \sum_{m=1}^{+\infty} \sum_{m^*=1}^m \int_0^{+\infty} \min\{c_1+z, \ c_2 \} f_{IG}(z; \ m^*\mu, m^{*2}\lambda)D_{m^*,m}dz \\
																					&+ \sum_{m=1}^{+\infty} \int_0^{+\infty} \min\{c_1+z, \ c_2 \}f_{IG}(z; \ m\mu, m^2\lambda)  D_m dz + \min\{c_1+0, \ c_2 \}p_0 \\
																					&= \sum_{m=1}^{+\infty} \sum_{m^*=1}^m \Big(\int_0^{c_2-c_1} (c_1+z)f_{IG}(z; \ m^*\mu, m^{*2}\lambda)dz +  \int_{c_2-c_1}^{+\infty} c_2 f_{IG}(z; \ m\mu, m^2\lambda)dz\Big)D_{m^*,m}dw \\
																					&+ \sum_{m=1}^{+\infty} \Big(\int_0^{c_2-c_1} (c_1+z)f_{IG}(z; \ m\mu, m^2\lambda)dz +  \int_{c_2-c_1}^{+\infty} c_2 f_{IG}(z; \ m\mu, m^2\lambda)dz\Big)D_m dz + c_1 p_0 \\
																					&= \sum_{m=1}^{+\infty} \sum_{m^*=1}^m \Bigg\{c_1 F_{IG}(c_2-c_1; \ m^*\mu, m^{*2}\lambda)+ m\mu F_{GIG}(c_2-c_1; \ \lambda/\mu^2, m^{*2}\lambda, 1/2) \\
																					&+  c_2 \overline{F}_{IG}(c_2-c_1; \ m^*\mu, m^{*2}\lambda) \Bigg\}D_{m^*,m} \\
																					&+ \sum_{m=1}^{+\infty} \Big(c_1 F_{IG}(c_2-c_1; \ m\mu, m^2\lambda) + m\mu F_{GIG}(c_2-c_1; \ \lambda/\mu^2, m^2\lambda, 1/2) \\
																					&+  c_2 \overline{F}_{IG}(c_2-c_1; \ m\mu, m^2\lambda)\Big)D_m  + c_1 p_0.
\end{align*}
The result follows from the equality $\E[\max\{- c_1 + W, \ -c_2\}] = (-1)\E[\min\{ c_1 + \widetilde{Z}, \ c_2 \}]$ and an application of Theorem \ref{thm:MultStop}.

\end{proof}

\begin{proof}(Proposition \ref{prop:PAP_new})
For this proof we first define $W_m$ as the gain process conditional on $N=m$, i.e.,
			$$W_m = \sum_{n=1}^{m} X_{n} \times \one_{\left\{ \sum_{k=1}^n X_k > PAP \right\}}.$$

Then, we can see that
	\begin{align*}
	\P[ W_m \leq w ] &= \sum_{m^*=1}^m \Big( \P\left[ W_m \leq z \Big| M^*_m = m^*\right]\P\left[ M^*_m=m^*\right]\Big) + \P\left[ W_m \leq z \Big| M^*_m = +\infty\right]\P\left[ M^*_m=+\infty\right]\\
	                             &= \sum_{m^*=1}^m \Big( \P\left[ \sum_{n=m^*}^m X_n \leq z \right]\P\left[ M^*_m=m^*\right] \Big) + \P\left[ \sum_{k=1}^m X_k < PAP \right].
	\end{align*}

Now, given the conditional distribution of the gain we simply need to weight each term by the probability of $N=m$ annual losses to calculate both the cdf and the pdf of the gain:
	\begin{align*}
		F_{\widetilde{Z}}(z) &= \P[\widetilde{Z} \leq z] \\
											&= \sum_{m=1}^{+\infty} \P[W_m \leq z]p_m + p_0 \\
											&= \sum_{m=1}^{+\infty} \Bigg\{ \sum_{m^*=1}^m \Big( \P\left[ \sum_{n=m^*}^m X_n \leq z \right]\P\left[ M^*_m=m^*\right] p_m\Big) + \P\left[ \sum_{k=1}^m X_k < PAP \right]p_m \Bigg\} +p_0 \\
											&= \sum_{m=1}^{+\infty} \Bigg\{ \sum_{m^*=1}^m \Big( \P\left[ \sum_{n=m^*}^m X_n \leq z \right]\P\left[ M^*_m=m^*\right] p_m\Big)\Bigg\} + \underbrace{\sum_{m=1}^{+\infty}\Bigg\{ P\left[ \sum_{k=1}^m X_k < PAP \right]p_m\Bigg\}  +p_0}_{\P[W=0]}, \\
	f_{\widetilde{Z}}(z) &= \left( \sum_{m=1}^{+\infty} \Bigg\{ \sum_{m^*=1}^m \Big( f_{IG}(z; \ (m-m^*+1)\mu, (m-m^*+1)^2\lambda)\P\left[ M^*_m=m^*\right] p_m\Big)\Bigg\} \right)\one_{\{ w> 0\}}\\
											&+ \P[W=0]\one_{\{ w= 0\}}.
	\end{align*}
	
\end{proof}

\begin{proof}(of Theorem \ref{thm:PAP_new})
Given the pdf of the gain the calculation of the necessary expectation is straightforward:
	\begin{align*}
		\E[\max\{c_1+ W, c_2 \}] &= \int_0^{+\infty} \max\{c_1+ w, c_2 \} f_W(w) dw \\
														 &= \sum_{m=1}^{+\infty} \sum_{m^*=1}^m \P\left[ M^*_m=m^*\right] p_m \Bigg\{ c_1 \overline{F}_{GIG}(c_2-c_1; \ \lambda/\mu^2, (m-m^*+1)^2\lambda, -1/2) \\
														&+ \overline{F}_{GIG}(c_2-c_1; \ \lambda/\mu^2, (m-m^*+1)^2\lambda, 1/2) (m-m^*+1)\mu \\
														&+ c_2 F_{GIG}(c_2-c_1; \ \lambda/\mu^2, (m-m^*+1)^2\lambda, -1/2) \Bigg\} + c_2 \P[W=0]
	\end{align*}
\end{proof}

\begin{proof}(of Theorem \ref{thm:ILP})
It is clear from Theorem \ref{thm:MultStop} that we only need to calculate two terms, namely $\E[W]$ and $\E[\max\{- c_1 + W, \ -c_2\}]$, for $0<c_1<c_2$. The first term can be derived by a simple application of the Tower Property:
$$\E[W] = -\E[\widetilde{Z}] = -\E\Big[ \E[\widetilde{Z} | \widetilde{N}] \Big] = -\E[\widetilde{N}] \ \E[\widetilde{X}] = -\lambda_{\widetilde{N}} \mu.$$


For the second term, first note that $\E[\max\{- c_1 + W, \ -c_2\}] = (-1)\E[\min\{ c_1 + \widetilde{Z}, \ c_2 \}]$ and it then follows that, for $0<c_1<c_2$,
\begin{align*}
\E[\min\{ c_1 + \widetilde{Z}, \ c_2 \}] &= \int_0^{+\infty} \min\{ c_1 + z, \ c_2 \}f_{\widetilde{Z}}(z) dz + \min\{c_1 + 0, \ c_2 \}\mathrm{Pr}[\widetilde{N}=0] \\
															&= \int_0^{+\infty} \big( (c_1 + z)\one_{\{c_1+z<c_2\}} + c_2\one_{\{c_1+z \geq c_2\}}  \big)f_{\widetilde{Z}}(z) dz + c_1\mathrm{Pr}[\widetilde{N}=0] \\
															&= \int_0^{c_2-c_1} z f_{\widetilde{Z}}(z) dz + c_1 \int_0^{c_2-c_1} f_{\widetilde{Z}}(z) dz + c_2 \int_{c_2-c_1}^{+\infty} f_{\widetilde{Z}}(z) dz + c_1\mathrm{Pr}[\widetilde{N}=0] \\
															&= \sum_{n=1}^{+\infty} \mathrm{Pr}[\widetilde{N}=n] \Big[ \int_0^{c_2-c_1} z f_{\widetilde{S}_n}(z) dz + c_1 \int_0^{c_2-c_1} f_{\widetilde{S}_n}(z) dz + c_2 \int_{c_2-c_1}^{+\infty} f_{\widetilde{S}_n}(z) dz \Big] \\
															&+ c_1\mathrm{Pr}[\widetilde{N}=0] \\
  (\text{from Lemma } \ref{property2})&= \sum_{n=1}^{+\infty} \mathrm{Pr}[\widetilde{N}=n] \Big[ F_{GIG}(c_2-c_1; \ \lambda/\mu^2, n^2\lambda, 1/2)n \mu + c_1 F_{GIG}(c_2-c_1; \ \lambda/\mu^2, n^2\lambda, -1/2) \\
															&+ c_2 \overline{F}_{GIG}(c_2-c_1; \ \lambda/\mu^2, n^2\lambda, -1/2) \Big]+ c_1\mathrm{Pr}[\widetilde{N}=0] \\
  (\text{from Lemma }\ref{property4})&= \sum_{n=1}^{+\infty} \mathrm{Pr}[\widetilde{N}=n] \Big[ F_{GIG}(c_2-c_1; \ \lambda/\mu^2, n^2\lambda, 1/2)n \mu + (c_1-c_2) F_{GIG}(c_2-c_1; \ \lambda/\mu^2, n^2\lambda, -1/2) + c_2 \Big] \\
															&+ c_1\mathrm{Pr}[\widetilde{N}=0].
\end{align*}
Note that, for notational ease, $f_{\widetilde{Z}}$ must be understood as the absolutely continuous part of the density of $\widetilde{Z}$.
\end{proof}

\end{document}